\pgfplotsset{compat=1.7}
 \def\split{true}
\newtheorem{theorem}{Theorem}
\newtheorem*{theorem*}{Theorem}
\newtheorem{lemma}{Lemma}
\newtheorem{definition}{Definition}
\newtheorem{proposition}{Proposition}
\newtheorem{example}{Example}
\newtheorem{corollary}{Corollary}
\newtheorem{remark}{Remark}
\theoremstyle{empty}
\renewcommand\thmcontinues[1]{Continued}
\def\1{{1\hskip-2.5pt{\rm l}}} 
\def\RR{\mathbb{R}} 
\def\E{\mathbb{E}} 
\begin{document}

\title{Weighted Average-convexity and Cooperative Games}

\author{Alexandre Skoda\thanks{Universit\'e de Paris I, Centre d'Economie de la Sorbonne,
106-112 Bd de l'H\^opital, 75013 Paris, France, \textsf{askoda@univ-paris1.fr}},
Xavier Venel\thanks{Dipartimento di Economia e Finanza, LUISS, Viale Romania 32, 00197 Rome, Italy,  
\textsf{xvenel@luiss.it}, https://orcid.org/0000-0003-1150-9139.
Xavier Venel is part of the Italian MIUR PRIN 2017 Project ALGADIMAR “Algorithms, Games, and Digital
Markets}
}

\maketitle

\thispagestyle{empty}

\begin{abstract}
\noindent
We generalize the notion of convexity and average-convexity to the notion of weighted average-convexity.
We show several results on the relation between weighted average-convexity and cooperative games.
First,
we prove that if a game is  weighted average-convex,
then the corresponding weighted Shapley value is in the core.
Second,
we exhibit necessary conditions for a communication TU-game to preserve the weighted average-convexity.
Finally,
we provide a complete characterization when the underlying graph is a priority decreasing tree.
\end{abstract}

\bigskip
\textbf{AMS Classification}:
91A12, 91A43, 90C27, 05C75.

\bigskip

\noindent Keywords: TU-games, convexity, average-convexity, weighted Shapley value, communication.



\section{Introduction}

We consider a situation where a set of players $N = \lbrace 1, \ldots, n \rbrace$ can cooperate to generate some profit.
It is usually assumed that any coalition $S \subseteq N$ of players is feasible and 
that the profit generated by the players in $S$ can be freely distributed among them.
This kind of situation corresponds to a cooperative game with transferable utility.
One of the key question in this framework is to allocate to each player his share
of the profit generated by the grand coalition~$N$.
Any share of the total worth of $N$ between the $n$ players is called a solution of the game.

Lots of solution concepts have been introduced and they can be divided in two different families:
single-valued solutions and set-valued solutions.
The best-known singled-valued solution in cooperative game theory is the Shapley value introduced and characterized in \cite{Shapley53}.
The Shapley value allocates to each player its average marginal contribution and is always defined.
A typical set valued solution is the core (\cite{gillies1959solutions}).
The core is the set of allocations that are efficient and coalitionally rational,
two particularly desirable properties in the context of cooperative games.
An allocation is efficient if it fully distributes the worth of the grand coalition $N$
among the $n$ players.
Coalitional rationality requires that for every coalition $S$ the total payoff obtained by $S$ is at least equal to the profit 
that it can generate by itself. Hence no coalition can have an incentive to leave the grand coalition~$N$.
Unfortunately, the core of a game may be empty but some properties of the game can guarantee the non-emptiness of the core
and more specifically that the Shapley value lies in the core.

\cite{Shapley71} introduced the class of convex games.
In a convex game, the marginal contribution of a player is an increasing function with respect to the coalition size.
\cite{Shapley71}  showed that if a game is convex,
then the Shapley value belongs to the core and therefore the core is non-empty.
In fact, the convexity assumption is very strong and can be relaxed
to ensure that the Shapley value is a core element.
\cite{Inarra-Usategui-1993} introduced a weaker notion than convexity,
called average-convexity and proved that it is a sufficient condition for the Shapley value to be in the core.
In the average convexity,
one compares averages of marginal contributions instead of comparing directly marginal contributions.
More precisely,
for any coalition
the sum of the marginal contributions of the players belonging to this coalition
has to be less or equal to the sum of the marginal contributions of the same players in a larger coalition.
Independently,
\cite{Sprumont1990} established a result similar to the one of \cite{Inarra-Usategui-1993}
but in a different framework as he investigated population monotonic allocation schemes (PMAS).
In particular,
average convexity is also a sufficient condition for the existence of a PMAS.

By definition, the Shapley value is symmetric between the players.
But many situations like for example players representing groups of individuals
or players having different bargaining powers are naturally asymmetric.
\cite{Shapley1953a} extended the notion of Shapley value to weighted Shapley value by considering asymmetric positive weights on the players.
Even more general,
\cite{KalaiAndSamet1987} introduced the notion of weight system introducing different priorities on the players
and provided an axiomatic characterization of the weighted Shapley values.
Further results and characterizations of the weighted Shapley values were proven in \cite{Hart-Mas-Colell1989}.
\cite{Monderer-Samet1992} proved that if the game is convex,
then the weighted Shapley values are in the core (for any weight system).
Moreover,
they showed that for any game one can obtain all the elements of the core as weighted Shapley values by varying the weight system.

The first aim of this article is to establish given a weight system a weaker sufficient condition than convexity
to ensure that the weighted Shapley value belongs to the core.
To do so, we introduce the notion of weighted average-convexity generalizing the average-convexity of \cite{Inarra-Usategui-1993}.
We prove then that it is indeed a sufficient condition.

In many situations, the cooperation of players may be restricted
by some communication or social structures.
Then, the worths of coalitions have to be
modified to take these restrictions into account, leading to the introduction of restricted games.
\cite{Myerson77} introduced the class of graph restricted games, also known as communication games.
In his framework,
one considers an initial game and a communication graph. 
Players have to be connected in the underlying graph to be able to cooperate.
Formally, the value of a coalition in the communication game is the sum of the values in the initial game of connected subcoalitions.
The Myerson value of the game and the communication graph is the Shapley value of the communication game.
\cite{SlikkervandenNouweland2000} extended the Myerson value to weight systems.


A central question is the inheritance of properties from the original game (without the graph structure) to the associated communication game.
For a general overview of inheritance of properties for communication games we refer to \cite{SlikkerVandenNouweland2001}.
Our focus will be on convexity and its variants.
\cite{NouwelandandBorm1991} proved that there is inheritance of convexity if and only if the graph is cycle-complete.
\cite{Slikker1998} showed that there is inheritance of average convexity if and only if connected components of the graph are complete graphs or stars.

We investigate the question of inheritance of weighted average-convexity for communication games.
We exhibit necessary conditions on the underlying graph to preserve weighted average-convexity from any  game to the associated communication game.

When there is a unique priority, we prove that the conditions given in Slikkers's characterization
are also necessary using more general counter-examples taking into account the different weights of the players.
In this case we get the same characterization as \cite{Slikker1998} for inheritance of weighted average convexity.

When there are more than one priority,
we provide necessary conditions.
The first condition is for each subgraph associated with a priority level to be composed of stars or complete graphs.
The second is how two connected components lying in different priority levels are connected.
In the third condition, we analyze how several components can interact with each other.
Finally, we focus on the case where the graph is a tree and the priorities are decreasing along the tree.
We provide a full characterization of the class of priority decreasing trees preserving weighted-average convexity.
Informally, such a graph is a succession of stars along a path and with decreasing priorities from one star to another.

By combining these last results with our first one on the weighted Shapley value,
we obtain sufficient conditions on the game and on the graph 
to guarantee that the weighted Myerson value belongs to the core.


The outline of the article is the following.
Section~\ref{SectionDefinitions}
provides classical definitions in cooperative game theory, including several equivalent definitions of the Shapley value
and sufficient conditions for it to be in the core.
In Section~\ref{SectionWeightedShapleyAndweightedConvexity},
we recall the definition of the weighted Shapley value 
and introduce the notion of weighted average-convexity.
We prove that weighted average-convexity is a sufficient condition to ensure that
the weighted Shapley value belongs to the core
(Theorem~\ref{thIfweightedAverageConvexThenWeightedShapleyValueInCore}).
In the next sections we investigate the class of graphs ensuring inheritance of weighted average-convexity for communication games.
Section~\ref{SectionSingletonCase} provides
a complete characterization (Theorem~\ref{characterization_singleton}) of this class in the case of weight systems with a unique priority.
In Section~\ref{SectionWeightedConvexityAndCommunicationGameCaseWithSeveralPriorities},
we establish necessary conditions when weight systems have several priorities
(Propositions~\ref{Levelk-StarOrCompleteSubgraphs}, \ref{PlayerConnectedToAHigherLevel}, and \ref{propC1andC2LinkedToCOfHigherLevelThenSingletons}).
Finally,
we provide a complete characterization (Theorem~\ref{characterization_hierarchy}) for the class of priority decreasing trees.

\section{TU-games and Classical Shapley Value}
\label{SectionDefinitions}

\subsection{TU-Games}


A \textit{$TU$-game} is defined by a set of players $N$ and a characteristic function $v$ from $2^N$ to $\RR$ that satisfies $v(\emptyset)=0$.
An \textit{allocation} is a vector $x\in \RR^N$ that represents the respective payoff of each player.
It is said to be \textit{efficient} if
\[
\sum_{i\in N} x_i =v(N).
\]
and \textit{individually rational} if every player receives at least the value that he could guarantee alone
\[
\forall i\in N,\ x_i \geq v\left(\{i\}\right).
\]
This last property can be extended to coalitions.
An allocation is \textit{coalitionally rational} if for every coalition 
the sum of payoffs of its members is at least the value of the coalition
\[
\forall S \subseteq N,\ \sum_{i \in S} x_i \geq v\left(S\right).
\]
The \textit{core} is the set of efficient and coalitionally rational allocations. Formally,
\[
\mathcal{C}(v)=\left\{ x \in \RR^N,\ \sum_{i \in N} x_i= v(N),\ \sum_{i\in S} x_i \geq v(S), \forall S\subset N\right\}.
\]
A player $i$ in $N$ is a \textit{null player} if $v(S \cup \lbrace i \rbrace) = v(S)$ for every coalition $S \subseteq N \setminus \lbrace i \rbrace$.\\
A game $(N,v)$ is \textit{superadditive} if, for all $A,B \in 2^{N}$ such that $A \cap B = \emptyset$, $v(A \cup B) \geq v(A) + v(B)$.\\
For any given subset $\emptyset \not= S \subseteq N$,
the unanimity game $(N, u_{S})$ is defined by
\[
u_{S}(A) =
\left \lbrace
\begin{array}{cl}
1 &  \textrm{ if } A \supseteq S,\\
0 & \textrm{ otherwise.}
\end{array}
\right.
\]

Let us consider a game $(N,v)$. For arbitrary subsets $A$ and $B$ of $N$,
we define the value
\[
\Delta v(A,B):=v(A\cup B)+v(A\cap B)-v(A)-v(B).
\]
A game $(N,v)$ is \textit{convex} if its characteristic function $v$ is supermodular,
\emph{i.e.},
$\Delta v(A,B)\geq 0$ for all $A,B \in 2^{N}$.

\cite{Shapley53} proved that
every cooperative game $(N, v)$ can be written as a unique linear combination of unanimity games,
$v = \sum_{S \subseteq N} \lambda_{S}(v)u_{S}$,
where $\lambda_{\emptyset}(v) = 0$,
and for any $S \not= \emptyset$ the coefficients $\lambda_S(v) \in \mathbb{R}$ are
given by $\lambda_S(v) = \sum_{T \subseteq S} (-1)^{s-t} v(T)$.
We refer to these coefficients $\lambda_S(v)$ as the \textit{unanimity coefficients}\footnote{The $\lambda_S$'s are also called Harsanyi dividends in cooperative game theory \citep{Harsanyi1963}.} of $v$.

\subsection{Shapley value}

The Shapley value of $(N,v)$ is a solution concept. In terms of the unanimity coefficients the Shapley value is given by
\[
\Phi_{i}(v) = \sum_{S \subseteq N :\, i \in S} \frac{1}{s} \lambda_S(v),
\]
for all $i \in N$. It is equivalent to say that the Shapley value associated to a game $(N,v)$ is a linear function and that the allocation associated to the unanimity game of set $S$ is 
\[
x_i=
\begin{cases}
\frac{1}{s}, &\text{ if } i\in S,\\
0 &\text{ otherwise}.
\end{cases}
\]
Hence, the agents in $S$ share the utility of $1$ in equal shares.\\

Another definition is based on the following story.
Consider that all the players are in the room. At every stage, one player is chosen uniformly among the remaining players and leaves the room. When leaving, he obtains as payoff its marginal contribution between when he was present and when he was not anymore. The Shapley value is its expected payoff. Formally, let $\mathcal{L}$ be the set of ordered sequences, then define $\mathcal{U}$ to be the uniform distribution over $\mathcal{L}$. Then 
\[
\Phi_{i}(v) = \E_{\mathcal{U}} \left(v(\{L \geq i\})-v(\{L > i\}) \right),
\]
where $\{L \geq i\}$ (resp. $\{L >i\}$) is the set of agents (resp. strictly) after $i$ in the list $L$, i.e. $\{j\in N, L(j)\geq L(i) \}$ (resp. $\{j\in N, L(j)> L(i) \}$).\\

Let us justify this alternative definition. First, the right-hand side is indeed a linear function of $v$. Second, let us consider the unanimity game $(N,u_S)$. Given an order $L$, we define the pivot of $S$ the first occurence of an element of $S$. 
Formally, it is the element $i$ of $S$ such that $L(i)$ is minimal.  By definition of $u_S$, the marginal contribution of an agent is equal to $1$ if and only if this agent is a pivot. Hence
\[
\E_{\mathcal{U}} \left(u_S(\{L \geq i\})-u_S(\{L > i\}) \right)=\mathbb{P}_{\mathcal{U}} \left( i \text{ is pivot of } S\right)=\frac{1}{s}
\]
By symmetry of $\mathcal{U}$, the probability for $i$ to be the pivot of $S$ is $\frac{1}{s}$ giving the result.\\

\cite{Shapley53} showed that the Shapley value of a convex game belongs to the core. Looking for a weaker condition insuring that the Shapley value of a game lies in the core, \cite{Inarra-Usategui-1993} relaxed the convexity assumption by introducing the notion of average convexity.

\begin{definition}
The game $(N,v)$ is \emph{average convex} if for every $S \subset T \subseteq N$,
\[
\sum_{i\in S} \left(v(S)-v(S\setminus\{i\}) \right)\leq \sum_{i\in S} \left(v(T)-v(T \setminus \{i\})\right).
\]
\end{definition}

They obtained the following result.

\begin{proposition}[\cite{Inarra-Usategui-1993,Sprumont1990}\protect\footnote{\cite{Sprumont1990} established the same result.
Average convex games are called quasiconvex in Sprumont's work.}]
\label{PropIfAverageConvexThenShapleyValueInCore}
If the game is average convex then the Shapley value is in the core.
\end{proposition}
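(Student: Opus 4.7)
The plan is to proceed by induction on $n=|N|$, with the base case $n=1$ being immediate from efficiency. For the inductive step, I first note that average convexity is inherited by every induced subgame $(T,v|_T)$: its defining inequality for $R\subset T$ is a special case of the same inequality in $(N,v)$. Hence by the induction hypothesis, for each $j\in N$ the Shapley value $\phi^{N\setminus\{j\}}:=\Phi(N\setminus\{j\},v|_{N\setminus\{j\}})$ already lies in the core $\mathcal{C}(N\setminus\{j\},v|_{N\setminus\{j\}})$.

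The main tool is the standard one-step recursion for the Shapley value, which I would derive by isolating the $T=N$ term in the explicit formula $\Phi_i(v)=\sum_{T\ni i}\tfrac{(t-1)!(n-t)!}{n!}[v(T)-v(T\setminus\{i\})]$ and repackaging the remaining terms using $|N\setminus T|=n-t$:
\[
\Phi_i(N,v)=\frac{1}{n}\bigl(v(N)-v(N\setminus\{i\})\bigr)+\frac{1}{n}\sum_{j\in N\setminus\{i\}}\phi_i^{N\setminus\{j\}}.
\]
This expresses $\Phi_i(N,v)$ entirely in terms of quantities on $(n-1)$-player subgames, exactly where the induction hypothesis is applicable. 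Fix $S\subsetneq N$ (the case $S=N$ being efficiency), sum this identity over $i\in S$, and split the resulting double sum according to whether $j\in N\setminus S$ or $j\in S\setminus\{i\}$. For $j\in N\setminus S$ we have $S\subseteq N\setminus\{j\}$, so core-membership of $\phi^{N\setminus\{j\}}$ yields $\sum_{i\in S}\phi_i^{N\setminus\{j\}}\ge v(S)$; for $j\in S$, the same core-membership applied to the coalition $S\setminus\{j\}\subseteq N\setminus\{j\}$ yields $\sum_{i\in S\setminus\{j\}}\phi_i^{N\setminus\{j\}}\ge v(S\setminus\{j\})$.

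Collecting coefficients, the desired inequality $\sum_{i\in S}\Phi_i(v)\ge v(S)$ reduces exactly to
\[
\sum_{i\in S}\bigl(v(N)-v(N\setminus\{i\})\bigr)\ge\sum_{i\in S}\bigl(v(S)-v(S\setminus\{i\})\bigr),
\]
which is average convexity of $(N,v)$ applied to the pair $S\subset N$, closing the induction. The main obstacle I expect is not conceptual but bookkeeping: one must verify that the factors $n-s$ and $1/n$ produced by the two pieces of the split combine with $\sum_{j\in S}v(S\setminus\{j\})$ and $\sum_{i\in S}v(N\setminus\{i\})$ so that everything telescopes cleanly down to precisely the average-convexity instance, with no residual terms. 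Once that arithmetic is carried out, no further ingredient beyond efficiency of $\Phi$ is needed.
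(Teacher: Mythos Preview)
Your proposal is correct and follows essentially the same approach as the paper. The paper does not prove Proposition~\ref{PropIfAverageConvexThenShapleyValueInCore} directly (it is attributed to \cite{Inarra-Usategui-1993,Sprumont1990}), but its proof of the weighted generalization, Theorem~\ref{thIfweightedAverageConvexThenWeightedShapleyValueInCore}, is precisely your argument with equal weights replaced by $\overline{\omega}^N_i/\overline{\omega}^N$: induction on $n$, the one-step recursion for the (weighted) Shapley value established in Proposition~\ref{propPsiOmegaiTiInT=PhiOmegavT}, the same split of the double sum according to $k\in T$ versus $k\notin T$, and the same final appeal to the convexity inequality to make the $v(T\setminus\{i\})$ terms cancel.
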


\section{Weighted Shapley Value and weighted convexity}
\label{SectionWeightedShapleyAndweightedConvexity}

\subsection{Weighted Shapley Value}

The notion of Shapley value was then extended in \cite{Shapley1953a} and in \cite{KalaiAndSamet1987} to weighted Shapley value. In the first one, the weight of each player is strictly positive whereas in the second one, a player may have a null weight. It is then necessary to define a weight system to describe how the players with a zero weight would share their contributions when only them are present.

Formally, a weight system is a pair $(\omega,\Sigma)$ where $\omega \in \mathbb{R}^N_{++}$
and $\Sigma=(N_1, ..., N_m)$ is an ordered partition of $N$.
It is called simple if $\Sigma$ is equal to the singleton $N$. $m$ will be the size of the partition.
Given an element $i$ in $N$,
we define its priority denoted by $p(i)$ as the unique integer such that $i \in N_{p(i)}$.
Given a set $S$, we define the priority of $S$, denoted by $p(S)$, as the largest $k \in \{1,\cdots,m\}$ such that $N_k\cap S \neq \emptyset$.
Notice that this coincides with the previous definition if $S$ is a singleton.
Moreover, priority is a nondecreasing mapping.
We also define $\overline{S}$ as the elements in $S$ with the highest priority, i.e., $\overline{S}=\{ i \in S, p(i)=p(S)\}$.

\begin{definition}
The weighted Shapley value with weight system $(\omega,\Sigma)$ is the unique function from the set of $TU$-games to allocation such that
\begin{itemize}
\item it is linear,
\item the allocation of the unanimity game on the set $S$ is defined as follows: 
for all $i\in N$,
\[
x_i=\begin{cases}
\frac{\omega_i}{\sum_{i\in \overline{S}} \omega_i}, &\text{ if } i \in \overline{S},\\
0 &\text{ otherwise}.
\end{cases}
\]
\end{itemize}
\end{definition}

\noindent In the previous definition, it is interesting to distinguish three sets of agents: $\overline{S}$, $S-\overline{S}$ and $N-S$. In the unanimity game with set $S$,
\begin{itemize}
\item agents in $N-S$ are not contributing to obtain a positive payoff, hence they obtain $0$,
\item agents in $S-\overline{S}$ are contributing to obtain a positive payoff but they have low priority, hence they obtain $0$,
\item agents in $\overline{S}$ are contributing to obtain a positive payoff and have the highest priority inside $S$, hence they share the total value of $1$.
\end{itemize}

\noindent It is convenient for the following results to define the relative weight of the agents in the unanimity game on $S$.
Given a set $S$, one defines for all $i\in N$
\[
\overline{\omega}^S_i=\begin{cases}
\omega_i \text{ if } i \in \overline{S},\\ 
0  \text{ if } i \in S \setminus \overline{S},\\ 
0  \text{ if } i\notin S.
\end{cases}
\]
We also define $\overline{\omega}^S$ as the sum for all $i\in S$ of $\overline{\omega}^S_i$. Notice that it is equal to the sum of the weights of elements in $\overline{S}$. Moreover, the unanimity games on $S$ and $\overline{S}$ lead to the same allocation vector.\\

Using the decomposition of a game into unanimity games, we obtain that the $(\omega,\Sigma)$-weighted Shapley value $\Phi^{\omega}$ of a game $(N,v)$ is defined for all $i\in N$ by
\[
\Phi^{\omega}_{i}(v) = \sum_{S \subseteq N :\, i \in S } \frac{\overline{\omega}^S_i}{\overline{\omega}^S} \lambda_S(v)=
\sum_{S \subseteq N :\, i \in \overline{S} } \frac{\omega_i}{\overline{\omega}^S} \lambda_S(v).
\]
The interesting case is if $i$ is in $S$ and not in $\overline{S}$. It means that the priority of $i$ is too low hence, he does not obtain a contribution.\\

Similarly to the classical Shapley value, a third interpretation can be given to the weighted Shapley value.  Recall that $\mathcal{L}$ is the set of ordered sequences on $N$. Let $\mathbb{P}$ be a probability over $\mathcal{L}$, one can define an allocation as follows: let
\[
\Psi^{\mathbb{P}}_i(v)=\E_{\mathbb{P}} \Big( v\big(\{L \geq i \}\big) - v\big(\{L> i \}\big) \Big).
\] 
When $\mathbb{P}$ is the uniform distribution, one obtains the classical Shapley value.\\

Let us now define a probability distribution $\mathbb{P}_{\omega,\Sigma}$ generating the $(\omega,\Sigma)$-weighted Shapley value. We define it by induction:
\begin{itemize}
\item Let $\overline{N}$ be the agents with highest priority in $N$.
\item Then $L(1)$  is equal to $i$ with probability $\overline{\omega}^N_i/\overline{\omega}^N$. In particular, it is strictly positive if and only if $i$ has highest priority or equivalently $i\in \overline{N}$, 
\item Given $L(1)$ fixed, $(L(2),....,L(N))$ are defined following the distribution associated to the projection of $(\omega,\Sigma)$ to $N \setminus\{ L(1) \}$.
\end{itemize}
In particular by construction,  if $i,j\in N$ such that $p(i)>p(j)$ then the probability under $\mathbb{P}$ of an order $L$ where $j$ appears before $i$ in the order is equal to $0$.

\begin{proposition}\label{equivalence}
Given $N$ and $\mathbb{P}_{\omega,\Sigma}$,
then $\Psi^{\mathbb{P}_{\omega,\Sigma}}(v)$ is equal to the $(\omega,\Sigma)$-weighted Shapley value $\Phi^{\omega}$.
\end{proposition}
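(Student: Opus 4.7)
The plan is to use linearity of both $\Psi^{\mathbb{P}_{\omega,\Sigma}}$ and $\Phi^{\omega}$ and reduce to the case of a unanimity game. The map $v \mapsto \Psi^{\mathbb{P}}(v)$ is linear in $v$ for any probability $\mathbb{P}$ on $\mathcal{L}$, and $\Phi^{\omega}$ was \emph{defined} as linear. Since the unanimity games $\{u_S\}_{\emptyset \neq S\subseteq N}$ form a basis of the space of $TU$-games, it suffices to prove that $\Psi^{\mathbb{P}_{\omega,\Sigma}}_i(u_S) = \Phi^{\omega}_i(u_S)$ for every $S$ and every $i \in N$.

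For a unanimity game, the two allocations can be computed explicitly. By definition, $\Phi^{\omega}_i(u_S) = \omega_i/\overline{\omega}^S$ if $i \in \overline{S}$, and $0$ otherwise. On the other side, since $\{L \geq i\}\setminus\{L>i\}=\{i\}$, the marginal contribution $u_S(\{L\geq i\}) - u_S(\{L>i\})$ equals $1$ exactly when $i \in S$ and $i$ is the agent of $S$ appearing earliest in the order $L$ (and equals $0$ otherwise, in particular whenever $i \notin S$). Thus
\[
\Psi^{\mathbb{P}_{\omega,\Sigma}}_i(u_S) = \mathbb{P}_{\omega,\Sigma}\bigl( i \text{ is the first element of } S \text{ in } L\bigr).
\]
It therefore remains to show that this probability is $0$ whenever $i \in S \setminus \overline{S}$, and equals $\omega_i/\overline{\omega}^S$ whenever $i \in \overline{S}$.

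For the first case, if $i \in S\setminus\overline{S}$ then there exists $j \in \overline{S}$ with $p(j) = p(S) > p(i)$. The construction of $\mathbb{P}_{\omega,\Sigma}$ forces $j$ to appear before $i$ with probability one, so $i$ cannot be the first of $S$. For the second case, the inductive construction of $\mathbb{P}_{\omega,\Sigma}$ places all agents of priority strictly greater than $p(S)$ (none of which belong to $S$) at the beginning of $L$, and then proceeds to draw the agents of $N_{p(S)}$ one by one with probability proportional to their $\omega$-weight among those still present, until $N_{p(S)}$ is exhausted. Since $\overline{S} = S \cap N_{p(S)}$, the first element of $S$ to appear is exactly the first element of $\overline{S}$ drawn from $N_{p(S)}$ in this weighted sampling without replacement.

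The only non-routine step is therefore the following standard sampling lemma, which I would prove by induction on $|N_{p(S)}|$ (or, more elegantly, by the exponential-clock coupling: assign independent $\mathrm{Exp}(\omega_j)$ variables to every $j \in N_{p(S)}$ and order by increasing value): if one draws elements of $N_{p(S)}$ sequentially without replacement, each time choosing $j$ with probability $\omega_j/\sum_{\ell \in M}\omega_\ell$ among the remaining set $M$, then for any nonempty $A \subseteq N_{p(S)}$ and $i \in A$,
\[
\mathbb{P}\bigl(i \text{ is the first element of } A \text{ drawn}\bigr) = \frac{\omega_i}{\sum_{j\in A}\omega_j}.
\]
Applying this with $A = \overline{S}$ yields $\omega_i/\overline{\omega}^S$, matching $\Phi^{\omega}_i(u_S)$ and completing the argument. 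The main (minor) obstacle is packaging this sampling lemma cleanly; everything else is bookkeeping in the inductive definition of $\mathbb{P}_{\omega,\Sigma}$.
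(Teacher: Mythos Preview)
Your proposal is correct and follows essentially the same route as the paper: linearity reduces the question to unanimity games, the marginal contribution identifies $\Psi^{\mathbb{P}_{\omega,\Sigma}}_i(u_S)$ with the probability that $i$ is the pivot of $S$, and the priority structure forces this to vanish for $i\in S\setminus\overline{S}$. For $i\in\overline{S}$, the paper proves your ``sampling lemma'' directly by conditioning on the set $T$ of agents drawn before the first element of $S$ appears, observing that whenever $\mathbb{P}(L(t{+}1)\in S,\,L_{\le t}=T)>0$ one has $p(N\setminus T)=p(S)$ and hence the conditional probability that the drawn element is $i$ equals $\omega_i/\overline{\omega}^S$ independently of $T$; summing over $T$ gives the result. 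Your packaging via the exponential-clock coupling is a perfectly valid (and slightly cleaner) alternative to this conditioning computation.
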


\begin{proof}
Proof in Appendix~\ref{appendProofOfPropositionEquivalence}.
\end{proof}

\subsection{Weighted-convexity}

In a similar way than the notion of Shapley value has been extended to weight systems, one can extend the notion of average convexity.

\begin{definition}
Let $(\omega,\Sigma)$ be a  weight system, we say that the game is $(\omega,\Sigma)$-convex if 
for every $S\subset T \subseteq N$,
\begin{equation}
\label{eqDefinition(omega,Sigma)-convexity}
\left(\sum_{i\in S} \overline{\omega_i}^T\left(v(T)-v(T \setminus \{i\})-v(S)+v(S\setminus\{i\}) \right) \right)\geq 0
\end{equation}
\end{definition}

In particular, notice that if $p(S)<p(T)$, then for every $i\in S$, $\overline{\omega}_i^T=0$ and the corresponding inequality is satisfied by any game.
If $\Sigma = \lbrace N \rbrace$ and if all weights are equal,
then $(\omega, \Sigma)$-convexity corresponds to average-convexity.
Let us also note that if a game is convex
then it is $(\omega, \Sigma)$-convex for any weight system $(\omega, \Sigma)$.

\begin{remark}
\label{remarkExtensionWithNullPlayers}
If the game $(N,v)$ contains null players then we can restrict condition~(\ref{eqDefinition(omega,Sigma)-convexity})
to coalitions $S$ and $T$ excluding null players.
We first argue that one can exclude null players from $S$ and then from $T$. If $i \in S$ is a null player
then $v(T)-v(T \setminus \{i\})-v(S)+v(S\setminus\{i\}) = 0$, hence its contribution is zero. Thus we can assume $S$ without null player. Now, let $T'$ be the set of null players in $T$. By the previous sentence, we know that $S \subset T \setminus T'$.
There are two cases:
\begin{itemize}
\item if $p(T) > p(T \setminus T')$, then $\overline{w_i}^{T} = 0$ for all $i$ in $S$. Element of $S$ have a too low priority and the inequality is trivially satisfied and can be forgotten.
\item if $p(T)=p(T\setminus T')$, then one can replace $T$ by $T\setminus T'$.
\end{itemize}
\end{remark}

\subsection{Results}

One obtains the following theorem which extends the result of
\cite{Inarra-Usategui-1993}.

\begin{theorem}
\label{thIfweightedAverageConvexThenWeightedShapleyValueInCore}
Let $\omega \in \mathbb{R}_{++}^N$ and $\Sigma$ be a partition.
If the game is $(\omega,\Sigma)$-convex then its $(\omega,\Sigma)$-weighted Shapley value is in the core.
\end{theorem}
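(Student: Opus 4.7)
Efficiency of $\Phi^\omega(v)$ is immediate from Proposition~\ref{equivalence}: for every ordering $L$ the marginal contributions telescope to $v(N)$, and this survives taking the expectation under $\mathbb{P}_{\omega, \Sigma}$. The work is therefore in establishing coalitional rationality, i.e.\ $\sum_{i \in S} \Phi^\omega_i(v) \geq v(S)$ for every fixed $S \subseteq N$.

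My plan is to use the probabilistic representation $\Phi^\omega = \Psi^{\mathbb{P}_{\omega,\Sigma}}$ to rewrite both sides as sums of marginal contributions and compare them. The first ingredient is a \emph{restriction lemma}: if $L \sim \mathbb{P}_{\omega, \Sigma}$, then the induced order $L|_S$ on $S$ has distribution $\mathbb{P}_{\omega|_S, \Sigma|_S}$. This can be read off from the inductive definition of $\mathbb{P}_{\omega, \Sigma}$, or most transparently from the exponential-clock representation of weighted permutations. Combined with Proposition~\ref{equivalence} applied to the restricted game $v|_S$, it gives
\[
v(S) = \sum_{i \in S} \E_{\mathbb{P}_{\omega, \Sigma}}\bigl[v(\{L \geq i\} \cap S) - v(\{L > i\} \cap S)\bigr].
\]
Subtracting this from the analogous expression for $\sum_{i \in S} \Phi^\omega_i(v)$ and swapping summation orders, the difference rewrites as
\[
\sum_{i \in S} \Phi^\omega_i(v) - v(S) \;=\; \sum_{T \subseteq N} \sum_{i \in S \cap \overline{T}} \mathbb{P}_{\omega,\Sigma}\bigl(\{L \geq i\} = T\bigr)\,\Delta(i, S\cap T, T),
\]
where $\Delta(i, S_0, T_0) := v(T_0) - v(T_0 \setminus i) - v(S_0) + v(S_0 \setminus i)$, and the restriction to $i \in \overline{T}$ reflects that $\mathbb{P}_{\omega,\Sigma}\bigl(\{L \geq i\} = T\bigr) = 0$ otherwise.

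The main obstacle is to factorize $\mathbb{P}_{\omega, \Sigma}\bigl(\{L \geq i\} = T\bigr) = \omega_i \cdot q(T)$ with $q(T) \geq 0$ depending only on $T$. Under $\mathbb{P}_{\omega, \Sigma}$, the priority class $N_{p(T)}$ is placed in a contiguous block drawn from the weighted permutation with weights $\omega|_{N_{p(T)}}$, independently of the other classes. Within this block the event $\{L \geq i\} = T$ (for $i \in \overline{T}$) is equivalent to: all of $N_{p(T)} \setminus \overline{T}$ precedes $i$ and all of $\overline{T} \setminus \{i\}$ follows $i$. Let $E_T$ denote the weaker event that all of $N_{p(T)} \setminus \overline{T}$ precedes every element of $\overline{T}$. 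Conditional on $E_T$, the memoryless property of weighted sampling implies that the first element of $\overline{T}$ is distributed proportionally to $(\omega_j)_{j \in \overline{T}}$, so
\[
\mathbb{P}_{\omega, \Sigma}\bigl(\{L \geq i\} = T\bigr) = \mathbb{P}(E_T)\cdot\frac{\omega_i}{\overline{\omega}^T},
\]
and $q(T) := \mathbb{P}(E_T)/\overline{\omega}^T \geq 0$ witnesses the factorization.

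Once the factorization is established, the conclusion is immediate. Recalling that $\overline{\omega}^T_i = \omega_i$ for $i \in \overline{T}$ and $0$ for $i \in T \setminus \overline{T}$, substitution yields
\[
\sum_{i \in S} \Phi^\omega_i(v) - v(S) \;=\; \sum_{T \subseteq N} q(T) \sum_{i \in S \cap T} \overline{\omega}^T_i \, \Delta(i, S \cap T, T).
\]
For every $T$ the inner sum is exactly the $(\omega, \Sigma)$-convexity inequality applied to $(S_0, T_0) = (S \cap T, T)$ (the boundary case $S \cap T = T$ giving $0$ trivially), hence non-negative by hypothesis. Since $q(T) \geq 0$, the whole sum is non-negative, completing the proof.
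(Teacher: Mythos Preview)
Your argument is correct and genuinely different from the paper's.  The paper proceeds by induction on $|N|$: it establishes a recursive formula for $\Phi^\omega(v^T)$ in terms of $\Phi^\omega(v^{T\setminus\{j\}})$ (Proposition~\ref{propPsiOmegaiTiInT=PhiOmegavT}, proved in Appendix~\ref{appendProofOfPropositionRec} via an explicit marginal-contribution expansion and a combinatorial lemma on the coefficients $\gamma_{S,i}^{T,\omega}$), and then pushes the core constraint for $T\subseteq N$ down to core constraints for subgames of size $n-1$, where $(\omega,\Sigma)$-convexity closes the gap.  You instead work directly with the probabilistic representation and exhibit $\sum_{i\in S}\Phi^\omega_i(v)-v(S)$ as a non-negative combination $\sum_T q(T)\bigl[\sum_{i\in S\cap T}\overline\omega^T_i\Delta(i,S\cap T,T)\bigr]$ of the $(\omega,\Sigma)$-convexity inequalities, with $q(T)$ coming from the factorisation $\mathbb P_{\omega,\Sigma}(\{L\ge i\}=T)=\omega_i\,q(T)$.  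This avoids induction and the auxiliary Proposition~\ref{propPsiOmegaiTiInT=PhiOmegavT} entirely, and makes transparent \emph{which} convexity constraints are being aggregated.  The paper's route, on the other hand, is closer to Sprumont's original argument for average convexity, is purely algebraic (no exponential clocks), and produces the recursion of Proposition~\ref{propPsiOmegaiTiInT=PhiOmegavT} as a side benefit.

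Two minor remarks on your write-up.  First, your factorisation tacitly uses that $\mathbb P_{\omega,\Sigma}(\{L\ge i\}=T)=0$ unless $T$ contains every player of priority below $p(T)$; for such ``invalid'' $T$ one simply takes $q(T)=0$, so nothing breaks, but it is worth stating.  Second, the restriction lemma is not actually needed: for any fixed order $L$, the sum $\sum_{i\in S}\bigl(v(\{L\ge i\}\cap S)-v(\{L>i\}\cap S)\bigr)$ telescopes to $v(S)$ directly, so the identity you derive from it holds pointwise in $L$ before taking expectations.
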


The first part of the proof is inspired from \citep{Sprumont1990}.
Let $(N,v)$ be a given game.
For any non-empty coalition $T \subseteq N$, we denote by $v^T$ the subgame of $v$ induced by $T$,
\emph{i.e.}, $v^{T}(S) = v(S)$ for any $S \subseteq T$.
Let $\Phi^\omega(v^T)$ be the $(\omega, \Sigma)$-weighted Shapley value of the subgame $v^T$.
Define the vector $\Psi^{\omega} = (\Psi^{\omega}_{iT})_{i \in T,T\in \mathcal{P}(N)}$ recursively by
\[
\Psi^{\omega}_{iT}=
\frac{\overline{\omega}^T_i}{\overline{\omega}^T} (v(T)-v(T \setminus \{i\}))+
\sum_{j \in T \setminus \lbrace i \rbrace} \frac{\overline{\omega}^T_j}{\overline{\omega}^T} \Psi^{\omega}_{iT\setminus \{j\}},
\]
for all $i \in T$, $T \in \mathcal{P}(N)$,
and setting $\Psi^{\omega}_{i\emptyset} = 0$ for all $i \in N$.


\begin{restatable}{proposition}{VectorAndShapleyValue}
\label{propPsiOmegaiTiInT=PhiOmegavT}
For all $T\in \mathcal{P}(N)$,
$(\Psi^{\omega}_{iT})_{i\in T}=\Phi^{\omega}(v^T)$.
\end{restatable}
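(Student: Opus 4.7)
The plan is to argue by induction on $|T|$, using the probabilistic interpretation provided by Proposition~\ref{equivalence}. The key observation is that the recursion defining $\Psi^\omega_{iT}$ mirrors exactly the recursion one obtains when computing the expected marginal contribution under $\mathbb{P}_{\omega,\Sigma}$ by conditioning on the first element $L(1)$ of the order. Thus, once $\Phi^\omega(v^T)$ is identified with the expected marginal contribution for the subgame on $T$, the identity follows from the tower law together with the inductive hypothesis.

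The base case $|T|=1$, say $T=\{i\}$, is immediate: $\overline{T}=\{i\}$ and $\overline{\omega}^T_i/\overline{\omega}^T=1$, so $\Psi^\omega_{iT}=v(\{i\})$, which coincides with $\Phi^\omega_i(v^{\{i\}})$. For the inductive step, assume the statement holds for every proper subset of $T$. Applying Proposition~\ref{equivalence} to the subgame $v^T$ with the projected weight system yields
\[
\Phi^\omega_i(v^T) \;=\; \E_{\mathbb{P}^T_{\omega,\Sigma}}\!\bigl[v(\{L\geq i\})-v(\{L> i\})\bigr],
\]
where $\mathbb{P}^T_{\omega,\Sigma}$ denotes the analogous distribution on orderings of $T$. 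I would then condition on the value of $L(1)$. By the inductive construction of $\mathbb{P}_{\omega,\Sigma}$ recalled just before Proposition~\ref{equivalence}, the event $L(1)=j$ has probability $\overline{\omega}^T_j/\overline{\omega}^T$ (zero unless $j\in\overline{T}$), and conditionally on $L(1)=j$ the remaining order on $T\setminus\{j\}$ is distributed as $\mathbb{P}^{T\setminus\{j\}}_{\omega,\Sigma}$. When $j=i$, the conditional expected marginal contribution of $i$ equals $v(T)-v(T\setminus\{i\})$; when $j\neq i$, it equals $\Phi^\omega_i(v^{T\setminus\{j\}})$ by a second application of Proposition~\ref{equivalence}. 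Summing via the tower law recovers exactly the recursion
\[
\Phi^\omega_i(v^T) \;=\; \frac{\overline{\omega}^T_i}{\overline{\omega}^T}\bigl(v(T)-v(T\setminus\{i\})\bigr)+\sum_{j\in T\setminus\{i\}}\frac{\overline{\omega}^T_j}{\overline{\omega}^T}\,\Phi^\omega_i\bigl(v^{T\setminus\{j\}}\bigr),
\]
and the induction hypothesis identifies $\Phi^\omega_i(v^{T\setminus\{j\}})$ with $\Psi^\omega_{i,T\setminus\{j\}}$, closing the argument.

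The main step requiring attention is the bookkeeping for the recursive structure of $\mathbb{P}_{\omega,\Sigma}$: one must verify that the projection of $(\omega,\Sigma)$ to $T\setminus\{j\}$ really produces $\mathbb{P}^{T\setminus\{j\}}_{\omega,\Sigma}$, which is immediate since removing a highest-priority player $j\in\overline{T}$ leaves the relative priorities and weights of the remaining players unchanged. The case $i\notin\overline{T}$ is handled automatically: both the first term of the recursion and the probability of $\{L(1)=i\}$ vanish. Aside from these verifications, no genuine obstacle arises; the proof is essentially a clean conditional-expectation calculation combined with the induction hypothesis.
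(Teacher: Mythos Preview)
Your proof is correct and takes a genuinely different route from the paper's. Both arguments proceed by induction on $|T|$, but the inductive step is handled quite differently. The paper first derives an explicit closed-form expression
\[
\Phi_i^\omega(v^T)=\sum_{S\subseteq T,\,i\in S}\gamma_{S,i}^{T,\omega}\bigl(v(S)-v(S\setminus\{i\})\bigr),
\]
then proves a purely combinatorial identity for the coefficients (Lemma~\ref{lemSumR(OmegaT-OmegarR)GammaSR=OmegaTGammaST}),
\[
\sum_{j\in T\setminus S}\frac{\overline{\omega}^T_j}{\overline{\omega}^T}\,\gamma_{S,i}^{T\setminus\{j\},\omega}=\gamma_{S,i}^{T,\omega},
\]
and plugs this into the recursion to close the induction. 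You instead invoke Proposition~\ref{equivalence} for the subgame $v^T$ and condition on $L(1)$: the recursive definition of $\mathbb{P}_{\omega,\Sigma}$ immediately yields the desired recursion for $\Phi^\omega_i(v^T)$ via the tower law, and the combinatorial identity is bypassed entirely. Your approach is shorter and more conceptual, essentially reading off the recursion from the probabilistic construction; the price is a dependence on Proposition~\ref{equivalence}, whereas the paper's computation is self-contained and, as a by-product, makes the coefficients $\gamma_{S,i}^{T,\omega}$ explicit.
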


\begin{proof}
Proof in Appendix~\ref{appendProofOfPropositionRec}.
\end{proof}

As a consequence, one obtains the following result.
\begin{proposition}
Let us denote by $\Phi^\omega_{T}$ the $(\omega, \Sigma)$-weighted Shapley value of $v^T$.
We have
\[
\Phi^{\omega}_{iT}=
\frac{\overline{\omega}^T_i}{\overline{\omega}^T} (v(T)-v(T \setminus \{i\}))+\sum_{j \in T \setminus \lbrace i \rbrace} \frac{\overline{\omega}^T_j}{\overline{\omega}^T} \Phi^{\omega}_{iT\setminus \{j\}},
\]
for all $i \in T$.
\end{proposition}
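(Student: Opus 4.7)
The plan is to read this proposition as an immediate corollary of the preceding Proposition~\ref{propPsiOmegaiTiInT=PhiOmegavT}. That proposition asserts that for every non-empty $T \subseteq N$ the recursively defined vector $(\Psi^{\omega}_{iT})_{i\in T}$ coincides with the $(\omega, \Sigma)$-weighted Shapley value $\Phi^{\omega}(v^T)$ of the subgame $v^T$. Hence for every $i \in T$ we have the identity $\Phi^{\omega}_{iT} = \Psi^{\omega}_{iT}$, and applying the same proposition to each subset $T \setminus \{j\}$ (with $j \in T \setminus \{i\}$) gives $\Phi^{\omega}_{iT\setminus\{j\}} = \Psi^{\omega}_{iT\setminus\{j\}}$, using that $(v^T)^{T\setminus\{j\}} = v^{T\setminus\{j\}}$ so that the Shapley values of the two views coincide.

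It then suffices to plug these equalities into the defining recursion of $\Psi^{\omega}$, namely
\[
\Psi^{\omega}_{iT}= \frac{\overline{\omega}^T_i}{\overline{\omega}^T} (v(T)-v(T \setminus \{i\}))+ \sum_{j \in T \setminus \lbrace i \rbrace} \frac{\overline{\omega}^T_j}{\overline{\omega}^T}\, \Psi^{\omega}_{iT\setminus \{j\}},
\]
which is valid by construction. Substituting $\Phi^{\omega}_{iT}$ on the left and $\Phi^{\omega}_{iT\setminus\{j\}}$ on the right yields exactly the claimed formula. There is no real obstacle here; all the work has already been absorbed into Proposition~\ref{propPsiOmegaiTiInT=PhiOmegavT}, and the present statement is essentially a rewriting of its content in terms of $\Phi^{\omega}$ rather than of the auxiliary vector $\Psi^{\omega}$.
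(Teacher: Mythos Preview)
Your proposal is correct and matches the paper's own treatment: the paper introduces this proposition with the phrase ``As a consequence, one obtains the following result'' immediately after Proposition~\ref{propPsiOmegaiTiInT=PhiOmegavT}, giving no further proof. Your write-up is in fact slightly more explicit than the paper, since you spell out that Proposition~\ref{propPsiOmegaiTiInT=PhiOmegavT} must also be invoked at level $T\setminus\{j\}$ to replace each $\Psi^{\omega}_{iT\setminus\{j\}}$ by $\Phi^{\omega}_{iT\setminus\{j\}}$.
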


Let us now prove the main result.

\begin{proof}
[Proof of Theorem~\ref{thIfweightedAverageConvexThenWeightedShapleyValueInCore}]
We do a proof by induction.
If  $n=1$, the result is true.
Let $n>1$ and assume that for all $(\omega, \Sigma)$-convex games of size $n-1$, the result is true.
Let us consider a $(\omega, \Sigma)$-convex game $v$ of size $n$.
We check the condition for $\Phi^{\omega}$ to be in the core.
Let $T$ be a subset of $N$.
By Proposition~\ref{propPsiOmegaiTiInT=PhiOmegavT} and by definition of $\Psi^{\omega}$, we get
\begin{align}
\label{eqSumiInTPhiOmegai(v)=SumiInTPsiOmegaiN}
\sum_{i\in T} \Phi^{\omega}_i(v) & = \sum_{i\in T} \Psi^{\omega}_{iN}, \nonumber \\
& = \sum_{i\in T}  \left( \frac{\overline{\omega}^N_i}{\overline{\omega}^N} (v(N)-v(N \setminus\{i\})) + \sum_{k\in N,\, k\neq i}  \frac{\overline{\omega}^N_k}{\overline{\omega}^N}\Psi^{\omega}_{iN\setminus{k}}\right), \nonumber \\
& = \frac{1}{\overline{\omega}^N} \left( \sum_{i\in T} \overline{\omega}^N_i (v(N)-v(N \setminus \{i\}))+ \sum_{k \in N} \overline{\omega}^N_k\left( \sum_{i\neq k,\, i\in T} \Psi^{\omega}_{iN \setminus{k}}\right)\right).
\end{align}
By Proposition~\ref{propPsiOmegaiTiInT=PhiOmegavT},
we have $\Phi^{\omega}(v^{N \setminus k}) = (\Psi^{\omega}_{iN\setminus k})_{i\in N \setminus k}$
and by the induction assumption,
$\Phi^{\omega}(v^{N \setminus k})$ belongs to the core of $v^{N \setminus k}$.
We distinguish two cases for the sum $\sum_{i\neq k,\, i\in T}\Psi^{\omega}_{i N \setminus{k}}$:
\begin{itemize}
\item
If $k\in T$, then one sum for all $i\in T\setminus{k}$ and by induction assumption
\begin{equation}
\label{eqSumiNeqkiInTPsiOmegaiNSetminuskGeqv(TSetminusk)}
\sum_{i\neq k,\, i\in T} \Psi^{\omega}_{i N \setminus{k}} \geq v(T\setminus{k}).
\end{equation}
Let us note that (\ref{eqSumiNeqkiInTPsiOmegaiNSetminuskGeqv(TSetminusk)}) is satisfied at equality for $T = N$.
\item
If $k\notin T$, then one sum for all $i\in T$ and by induction assumption
\begin{equation}
\label{eqSumi<=kiInTPsiOmegaiNSetminusk<=v(T)}
\sum_{i\neq k,\, i\in T} \Psi^{\omega}_{i N \setminus{k}} \geq v(T).
\end{equation}
\end{itemize}
If $T = N$, then (\ref{eqSumiInTPhiOmegai(v)=SumiInTPsiOmegaiN}) and (\ref{eqSumiNeqkiInTPsiOmegaiNSetminuskGeqv(TSetminusk)})
imply $\sum_{i\in N} \Phi^{\omega}_i(v) = v(N).$
Let now $T$ be a strict subset of $N$.
By (\ref{eqSumiInTPhiOmegai(v)=SumiInTPsiOmegaiN}),(\ref{eqSumiNeqkiInTPsiOmegaiNSetminuskGeqv(TSetminusk)}),
and (\ref{eqSumi<=kiInTPsiOmegaiNSetminusk<=v(T)}), one obtains
\[
\sum_{i\in T} \Phi^{\omega}_i(v) \geq \frac{1}{\overline{\omega}^N} \left( \sum_{i\in T} \overline{\omega}^N_i (v(N)-v(N \setminus \{i\}))
+ \sum_{k\in T} \overline{\omega}^N_k v(T\setminus{k})+ \sum_{k \in N,\, k\notin T} \overline{\omega}^N_k v(T)\right).
\]
Finally,
the assumption of $(\omega,\Sigma)$-convexity implies
\[
\sum_{i\in T} \Phi^{\omega}_i(v) \geq \frac{1}{\overline{\omega}^N} \left( \sum_{i\in T} \overline{\omega}^N_i (v(T)-v(T \setminus \{i\}))
+ \sum_{k\in T} \overline{\omega}_k^N v(T\setminus{k})+ \sum_{k \in N,\, k\notin T} \overline{\omega}^N_k v(T)\right).
\]
We now see that the terms $v(T\setminus\{i\})$ are compensated by  $v(T\setminus\{k\})$,
whereas the terms $v(T)$ are accumulated with a total weight of $\overline{w}^N$, hence
\begin{align*}
\sum_{i\in T} \Phi^{\omega}_i(v) \geq v(T).
\end{align*}
\end{proof}

\subsection{Link between \texorpdfstring{$(\omega,\Sigma)$}{(w,E)}-convexity and superadditivity}

It is well known that average-convexity implies superadditivity (\cite{Inarra-Usategui-1993}).
We prove that $(\omega, \Sigma)$-convexity also implies superadditivity when $\Sigma = \lbrace N \rbrace$. 
We first establish a more general result with possibly different priorities
which will be also useful in Section~\ref{SectionWeightedConvexityAndCommunicationGameCaseWithSeveralPriorities}.

\begin{proposition}
\label{weak_superadditivity}
Let $(N,v)$ be an $(\omega,\Sigma)$-convex game.
Let us consider $S,T,U \subseteq N$ such that $S \cap T = \emptyset$, $U \subseteq T$ and for all $s\in S$, $p(s)>p(U)$ and $p(s)\geq p(T)$ then
\begin{equation}
\label{eqSuperadditivity_weak}
v(S \cup T) - v(T) \geq v(S \cup U)-v(U).
\end{equation}
\end{proposition}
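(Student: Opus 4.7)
The plan is to induct on $|S|$. The trivial cases $S = \emptyset$ and $U = T$ give equality on both sides, so assume $S \neq \emptyset$ and $U \subsetneq T$. The base case $|S|=0$ is immediate. For the inductive step I will apply the $(\omega,\Sigma)$-convexity inequality to the specific pair $S' = U \cup S$ and $T' = T \cup S$ (which satisfies $S' \subsetneq T'$ because $U \subsetneq T$ and $S \cap T = \emptyset$), and then combine the resulting inequality with the induction hypothesis applied to the triples $(S \setminus \{i\}, T, U)$ for each $i \in \overline{S}$.

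The crucial computation is that of the coefficients $\overline{\omega}_i^{T \cup S}$ appearing in the convexity inequality. Since every $s \in S$ satisfies $p(s) \geq p(T)$, one has $p(T \cup S) = p(S)$, so that $\overline{T \cup S} = \{i \in T \cup S : p(i) = p(S)\}$. Moreover, since every $s \in S$ satisfies $p(s) > p(U)$, every $i \in U$ has $p(i) \leq p(U) < p(S) = p(T \cup S)$, which forces $\overline{\omega}_i^{T \cup S} = 0$. Summing over $i \in S' = U \cup S$ in the convexity inequality therefore collapses to
\[
\sum_{i \in \overline{S}} \omega_i \bigl[ v(S \cup T) - v((S \setminus \{i\}) \cup T) - v(S \cup U) + v((S \setminus \{i\}) \cup U) \bigr] \geq 0.
\]

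For each $i \in \overline{S}$, the triple $(S \setminus \{i\}, T, U)$ inherits all hypotheses of the proposition: disjointness from $T$, and $p(s') > p(U)$ and $p(s') \geq p(T)$ for every remaining $s' \in S \setminus \{i\}$. By induction, $v(T \cup (S \setminus \{i\})) - v(U \cup (S \setminus \{i\})) \geq v(T) - v(U)$. Multiplying by $\omega_i > 0$, summing over $i \in \overline{S}$, combining with the displayed inequality, and dividing through by the positive total weight $\sum_{i \in \overline{S}} \omega_i$ yields $v(S \cup T) - v(S \cup U) \geq v(T) - v(U)$, which is the desired inequality (\ref{eqSuperadditivity_weak}).

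The main obstacle is to identify the correct pair $(S', T')$ so that the priority hypotheses collapse the convexity sum cleanly. The strictness in $p(s) > p(U)$ is exactly what kills the contributions indexed by $i \in U$, whereas the weaker $p(s) \geq p(T)$ is just enough to guarantee $p(T \cup S) = p(S)$ and thereby keep all the $\overline{S}$-indexed terms alive. A naive attempt to telescope $S$ one element at a time via a single-element version of the claim runs aground on ties within $\overline{S}$: once one element of $\overline{S}$ is absorbed into the ambient coalition, the required strict priority inequality fails for the next. Treating the whole block $\overline{S}$ simultaneously via a single convexity application, and then recursing on $|S|$, is what makes the argument go through.
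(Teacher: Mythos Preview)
Your proof is correct and follows essentially the same approach as the paper's: induction on $|S|$, applying $(\omega,\Sigma)$-convexity to the pair $S\cup U \subset S\cup T$, and combining with the induction hypothesis on the triples $(S\setminus\{i\},T,U)$. The only cosmetic differences are that you start the base case at $|S|=0$ rather than $|S|=1$ and that you explicitly restrict the convexity sum to $i\in\overline{S}$ (whereas the paper sums over all of $S$ with the understanding that the extra terms carry zero weight).
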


\begin{proof}
The proof is by induction on $\vert S \vert$.
If $\vert S \vert = 1$, then we set $S = \lbrace i \rbrace$
and $(\omega,\Sigma)$-convexity applied to $S'=\lbrace i \rbrace \cup U \subseteq T' = \lbrace i \rbrace \cup T$ implies
\[
\overline{\omega}^{T'}_i (v(\lbrace i \rbrace \cup T) - v(T)) \geq \overline{\omega}^{T'}_i (v(\lbrace i \rbrace \cup U) - v(U)).
\]
By assumption on the priorities,
we have $p(i)= p(T')$ hence $\overline{\omega}^{T'}_i >0$ and one can divide the inequality by $\overline{\omega}^{T'}_i >0$.
We obtain that (\ref{eqSuperadditivity_weak}) is satisfied for $|S|=1$.

Let $k$ be a given integer with $k \geq 1$. 
Let us assume (\ref{eqSuperadditivity_weak}) satisfied for any triple $S, T,U \subseteq N$ such that $S \cap T = \emptyset$,
$U \subseteq T$,
and for all $s\in S$, $p(s)>p(U)$, $p(s)\geq p(T)$, and $|S|=k$.
Let us consider $S, T,U \subseteq N$ such that $S \cap T = \emptyset$,
$U \subseteq T$,
and for all $s\in S$, $p(s)>p(U)$, $p(s) \geq p(T)$, and $|S|=k+1$.
Let us set w.l.o.g. $S = \lbrace 1, 2 , \ldots, k+1 \rbrace$.
$(\omega,\Sigma)$-convexity applied to $S\cup U \subseteq S \cup T$ implies
\begin{equation}
\label{eqSumi=1k+1omegaivTUS-vTS-i>=Sumi=1k+1omegaivS-vS-i}
\sum_{i =1}^{k+1} \overline{\omega}^{T\cup S}_i \left(v(S \cup T) - v((S \setminus \lbrace i \rbrace) \cup T)\right) \geq
\sum_{i =1}^{k+1} \overline{\omega}^{T\cup S}_i \left(v(S \cup U) - v((S \setminus \lbrace i \rbrace)\cup U )\right).
\end{equation}
We do not need to consider other element in $S\cup U$ since all elements of $U$ have too low priorities.
By induction hypothesis,
we have 
\[
v((S \setminus \{i\}) \cup T) - v(T) \geq v((S \setminus \{i\}) \cup U) - v(U),
\]
for all $i \in \lbrace 1, 2, \ldots, k+1 \rbrace$.
Multiplying each inequality by $\overline{\omega}^{T\cup S}_i$ and adding them to (\ref{eqSumi=1k+1omegaivTUS-vTS-i>=Sumi=1k+1omegaivS-vS-i}),
we get
\begin{equation}
\sum_{i =1}^{k+1} \overline{\omega}^{T\cup S}_i (v(S \cup T)-v(T)) \geq
\sum_{i =1}^{k+1} \overline{\omega}^{T\cup S}_i (v(S \cup U) -v(U)).
\end{equation}
This implies (\ref{eqSuperadditivity_weak})
and proves the result at the next step of the induction.
By the principle of induction, the result is true for every set $S$.
\end{proof}

We obtain as a particular result that the value function $v$ is superadditive if there exists only one partition.

\begin{corollary}
Let $(N,v)$ be an $(\omega,\Sigma)$-convex game such that $\Sigma=\{N\}$, then $v$ is superadditive.
\end{corollary}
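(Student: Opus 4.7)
The plan is to derive this corollary directly from Proposition~\ref{weak_superadditivity} by specializing to $U=\emptyset$. Superadditivity requires $v(A\cup B)\geq v(A)+v(B)$ whenever $A,B\subseteq N$ are disjoint. The cases $A=\emptyset$ or $B=\emptyset$ are trivial since $v(\emptyset)=0$, so assume both $A$ and $B$ are nonempty. The key step is to invoke Proposition~\ref{weak_superadditivity} with the triple $(S,T,U)=(A,B,\emptyset)$, whose conclusion
\[
v(A\cup B)-v(B) \;\geq\; v(A\cup\emptyset)-v(\emptyset) \;=\; v(A)
\]
is precisely the desired inequality after rearrangement.

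What remains is to verify the hypotheses of Proposition~\ref{weak_superadditivity}. Disjointness $S\cap T=\emptyset$ and the inclusion $U\subseteq T$ are immediate. For the priority conditions, since $\Sigma=\{N\}$ consists of a single priority class, we have $p(s)=1$ for every $s\in N$ and $p(T)=1$, so $p(s)\geq p(T)$ holds trivially with equality. The condition $p(s)>p(U)$ with $U=\emptyset$ is handled by the natural convention $p(\emptyset)=0$; equivalently, one can inspect the proof of Proposition~\ref{weak_superadditivity} and observe that $U$ enters only through sums of weights $\overline{\omega}^{T\cup S}_u$ for $u\in U$, which vanish when $U$ is empty, so the case $U=\emptyset$ is admissible without invoking any convention on priorities.

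Accordingly, the statement is a one-line specialization of Proposition~\ref{weak_superadditivity}, and there is no genuine obstacle beyond fixing the empty-set convention noted above. In particular, no new induction or additional combinatorial argument is needed, since the work has already been carried out inside Proposition~\ref{weak_superadditivity}.
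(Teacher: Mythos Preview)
Your proof is correct and takes essentially the same approach as the paper: both apply Proposition~\ref{weak_superadditivity} with $U=\emptyset$, using that when $\Sigma=\{N\}$ all players share the same priority so the hypotheses are automatically satisfied. Your discussion of the $p(\emptyset)$ convention is a reasonable clarification that the paper leaves implicit.
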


\begin{proof}
Let us consider $S, T\subseteq N$ such that $S \cap T = \emptyset$.
Taking $U=\emptyset$, the conditions of Proposition~\ref{weak_superadditivity} are satisfied for any pair of sets. We obtain
\begin{equation}
\label{eqSuperadditivity}
v(S \cup T) - v(T) \geq v(S )-v(\emptyset).
\end{equation}
The game is superadditive.
\end{proof}

\section{Weighted-convexity and Communication game : The singleton case}
\label{SectionSingletonCase}

We now consider TU-games with a communication graph $G=(N,E)$.
Let $(N,v)$ be a TU-game.
We define its restriction to the communication graph $G=(N,E)$ as the TU-game with characteristic function
\[
v^G(S)=\sum_{A \in C_G(S)} v(A),
\]
where $C_G(S)$ is the set of all connected components of the induced graph $(S,E^S)$
where only nodes in $S$ and edges with both end-nodes in $S$ remained.

\begin{definition}
Let $(\omega,\Sigma)$ be a weight system. We say that a graph $G=(N,E)$ \emph{preserves the} $(\omega,\Sigma)$-convexity
if for every game $(N,v)$ that is $(\omega,\Sigma)$-convex then the communication game $(N,v^G)$ is also $(\omega,\Sigma)$-convex.
\end{definition}

We want to investigate what type of graph preserves the $(\omega,\Sigma)$-convexity as a function of $\omega$ and $\Sigma$. 
\cite{Slikker1998} shows that a connected communication graph preserves average convexity if and only if it is a complete graph or a star. If the graph is not connected, the condition becomes for each connected component to be a complete graph or a star. In our formalism, it corresponds to the case where $\Sigma=\{N\}$ and $\omega$ is a constant vector.  In order to prove this result, he shows that a graph $G$ preserving average convexity has to satisfy the following conditions:
\begin{enumerate}
\item
\label{enumGHasToBeCycleComplete}
 $G$ has to be cycle-complete and,
\item
\label{enumGCannotContainA4-pathOrA3-pan}
$G$ cannot contain the following subgraphs: a $4$-path and a $3$-pan.
\end{enumerate}
The formal definitions of these notions are the following.
\begin{definition}
We say that a cycle $C = \lbrace v_1, e_1, v_2, e_2, \ldots,v_m, e_m, v_1 \rbrace $ in a graph $G = (N,E)$ is complete (resp. non-complete)
if the subset $\lbrace v_1, v_2, \ldots v_m \rbrace \subseteq N$ of nodes of $C$ induces a complete (resp. non-complete) subgraph in $G$.
A graph $G=(N,E)$ is cycle-complete if any cycle
$C$ in $G$ is complete.
\end{definition}

\begin{definition}
A graph $G=(N,E)$ admits a \emph{$4$-path} subgraph if there exist $i,j,k,l \in N$ such that
\begin{itemize}
\item
$\{i,j\},\{j,k\},\{k,l\}\in E$
\item
any other link between $\{i,j,k,l\}$ is absent:
\end{itemize}
A graph $G=(N,E)$ admits a \emph{$3$-pan} subgraph if there exists $i,j,k,l \in N$ such that
\begin{itemize}
\item
$\{i,j\},\{j,k\},\{k,i\}\in E$ and $\{l,i\}\in E$
\item
any other link between $\{i,j,k,l\}$ is absent:
\end{itemize}
Notice that in both cases, $j$ and $k$ have degree $2$ in the subgraph induced by $\{i,j,k,l\}$.
\end{definition}

A $4$-path and a $3$-pan are represented in Figures~\ref{4-path} and \ref{3-pan} respectively.
\def\ech{.5}
\begin{figure}[!ht]
\begin{subfigure}[t]{0.45\textwidth}
\centering{
\begin{tikzpicture}[scale=\ech,shorten >=1pt,auto,node distance=3cm,thick,main
 node/.style={circle,draw,font=\Large\bfseries}]
\node [draw,text width=0.5cm,text centered,circle,scale=\ech] (A) at (-6,0) {$i$};
\node [draw,text width=0.5cm,text centered,circle,scale=\ech] (B) at (-4,0) {$j$};
\node [draw,text width=0.5cm,text centered,circle,scale=\ech] (C) at (-2,0) {$k$};
\node [draw,text width=0.5cm,text centered,circle,scale=\ech] (D) at (0,0) {$l$};
\draw[-,>=latex,color=red,scale=\ech] (A) to node[midway,color=red,scale=\ech,above] {}(B);
\draw[-,>=latex,color=red,scale=\ech] (B) to node[midway,color=red,scale=\ech,above] {}(C);
\draw[-,>=latex,color=red,scale=\ech] (C) to node[midway,color=red,scale=\ech,above] {}(D);
\end{tikzpicture}
}
\caption{$4$-path.}
\label{4-path}
\end{subfigure}
\hfill
\begin{subfigure}[c]{0.45\textwidth}
\centering{
\begin{tikzpicture}[scale=\ech,shorten >=1pt,auto,node distance=3cm,thick,main
 node/.style={circle,draw,font=\Large\bfseries}]
\node [draw,text width=0.5cm,text centered,circle,scale=\ech] (A) at (-4,-2) {$k$};
\node [draw,text width=0.5cm,text centered,circle,scale=\ech] (B) at (-4,2) {$j$};
\node [draw,text width=0.5cm,text centered,circle,scale=\ech] (C) at (-2,0) {$i$};
\node [draw,text width=0.5cm,text centered,circle,scale=\ech] (D) at (0,0) {$l$};
\draw[-,>=latex,color=red,scale=\ech] (A) to node[midway,color=red,scale=\ech,above] {}(B);
\draw[-,>=latex,color=red,scale=\ech] (A) to node[midway,color=red,scale=\ech,above] {}(C);
\draw[-,>=latex,color=red,scale=\ech] (B) to node[midway,color=red,scale=\ech,above] {}(C);
\draw[-,>=latex,color=red,scale=\ech] (C) to node[midway,color=red,scale=\ech,above] {}(D);
\end{tikzpicture}
}
\caption{$3$-pan.}
\label{3-pan}
\end{subfigure}
\caption{}
\end{figure}

Let us start with the simple case where $\Sigma$ is a singleton, \emph{i.e.}, $\Sigma = \lbrace N \rbrace$.
Then all players have the same priority and we will prove that we  obtain the same characterization as \cite{Slikker1998} for inheritance of average convexity. Let us note that, as all players have the same priority, we have $\overline{\omega_i}^T = \omega_i$ for any player $i \in T \subseteq N$ throughout this section. Hence, the difference with \cite{Slikker1998} is the non homogeneity of the weights.

\begin{theorem}
\label{characterization_singleton}
Let $G=(N,E)$ be a connected graph and let $(\omega,\Sigma)$ be a weight system with $\Sigma = \lbrace N \rbrace$.
Then the following properties are equivalent.
\begin{enumerate}
\item
\label{GPreserves(omega,Sigma)-convexity}
$G$ preserves $(\omega,\Sigma)$-convexity.
\item
\label{InheritanceOf(omega,Sigma)-convexityCycle-completeNo4PathOr3Pan}
\begin{enumerate}
\item
\label{InheritanceOf(omega,Sigma)-convexityCycle-complete}
$G$ is cycle-complete.
\item
\label{InheritanceOf(omega,Sigma)-convexityNo4PathOr3Pan}
There is no restricted subgraph that is a $4$-path or a $3$-pan.
\end{enumerate}
\item
\label{(omega,Sigma)-ConvexityIfAndOnlyIfCompleteGraphOrStar}
$G$ is a complete graph or a star.
\end{enumerate}
\end{theorem}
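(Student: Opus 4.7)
The plan is to prove the chain $(3) \Rightarrow (2) \Rightarrow (3) \Rightarrow (1) \Rightarrow (2)$. The implication $(2) \Rightarrow (3)$ is the purely graph-theoretic lemma already present in \cite{Slikker1998}: a connected cycle-complete graph that avoids both the induced $4$-path and the induced $3$-pan must be a complete graph or a star. Since no weights intervene, this transfers verbatim. The converse $(3) \Rightarrow (2)$ is immediate: every cycle in a complete graph is complete, and any four vertices induce $K_4$, which is neither a $P_4$ nor a $3$-pan; a star is acyclic, and any four vertices of a star induce a subgraph containing at most one non-leaf, hence no triangle and no induced $P_4$.

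For $(3) \Rightarrow (1)$ I would verify preservation directly from the explicit form of $v^G$. When $G$ is complete, $v^G = v$ and there is nothing to prove. When $G$ is a star with center $c$, one has $v^G(S)=v(S)$ if $c \in S$ and $v^G(S)=\sum_{i\in S} v(\{i\})$ otherwise. Substituting these formulas into (\ref{eqDefinition(omega,Sigma)-convexity}) for a pair $S \subset T$ and splitting into three cases according to whether $c$ lies in $S$, in $T\setminus S$, or outside $T$, each case reduces either to the $(\omega,\Sigma)$-convexity of $v$ itself applied to a suitable subpair containing $c$, or, when the star disconnects $S$, to iterated superadditivity of $v$; the latter is available thanks to the corollary following Proposition~\ref{weak_superadditivity}, which applies since $\Sigma=\{N\}$. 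Because all players share the same priority, $\overline{\omega}_i^T = \omega_i > 0$ throughout, so no coefficient vanishes artificially and the weights merely re-scale inequalities that would also hold in the unweighted setting.

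The core difficulty lies in $(1) \Rightarrow (2)$, which I would prove contrapositively: if $G$ fails to be cycle-complete, or contains an induced $4$-path or $3$-pan, I must exhibit a supermodular (hence automatically $(\omega,\Sigma)$-convex) game $v$ whose restriction $v^G$ violates (\ref{eqDefinition(omega,Sigma)-convexity}) for \emph{every} positive weight vector $\omega$. The starting point is to adapt the counter-examples of \cite{Slikker1998}, each built as a sparse positive combination $v = \sum_R \alpha_R u_R$ of unanimity games whose supports are connected in $G$. Slikker works with equal weights and only needs the final sum to be negative; the main additional work here is to guarantee that the violation is \emph{sign-uniform} across all $\omega \in \mathbb{R}_{++}^N$. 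The key observation I would exploit is that, for the specific $S \subset T$ chosen around the obstructing subgraph, each individual mismatch $v^G(T)-v^G(T\setminus\{i\})-v^G(S)+v^G(S\setminus\{i\})$ is either zero or strictly negative, with at least one strict inequality: this is a purely combinatorial statement about how the connected components of $T$ split when $i$ is removed in $G$. Multiplying each such term by the positive weight $\omega_i$ preserves the sign, forcing the whole sum to be strictly negative regardless of $\omega$. The main obstacle is therefore producing, for each of the three obstructing configurations (a non-complete cycle, an induced $P_4$, an induced $3$-pan), a choice of $v$ and of $S \subset T$ achieving this sign-uniformity, which I would establish by direct case analysis on the four-vertex supports depicted in Figures~\ref{4-path} and~\ref{3-pan} (and on a minimal non-complete cycle).
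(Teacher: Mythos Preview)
Your handling of $(2)\Leftrightarrow(3)$ and of $(3)\Rightarrow(1)$ is fine and matches the paper's approach; in particular, the star case reduces to Slikker's argument combined with superadditivity, exactly as you describe.

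The genuine gap is in $(1)\Rightarrow(2)$. You propose to produce, for each obstruction, a \emph{supermodular} game $v$ whose restriction $v^G$ violates (\ref{eqDefinition(omega,Sigma)-convexity}). This works for the non-complete cycle (and indeed the paper uses the convex game $v(S)=|S|-1$ there), but it \emph{cannot} work for the $4$-path or the $3$-pan. Both of these graphs are cycle-complete (the $4$-path has no cycles; the only cycle in the $3$-pan is its triangle). By the result of \cite{NouwelandandBorm1991}, a cycle-complete graph preserves convexity, so for any supermodular $v$ the communication game $v^G$ is again supermodular, hence $(\omega,\Sigma)$-convex for \emph{every} weight vector. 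No convex game can ever serve as a counter-example on these two subgraphs.

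This is precisely why the paper does something more delicate for the $3$-pan and $4$-path: it constructs a game that is $(\omega,\Sigma)$-convex but \emph{not} convex, with parameters $X,Y,Z,\Theta$ that depend explicitly on $\omega$ (Example~\ref{Example3-pan}). Moreover, your ``sign-uniformity'' heuristic---that each individual mismatch be $\leq 0$---fails there: with $S=\{2,3,4\}$, $T=N$, the mismatch at player $4$ equals $X-1>0$, while those at players $2$ and $3$ equal $-1$. The game is engineered so that the \emph{weighted} sum $-\omega_2-\omega_3+\omega_4(X-1)=-\min(\omega_2,\omega_3)$ is negative, but this requires tuning $X$ to the weights. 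Verifying that the resulting weight-dependent game is itself $(\omega,\Sigma)$-convex is a substantial case analysis (carried out in Appendix~\ref{Append-Counter-Example-3-pan-Non-Valid-For-4-path}), and is the actual crux of this implication.
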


The equivalence between Conditions~\ref{InheritanceOf(omega,Sigma)-convexityCycle-completeNo4PathOr3Pan}
and~\ref{(omega,Sigma)-ConvexityIfAndOnlyIfCompleteGraphOrStar} in Theorem~\ref{characterization_singleton}
was already established in~\cite{Slikker1998}.

\begin{remark}
In fact, it is enough to establish the necessity of Condition~\ref{InheritanceOf(omega,Sigma)-convexityCycle-complete}
(resp. Condition~\ref{InheritanceOf(omega,Sigma)-convexityNo4PathOr3Pan})
in Theorem~\ref{characterization_singleton} for a graph that is exactly a non-complete cycle
(resp. a $4$-path or a $3$-pan).
For a graph containing one of these graphs as subgraph,
one can consider a game where only agents involved in the specific subgraph
contribute to the utility and all remaining agents are null players.
By Remark~\ref{remarkExtensionWithNullPlayers},
such a game is $(\omega,\Sigma)$-convex if its restriction to players in the non-complete cycle
(resp. $4$-path or $3$-pan)
is $(\omega,\Sigma)$-convex.
\end{remark}

Let us establish three intermediate results before proving Theorem~\ref{characterization_singleton}: a counter-example if the graph is not cycle-complete, a counter-example if the graph is a $3$-pan (or a $4$-path) and a proposition for stars.

\begin{example}[Weighted Non-complete cycle]
\label{ExampleNon-CompleteCycle}
\rm
Let $C = \lbrace 1, e_1, 2, e_2, \ldots,m, e_m, 1 \rbrace $ be a non-complete cycle in $G$
and let $l^*$ be a node of $C$. Let $j$ and $k$ be the neighbors of $l^*$ in~$C$ and let us assume $\lbrace j, k \rbrace \notin E$.
We consider the convex game $(N,v)$ defined by $v(S) = \vert S \vert -1$ for all $S \subseteq N$, $S \not= \emptyset$.
As $(N,v)$ is convex, it is also $(\omega, \Sigma)$-convex for any weight system $(\omega, \Sigma)$.
Let us consider $S = \lbrace j, l^*, k \rbrace$ and $T = V(C)$
as represented in Figure~\ref{figNon-CompleteCycle-a}.
\begin{figure}[!ht]
\centering
\begin{pspicture}(-.5,-.3)(1,1.8)
\tiny
\begin{psmatrix}[mnode=circle,colsep=0.4,rowsep=0.1]
			& {}	& {$j$}\\
{$1$}	&				&				&{$l^*$}\\
 & {$m$} 	& {$k$}
\psset{arrows=-, shortput=nab,labelsep={0.08}}
\tiny
\ncline{2,1}{3,2}
\ncline[linecolor=gray]{2,1}{3,3}
\ncline[linecolor=gray]{2,1}{1,3}
\ncline[linecolor=gray]{3,2}{2,4}
\ncline[linecolor=gray]{3,3}{1,2}
\ncline{2,1}{1,2}
\ncline{3,2}{3,3}
\ncline{1,2}{1,3}
\ncline[linecolor=red]{1,3}{2,4}
\ncline[linecolor=gray]{2,4}{1,2}
\ncline[linecolor=red]{2,4}{3,3}
\end{psmatrix}
\end{pspicture}
\caption{Non-complete cycle $C$, $\lbrace j,k \rbrace \notin E$.}
\label{figNon-CompleteCycle-a}
\end{figure}
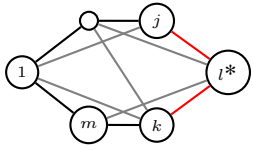
We get
\[
\sum_{i \in S} \omega_i (v^G(S) - v^G(S \setminus \lbrace i \rbrace)) = 
\omega_{j} + 2 \omega_{l^*} + \omega_{k}
>
\omega_{j} + \omega_{l^*} + \omega_{k} = 
\sum_{i \in S} \omega_i (v^G(T) - v^G(T \setminus \lbrace i \rbrace)).
\]
This contradicts $(\omega, \Sigma)$-convexity of the communication game $(N,v^G)$.
\end{example}

As a Corollary of this example, one recovers the necessity for the graph to be cycle-complete.
Indeed, if the graph is not cycle-complete, one can consider the smallest cycle that is not complete.
By minimality, there exists three nodes $l^*$, $j$ and $k$  such that $j$ and $k$ are neighbors of $l^*$ in~$C$ and $\lbrace j, k \rbrace \notin E$.
We choose the previous formulation since it will be closer to the general result with priorities.\\

The second counter-example shows that a graph preserving $(\omega, \Sigma)$-convexity cannot contain a $3$-pan subgraph. Our example is different and much more general than the one given by~\cite{Slikker1998} for average convexity since it needs to take into account the different weights. 
We present the counter-example here and highlights only that the communication game is not weighted average convex.
This counter-example will be also used later in a more general version when agents have different priorities.
Checking all the inequalities to establish $(\omega,\Sigma)$-convexity of the game

will be done in the Appendix
in the general case with priorities.

Moreover, we show thereafter that the game considered in this example also gives a contradiction to inheritance of  $(\omega, \Sigma)$-convexity if the graph contains a $4$-path.

\begin{example}[$3$-pan]
\label{Example3-pan}
\rm
Let us assume that there are only $4$ nodes denoted by $\{1,2,3,4\}$.
The graph $G$ is a $3$-pan such that $4$ has degree $2$ as represented in Figure~\ref{3-panWithw_4>Epsilon}.
\begin{figure}[!ht]
\begin{center}
\begin{tikzpicture}[scale=\ech,shorten >=1pt,auto,node distance=4cm,thick,main
 node/.style={circle,draw,font=\Large\bfseries}]
\node [draw,text width=0.5cm,text centered,circle,scale=\ech] (A) at (-4,-2) {$4$};
\node [draw,text width=0.5cm,text centered,circle,scale=\ech] (B) at (-4,2) {$1$};
\node [draw,text width=0.5cm,text centered,circle,scale=\ech] (C) at (-2,0) {$2$};
\node [draw,text width=0.5cm,text centered,circle,scale=\ech] (D) at (0,0) {$3$};
\draw[-,>=latex,color=red,scale=\ech] (A) to node[midway,color=red,scale=\ech,above] {}(B);
\draw[-,>=latex,color=red,scale=\ech] (A) to node[midway,color=red,scale=\ech,above] {}(C);
\draw[-,>=latex,color=red,scale=\ech] (B) to node[midway,color=red,scale=\ech,above] {}(C);
\draw[-,>=latex,color=red,scale=\ech] (C) to node[midway,color=red,scale=\ech,above] {}(D);
\end{tikzpicture}
\caption{}
\label{3-panWithw_4>Epsilon}
\end{center}
\end{figure}
We assume that all nodes have the same priority. Let us prove that there is no conservation of $(\omega,\Sigma)$-convexity.

Define
\begin{align*}
X & = \max \left(1+\frac{\omega_2}{\omega_4},1+\frac{\omega_3}{\omega_4}\right),\\
Y & =1+\frac{\omega_1}{\omega_4},\\
Z &=X+Y+1+\frac{\omega_1}{\omega_2+\omega_3+\omega_4}X, \\
\Theta & =Z+X-1.
\end{align*}
We consider the following game $(N,v)$ and the communication game $(N,v^G)$ induced by $(N,v)$ and the communication structure given by $G$. 
\[
\begin{array}{ccc}
v(S)=\begin{cases}
X \text{ if } S =\{1,4\} \text{ or } \{1,2,4\},\\
Y \text{ if } S=\{3,4\},\\
X+Y-1 \text{ if } S=\{1,3,4\},\\
Z \text { if } S=\{2,3,4\},\\
\Theta \text{ if } S=N,\\
0 \text{ otherwise}.
\end{cases}
&
&
v^G(S)=\begin{cases}
X \text{ if } S=\{1,4\} \text{ or } \{1,2,4\} ,\\
0 \text{ if } S=\{3,4\},\\
X \text{ if } S = \{1,3,4\},\\
Z \text { if } S=\{2,3,4\},\\
\Theta \text{ if } S=N\\
0 \text{ otherwise}.
\end{cases}
\end{array}
\]
We represent in Figure~\ref{LatticeFirstCounterExample-main}
the lattices of coalitions of $\lbrace 1, 2, 3, 4 \rbrace$ ordered by inclusion for $(N,v)$ and $(N,v^G)$
with the corresponding values.
\def\ech{.6}
\begin{figure}[!ht]
\begin{subfigure}[c]{0.45\textwidth}
\centering
\begin{center}
\begin{tikzpicture}[scale=\ech,shorten >=1pt,auto,node distance=3cm,thick,main
 node/.style={circle,draw,font=\Large\bfseries}]
\node [draw,text width=0.5cm,text centered,circle,scale=\ech,label=right:{\scriptsize $0$}] (V) at (0,-2) {$\emptyset$};
\node [draw,text width=0.5cm,text centered,circle,scale=\ech,label=right:{\scriptsize $0$}] (1) at (-3,0) {$1$};
\node [draw,text width=0.5cm,text centered,circle,scale=\ech,label=right:{\scriptsize $0$}] (2) at (-1,0) {$2$};
\node [draw,text width=0.5cm,text centered,circle,scale=\ech,label=right:{\scriptsize $0$}] (3) at (1,0) {$3$};
\node [draw,text width=0.5cm,text centered,circle,scale=\ech,label=right:{\scriptsize $0$}] (4) at (3,0) {$4$};
\node [draw,text width=0.5cm,text centered,circle,scale=\ech,label=right:{\scriptsize $0$}] (12) at (-5,2) {$12$};
\node [draw,text width=0.5cm,text centered,circle,scale=\ech,label=right:{\scriptsize $0$}] (13) at (-3,2) {$13$};
\node [draw,text width=0.5cm,text centered,circle,scale=\ech,label=right:{\scriptsize $X$}] (14) at (-1,2) {$14$};
\node [draw,text width=0.5cm,text centered,circle,scale=\ech,label=right:{\scriptsize $0$}] (23) at (1,2) {$23$};
\node [draw,text width=0.5cm,text centered,circle,scale=\ech,label=right:{\scriptsize $0$}] (24) at (3,2) {$24$};
\node [draw,text width=0.5cm,text centered,circle,scale=\ech,label=right:{\scriptsize $Y$}] (34) at (5,2) {$34$};
\node [draw,text width=0.5cm,text centered,circle,scale=\ech,label=right:{\scriptsize $0$}] (123) at (-3,4) {$123$};
\node [draw,text width=0.5cm,text centered,circle,scale=\ech,label=right:{\scriptsize $X$}] (124) at (-1,4) {$124$};
\node [draw,text width=0.5cm,text centered,circle,scale=\ech,label={[label distance=-.3cm]2:\tiny $\begin{array}{l}X+\\Y-1\end{array}$}] (134) at (1,4) {$134$};
\node [draw,text width=0.5cm,text centered,circle,scale=\ech,label=right:{\scriptsize $Z$}] (234) at (3,4) {$234$};
\node [draw,text width=0.5cm,text centered,circle,scale=\ech,label=right:{\scriptsize $\Theta$}] (1234) at (0,6) {$N$};
\draw[-,>=latex,color=red,scale=\ech] (V) to node[midway,color=red,scale=\ech,above] {}(1);
\draw[-,>=latex,color=red,scale=\ech] (V) to node[midway,color=red,scale=\ech,above] {}(2);
\draw[-,>=latex,color=red,scale=\ech] (V) to node[midway,color=red,scale=\ech,above] {}(3);
\draw[-,>=latex,color=red,scale=\ech] (V) to node[midway,color=red,scale=\ech,above] {}(4);
\draw[-,>=latex,color=red,scale=\ech] (1) to node[midway,color=red,scale=\ech,above] {}(12);
\draw[-,>=latex,color=red,scale=\ech] (1) to node[midway,color=red,scale=\ech,above] {}(13);
\draw[-,>=latex,color=red,scale=\ech] (1) to node[midway,color=red,scale=\ech,above] {}(14);
\draw[-,>=latex,color=red,scale=\ech] (2) to node[midway,color=red,scale=\ech,above] {}(12);
\draw[-,>=latex,color=red,scale=\ech] (2) to node[midway,color=red,scale=\ech,above] {}(23);
\draw[-,>=latex,color=red,scale=\ech] (2) to node[midway,color=red,scale=\ech,above] {}(24);
\draw[-,>=latex,color=red,scale=\ech] (3) to node[midway,color=red,scale=\ech,above] {}(13);
\draw[-,>=latex,color=red,scale=\ech] (3) to node[midway,color=red,scale=\ech,above] {}(23);
\draw[-,>=latex,color=red,scale=\ech] (3) to node[midway,color=red,scale=\ech,above] {}(34);
\draw[-,>=latex,color=red,scale=\ech] (4) to node[midway,color=red,scale=\ech,above] {}(14);
\draw[-,>=latex,color=red,scale=\ech] (4) to node[midway,color=red,scale=\ech,above] {}(24);
\draw[-,>=latex,color=red,scale=\ech] (4) to node[midway,color=red,scale=\ech,above] {}(34);
\draw[-,>=latex,color=red,scale=\ech] (12) to node[midway,color=red,scale=\ech,above] {}(123);
\draw[-,>=latex,color=red,scale=\ech] (12) to node[midway,color=red,scale=\ech,above] {}(124);
\draw[-,>=latex,color=red,scale=\ech] (13) to node[midway,color=red,scale=\ech,above] {}(123);
\draw[-,>=latex,color=red,scale=\ech] (13) to node[midway,color=red,scale=\ech,above] {}(134);
\draw[-,>=latex,color=red,scale=\ech] (14) to node[midway,color=red,scale=\ech,above] {}(124);
\draw[-,>=latex,color=red,scale=\ech] (14) to node[midway,color=red,scale=\ech,above] {}(134);
\draw[-,>=latex,color=red,scale=\ech] (23) to node[midway,color=red,scale=\ech,above] {}(123);
\draw[-,>=latex,color=red,scale=\ech] (23) to node[midway,color=red,scale=\ech,above] {}(234);
\draw[-,>=latex,color=red,scale=\ech] (24) to node[midway,color=red,scale=\ech,above] {}(124);
\draw[-,>=latex,color=red,scale=\ech] (24) to node[midway,color=red,scale=\ech,above] {}(234);
\draw[-,>=latex,color=red,scale=\ech] (34) to node[midway,color=red,scale=\ech,above] {}(134);
\draw[-,>=latex,color=red,scale=\ech] (34) to node[midway,color=red,scale=\ech,above] {}(234);
\draw[-,>=latex,color=red,scale=\ech] (1234) to node[midway,color=red,scale=\ech,above] {}(123);
\draw[-,>=latex,color=red,scale=\ech] (1234) to node[midway,color=red,scale=\ech,above] {}(124);
\draw[-,>=latex,color=red,scale=\ech] (1234) to node[midway,color=red,scale=\ech,above] {}(134);
\draw[-,>=latex,color=red,scale=\ech] (1234) to node[midway,color=red,scale=\ech,above] {}(234);
\end{tikzpicture}
\end{center}
\caption{$(N,v)$.}
\label{LatticeFirstCounterExample-main-1}
\end{subfigure}
\hfill
\begin{subfigure}[c]{0.45\textwidth}
\centering
\begin{center}
\begin{tikzpicture}[scale=\ech,shorten >=1pt,auto,node distance=3cm,thick,main
 node/.style={circle,draw,font=\Large\bfseries}]
\node [draw,text width=0.5cm,text centered,circle,scale=\ech,label=right:{\scriptsize $0$}] (V) at (0,-2) {$\emptyset$};
\node [draw,text width=0.5cm,text centered,circle,scale=\ech,label=right:{\scriptsize $0$}] (1) at (-3,0) {$1$};
\node [draw,text width=0.5cm,text centered,circle,scale=\ech,label=right:{\scriptsize $0$}] (2) at (-1,0) {$2$};
\node [draw,text width=0.5cm,text centered,circle,scale=\ech,label=right:{\scriptsize $0$}] (3) at (1,0) {$3$};
\node [draw,text width=0.5cm,text centered,circle,scale=\ech,label=right:{\scriptsize $0$}] (4) at (3,0) {$4$};
\node [draw,text width=0.5cm,text centered,circle,scale=\ech,label=right:{\scriptsize $0$}] (12) at (-5,2) {$12$};
\node [draw,text width=0.5cm,text centered,circle,scale=\ech,label=right:{\scriptsize $0$}] (13) at (-3,2) {$13$};
\node [draw,text width=0.5cm,text centered,circle,scale=\ech,label=right:{\scriptsize $X$}] (14) at (-1,2) {$14$};
\node [draw,text width=0.5cm,text centered,circle,scale=\ech,label=right:{\scriptsize $0$}] (23) at (1,2) {$23$};
\node [draw,text width=0.5cm,text centered,circle,scale=\ech,label=right:{\scriptsize $0$}] (24) at (3,2) {$24$};
\node [draw,text width=0.5cm,text centered,circle,scale=\ech,label=right:{\scriptsize $0$}] (34) at (5,2) {$34$};
\node [draw,text width=0.5cm,text centered,circle,scale=\ech,label=right:{\scriptsize $0$}] (123) at (-3,4) {$123$};
\node [draw,text width=0.5cm,text centered,circle,scale=\ech,label=right:{\scriptsize $X$}] (124) at (-1,4) {$124$};
\node [draw,text width=0.5cm,text centered,circle,scale=\ech,label=right:{\scriptsize $X$}] (134) at (1,4) {$134$};
\node [draw,text width=0.5cm,text centered,circle,scale=\ech,label=right:{\scriptsize $Z$}] (234) at (3,4) {$234$};
\node [draw,text width=0.5cm,text centered,circle,scale=\ech,label=right:{\scriptsize $\Theta$}] (1234) at (0,6) {$N$};
\draw[-,>=latex,color=red,scale=\ech] (V) to node[midway,color=red,scale=\ech,above] {}(1);
\draw[-,>=latex,color=red,scale=\ech] (V) to node[midway,color=red,scale=\ech,above] {}(2);
\draw[-,>=latex,color=red,scale=\ech] (V) to node[midway,color=red,scale=\ech,above] {}(3);
\draw[-,>=latex,color=red,scale=\ech] (V) to node[midway,color=red,scale=\ech,above] {}(4);
\draw[-,>=latex,color=red,scale=\ech] (1) to node[midway,color=red,scale=\ech,above] {}(12);
\draw[-,>=latex,color=red,scale=\ech] (1) to node[midway,color=red,scale=\ech,above] {}(13);
\draw[-,>=latex,color=red,scale=\ech] (1) to node[midway,color=red,scale=\ech,above] {}(14);
\draw[-,>=latex,color=red,scale=\ech] (2) to node[midway,color=red,scale=\ech,above] {}(12);
\draw[-,>=latex,color=red,scale=\ech] (2) to node[midway,color=red,scale=\ech,above] {}(23);
\draw[-,>=latex,color=red,scale=\ech] (2) to node[midway,color=red,scale=\ech,above] {}(24);
\draw[-,>=latex,color=red,scale=\ech] (3) to node[midway,color=red,scale=\ech,above] {}(13);
\draw[-,>=latex,color=red,scale=\ech] (3) to node[midway,color=red,scale=\ech,above] {}(23);
\draw[-,>=latex,color=red,scale=\ech] (3) to node[midway,color=red,scale=\ech,above] {}(34);
\draw[-,>=latex,color=red,scale=\ech] (4) to node[midway,color=red,scale=\ech,above] {}(14);
\draw[-,>=latex,color=red,scale=\ech] (4) to node[midway,color=red,scale=\ech,above] {}(24);
\draw[-,>=latex,color=red,scale=\ech] (4) to node[midway,color=red,scale=\ech,above] {}(34);
\draw[-,>=latex,color=red,scale=\ech] (12) to node[midway,color=red,scale=\ech,above] {}(123);
\draw[-,>=latex,color=red,scale=\ech] (12) to node[midway,color=red,scale=\ech,above] {}(124);
\draw[-,>=latex,color=red,scale=\ech] (13) to node[midway,color=red,scale=\ech,above] {}(123);
\draw[-,>=latex,color=red,scale=\ech] (13) to node[midway,color=red,scale=\ech,above] {}(134);
\draw[-,>=latex,color=red,scale=\ech] (14) to node[midway,color=red,scale=\ech,above] {}(124);
\draw[-,>=latex,color=red,scale=\ech] (14) to node[midway,color=red,scale=\ech,above] {}(134);
\draw[-,>=latex,color=red,scale=\ech] (23) to node[midway,color=red,scale=\ech,above] {}(123);
\draw[-,>=latex,color=red,scale=\ech] (23) to node[midway,color=red,scale=\ech,above] {}(234);
\draw[-,>=latex,color=red,scale=\ech] (24) to node[midway,color=red,scale=\ech,above] {}(124);
\draw[-,>=latex,color=red,scale=\ech] (24) to node[midway,color=red,scale=\ech,above] {}(234);
\draw[-,>=latex,color=red,scale=\ech] (34) to node[midway,color=red,scale=\ech,above] {}(134);
\draw[-,>=latex,color=red,scale=\ech] (34) to node[midway,color=red,scale=\ech,above] {}(234);
\draw[-,>=latex,color=red,scale=\ech] (1234) to node[midway,color=red,scale=\ech,above] {}(123);
\draw[-,>=latex,color=red,scale=\ech] (1234) to node[midway,color=red,scale=\ech,above] {}(124);
\draw[-,>=latex,color=red,scale=\ech] (1234) to node[midway,color=red,scale=\ech,above] {}(134);
\draw[-,>=latex,color=red,scale=\ech] (1234) to node[midway,color=red,scale=\ech,above] {}(234);
\end{tikzpicture}
\end{center}
\caption{$(N,v^G )$.}
\label{LatticeFirstCounterExample-main-2}
\end{subfigure}
\caption{}
\label{LatticeFirstCounterExample-main}
\end{figure}
Let us state a few facts.

\noindent
\underline{The game $(N,v)$ is not convex but is $(\omega,\Sigma)$-convex}

A detailed proof for the general case with priorities is provided in Appendix~\ref{Append-Counter-Example-3-pan-Non-Valid-For-4-path}.

\noindent
\underline{The communication game $(N,v^G)$ on the $3$-pan is not $(\omega,\Sigma)$-convex:}

Let us insist on the the coalitions whose value changed compared to the original game $(N,v)$
\begin{itemize}
\item the value of $\{3,4\}$ changes from $Y$ to $0$ hence loosing $Y$,
\item the value of $\{1,3,4\}$ changes from $X+Y-1$ to $X$ hence loosing  $Y-1$. 
\end{itemize}
The key element is to notice that the value of $\{1,3,4\}$ is loosing less than the value of $\{3,4\}$
which allows for a contradiction to $(\omega,\Sigma)$-convexity.
Let us prove that $(N,v^G)$ is not $(\omega,\Sigma)$-convex.
We consider $S=\{2,3,4\}$ and $T=N$:
\begin{eqnarray}
\sum_{i\in S} \omega_i (v^G(S)-v^G(S \setminus{i}))
& = & (\omega_2 + \omega_3 + \omega_4) Z,
\end{eqnarray}
whereas
\begin{eqnarray}
\sum_{i\in S} \omega_i (v^G(T)-v^G(T \setminus{i}))
& = & \omega_2 (\Theta-X) + \omega_3 (\Theta-X) + \omega_4 (\Theta-0),\nonumber\\
& = & (\omega_2 + \omega_3 + \omega_4) Z - \omega_2 - \omega_3 + \omega_4 (X-1).
\end{eqnarray}
By definition of $X$,
we have either $- \omega_2 + \omega_4 (X-1) = 0$ or $- \omega_3 + \omega_4 (X-1) = 0$.
As $\omega_2>0$ and $\omega_3>0$,
we get in both cases
\[
\sum_{i\in S} \omega_i (v^G(T)-v^G(T \setminus{i})) < \sum_{i\in S} \omega_i (v^G(S)-v^G(S \setminus{i})),
\]
a contradiction to $(\omega, \Sigma)$-convexity.
\end{example}

\begin{remark}
\label{RemarkExample-3-panAlsoValidFor-4-path}
Although Example~\ref{Example3-pan} is designed for a $3$-pan, it is also valid for a $4$-path.
Let us consider the $4$-path
corresponding to $\lbrace 1,4,2,3 \rbrace$
as represented in Figure~\ref{4-path1423}.
\begin{figure}[!ht]
\centering{
\begin{tikzpicture}[scale=\ech,shorten >=1pt,auto,node distance=3cm,thick,main
 node/.style={circle,draw,font=\Large\bfseries}]
\node [draw,text width=0.5cm,text centered,circle,scale=\ech] (A) at (-6,0) {$1$};
\node [draw,text width=0.5cm,text centered,circle,scale=\ech] (B) at (-4,0) {$4$};
\node [draw,text width=0.5cm,text centered,circle,scale=\ech] (C) at (-2,0) {$2$};
\node [draw,text width=0.5cm,text centered,circle,scale=\ech] (D) at (0,0) {$3$};
\draw[-,>=latex,color=red,scale=\ech] (A) to node[midway,color=red,scale=\ech,above] {}(B);
\draw[-,>=latex,color=red,scale=\ech] (B) to node[midway,color=red,scale=\ech,above] {}(C);
\draw[-,>=latex,color=red,scale=\ech] (C) to node[midway,color=red,scale=\ech,above] {}(D);
\end{tikzpicture}
}
\caption{$4$-path $\lbrace 1,4,2,3 \rbrace$.}
\label{4-path1423}
\end{figure}
In this case, the communication game is equal to the previous game $v^G$ associated with the $3$-pan.
Indeed, starting from a game $(N,v)$, the only differences between the communication games associated with the specified $3$-pan and $4$-path
are potentially the values of $\{1,2,3\}$ and $\lbrace 1,2 \rbrace$.
In the $3$-pan, it is equal to the value $v(\{1,2,3\}$ (resp. $v(\{1,2\}$)
whereas in the $4$-path it is equal to $v(\{1\})+v(\{2,3\})$ (resp. $v(\{1\})+v(\{2\})$).
By construction, these values are all equal to $0$ in Example~\ref{Example3-pan},
hence it is also a counter-example to inheritance
of $(\omega, \Sigma)$-convexity for a $4$-path.
\end{remark}

\begin{remark}
In order to obtain a counter-example for another $3$-pan or $4$-path, one exchange the role of the players.
\end{remark}

Finally, \cite{Slikker1998} established that there is always inheritance of average-convexity if the underlying graph is a star.
It can be easily seen that this result is more generally valid for $(\omega,\Sigma)$-convexity
if $\Sigma = \lbrace N \rbrace$.

\begin{proposition}
\label{PropStarGraphPreserves(omega,Sigma)-convexity}
A star graph preserves $(\omega,\Sigma)$-convexity for any weight system $(\omega,\Sigma)$ with $\Sigma = \lbrace N \rbrace$.
\end{proposition}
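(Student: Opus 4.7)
Let $c$ denote the center of the star $G=(N,E)$ and let $(N,v)$ be an $(\omega,\Sigma)$-convex game with $\Sigma=\{N\}$. Since all players share one priority, we have $\overline{\omega}^A_i=\omega_i$ for every $i\in A\subseteq N$, and by the Corollary immediately preceding this proposition, $v$ is superadditive. The key structural observation is that for any $A\subseteq N$,
\[
v^G(A)=\begin{cases} v(A) & \text{if } c\in A,\\ \sum_{j\in A}v(\{j\}) & \text{if } c\notin A.\end{cases}
\]
Fix $S\subset T\subseteq N$; the plan is to split the verification of $(\omega,\Sigma)$-convexity of $v^G$ on the pair $(S,T)$ into three cases according to the position of $c$.

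When $c\notin T$, every one of $S$, $T$, $S\setminus\{i\}$, $T\setminus\{i\}$ avoids $c$, so both marginal contributions $v^G(T)-v^G(T\setminus\{i\})$ and $v^G(S)-v^G(S\setminus\{i\})$ collapse to $v(\{i\})$ and the required sum is identically zero. When $c\in T\setminus S$, every $i\in S$ differs from $c$, so $v^G(T)-v^G(T\setminus\{i\})=v(T)-v(T\setminus\{i\})$ while $v^G(S)-v^G(S\setminus\{i\})=v(\{i\})$. The $i$-th summand is then $\omega_i\bigl(v(T)-v(T\setminus\{i\})-v(\{i\})\bigr)$, which is nonnegative by superadditivity of $v$ applied to $T\setminus\{i\}$ and $\{i\}$. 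These two cases are straightforward.

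The main case, and the one I expect to be the technical heart of the argument, is $c\in S$. Here the summands with $i\neq c$ involve $v$ on $S,T,S\setminus\{i\},T\setminus\{i\}$ exactly as in the original game, while the $i=c$ summand equals $\omega_c\bigl(v(T)-v(S)-\sum_{j\in T\setminus\{c\}}v(\{j\})+\sum_{j\in S\setminus\{c\}}v(\{j\})\bigr)$ because removing $c$ destroys all connections. Using $(\omega,\Sigma)$-convexity of $v$ applied to $S\subset T$ to lower-bound the sum over $i\in S\setminus\{c\}$, the total reduces to
\[
\omega_c\Bigl(\bigl(v(T\setminus\{c\})-\sum_{j\in T\setminus\{c\}}v(\{j\})\bigr)-\bigl(v(S\setminus\{c\})-\sum_{j\in S\setminus\{c\}}v(\{j\})\bigr)\Bigr).
\]
Setting $f(A):=v(A)-\sum_{j\in A}v(\{j\})$, it remains to show $f$ is nondecreasing on subsets. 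This follows from iterated superadditivity: for $A\subset B$, $v(B)\geq v(A)+v(B\setminus A)\geq v(A)+\sum_{j\in B\setminus A}v(\{j\})$, yielding $f(B)\geq f(A)$, and in particular $f(T\setminus\{c\})\geq f(S\setminus\{c\})$.

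Combining the three cases proves that $v^G$ satisfies the defining inequality of $(\omega,\Sigma)$-convexity for every pair $S\subset T\subseteq N$, so the star $G$ preserves $(\omega,\Sigma)$-convexity. The only delicate point is the bookkeeping in the $c\in S$ case, where the broken links at $c$ must be absorbed using both $(\omega,\Sigma)$-convexity of $v$ (for the non-central players) and superadditivity of $v$ (to handle the contribution of $c$ itself).
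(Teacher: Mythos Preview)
Your proof is correct and follows essentially the same route as the one the paper alludes to (the paper omits the argument, pointing to Slikker's Theorem~3.2 and noting that one simply replaces average convexity by $(\omega,\Sigma)$-convexity). Your three-case split according to the position of $c$, the use of $(\omega,\Sigma)$-convexity of $v$ on $S\subset T$ to absorb the non-central summands, and the reduction to monotonicity of $f(A)=v(A)-\sum_{j\in A}v(\{j\})$ via superadditivity are exactly the ingredients of that argument; this is also the structure the paper spells out in the more general priority setting in the proof of Theorem~\ref{characterization_hierarchy}.
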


We omit the proof of Proposition~\ref{PropStarGraphPreserves(omega,Sigma)-convexity}
as the result can be straightforwardly obtained from the proof given in~\cite{Slikker1998}\footnote{Theorem 3.2 in~\cite{Slikker1998}.}
replacing average convexity by $(\omega,\Sigma)$-convexity.

We can now prove Theorem~\ref{characterization_singleton}.

\begin{proof}[Proof of Theorem~\ref{characterization_singleton}]
\textbf{
\ref{(omega,Sigma)-ConvexityIfAndOnlyIfCompleteGraphOrStar} $\Rightarrow$ \ref{GPreserves(omega,Sigma)-convexity}
}
Let us consider a graph $G=(N,E)$ corresponding to a star or a complete graph and let $(N,v)$ be an $(\omega,\Sigma)$-convex game. 
If $G$ is complete then $v^G = v$ and $(N, v^G)$ is $(\omega,\Sigma)$-convex.
If $G$ is a star, then $(N, v^G)$ is also $(\omega,\Sigma)$-convex by~Proposition~\ref{PropStarGraphPreserves(omega,Sigma)-convexity}.

\textbf{
\ref{GPreserves(omega,Sigma)-convexity} $\Rightarrow$ \ref{InheritanceOf(omega,Sigma)-convexityCycle-completeNo4PathOr3Pan}
}
By the previous examples,
a graph preserving $(\omega,\Sigma)$-convexity has to be cycle-complete and it cannot contain any $4$-path or $3$-pan.

\textbf{
\ref{InheritanceOf(omega,Sigma)-convexityCycle-completeNo4PathOr3Pan} $\Rightarrow$ \ref{(omega,Sigma)-ConvexityIfAndOnlyIfCompleteGraphOrStar}
}
Following \cite{Slikker1998}\footnote{Lemmas~3.1 and 3.2 in \cite{Slikker1998}.},
we know that the only connected cycle-complete graphs that do not contain any $4$-path or $3$-pan
are the star and the complete graphs.
\end{proof}

As the characterization given in Theorem~\ref{characterization_singleton}
is the same established by~\cite{Slikker1998} for inheritance of average convexity, we get
that the class of graphs preserving $(\omega,\Sigma)$-convexity is equal to
the class of graphs preserving average convexity when $\Sigma = \lbrace N \rbrace$ .

\begin{corollary}
Let $G=(N,E)$ be a connected graph.
The following properties are equivalent:
\begin{enumerate}
\item
$G$ preserves average convexity.
\item
$G$ preserves $(\omega, \Sigma)$-convexity for any weight system with $\Sigma = \lbrace N \rbrace$.
\end{enumerate}
\end{corollary}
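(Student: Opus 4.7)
The plan is to reduce the corollary directly to Theorem~\ref{characterization_singleton} together with Slikker's characterization, observing that both characterizations identify the same class of connected graphs (stars and complete graphs).

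First, I would verify the easy implication (2) $\Rightarrow$ (1). Assume $G$ preserves $(\omega,\Sigma)$-convexity for every weight system with $\Sigma = \lbrace N \rbrace$. As noted in the paper just after the definition of $(\omega,\Sigma)$-convexity, when $\Sigma = \lbrace N \rbrace$ and $\omega$ is the constant vector $(1,\ldots,1)$, the factor $\overline{\omega}_i^{T} = 1$ for every $i \in T$, and inequality~(\ref{eqDefinition(omega,Sigma)-convexity}) specializes exactly to the inequality defining average convexity. Therefore $(\omega,\Sigma)$-convexity coincides with average convexity in this case, and preservation of $(\omega,\Sigma)$-convexity under $G$ automatically entails preservation of average convexity.

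For the reverse implication (1) $\Rightarrow$ (2), I would invoke Slikker's theorem, which states that a connected graph $G$ preserves average convexity if and only if $G$ is a complete graph or a star. Combining this with Theorem~\ref{characterization_singleton}, which establishes that a connected graph $G$ preserves $(\omega,\Sigma)$-convexity for a weight system with $\Sigma = \lbrace N \rbrace$ if and only if $G$ is a complete graph or a star, the two properties are characterized by the same class of graphs. Hence, if $G$ preserves average convexity, $G$ is complete or a star, and so by Theorem~\ref{characterization_singleton} it preserves $(\omega,\Sigma)$-convexity for \emph{every} weight system with $\Sigma = \lbrace N \rbrace$.

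There is no real obstacle here: once Theorem~\ref{characterization_singleton} is in place, the corollary is just the observation that the characterizing class does not depend on $\omega$. The only subtle point worth highlighting in the write-up is that Theorem~\ref{characterization_singleton} applies to an \emph{arbitrary} fixed $\omega \in \mathbb{R}_{++}^N$, so quantifying over all such $\omega$ in property (2) does not enlarge the class: already a single choice of $\omega$ (for instance the constant one) suffices to force $G$ to be a star or a complete graph, and any such $G$ conversely preserves $(\omega,\Sigma)$-convexity for all $\omega$.
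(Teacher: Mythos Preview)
Your proposal is correct and follows essentially the same route as the paper: the corollary is obtained by observing that Theorem~\ref{characterization_singleton} and Slikker's characterization identify the same class of connected graphs (stars and complete graphs), so the two preservation properties coincide. Your direct argument for $(2)\Rightarrow(1)$ via specialization to constant weights is a harmless addition, but the paper simply relies on the common characterization for both implications.
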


\section{Weighted convexity and Communication game: The case with several priorities}
\label{SectionWeightedConvexityAndCommunicationGameCaseWithSeveralPriorities}

The situation with several priorities is much more complex.
We will see that the previous conditions are also necessary in the presence of priorities but only for specific subgraphs of~$G$.
In particular, whereas in \cite{Slikker1998} all nodes have the same behavior, it is not the case anymore and we will need to precise the role of each node in each configuration.

We will start by presenting examples similar to Examples \ref{ExampleNon-CompleteCycle} and \ref{Example3-pan}
where priorities are now taken into account.
Then, we deduce from these examples some necessary conditions on the structure of the graphs.
Finally,
we provide a complete characterization in the special case of tree graphs satisfying
the additional condition that the priorities are inducing a hierarchy in the trees.

\subsection{Examples with priorities}

\subsubsection{Cycles and Chords}

The first example focuses on cycles.
We show that the neighbors of a node with maximal priority in any cycle have to be linked.

\begin{example}[Weighted Non-complete cycle with priorities]
\label{ExampleNon-CompleteCycle-priorities}
\rm
Let $C = \lbrace 1, e_1, 2, e_2, \ldots,m, e_m, 1 \rbrace $ be a non-complete cycle in $G$
and let $l^*$ be a node of $C$ satisfying
\[
p(l^*) = p(V(C)).
\]
Let $j$ and $k$ be the neighbors of $l^*$ in~$C$
and let us assume $\lbrace j, k \rbrace \notin E$.
We consider the convex game $(N,v)$ defined by $v(S) = \vert S \vert -1$ for all $S \subseteq N$, $S \not= \emptyset$.
As $(N,v)$ is convex, it is also $(\omega, \Sigma)$-convex for any weight system $(\omega, \Sigma)$.
Let us consider $S = \lbrace j, l^*, k \rbrace$ and $T = V(C)$
as represented in Figure~\ref{figNon-CompleteCycle-a-priorities}.
\begin{figure}[!ht]
\centering
\begin{pspicture}(-.5,-.3)(1,1.8)
\tiny
\begin{psmatrix}[mnode=circle,colsep=0.4,rowsep=0.1]
			& {}	& {$j$}\\
{$1$}	&				&				&{$l^*$}\\
 & {$m$} 	& {$k$}
\psset{arrows=-, shortput=nab,labelsep={0.08}}
\tiny
\ncline{2,1}{3,2}
\ncline[linecolor=gray]{2,1}{3,3}
\ncline[linecolor=gray]{2,1}{1,3}
\ncline[linecolor=gray]{3,2}{2,4}
\ncline[linecolor=gray]{3,3}{1,2}
\ncline{2,1}{1,2}
\ncline{3,2}{3,3}
\ncline{1,2}{1,3}
\ncline[linecolor=red]{1,3}{2,4}
\ncline[linecolor=gray]{2,4}{1,2}
\ncline[linecolor=red]{2,4}{3,3}
\end{psmatrix}
\end{pspicture}
\caption{Non-complete cycle $C$, $\lbrace j,k \rbrace \notin E$, $p(l^*)=p(V(C))$.}
\label{figNon-CompleteCycle-a-priorities}
\end{figure}
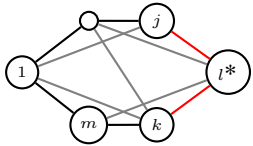
We get
\[
\sum_{i \in S} \omega_i^T (v^G(S) - v^G(S \setminus \lbrace i \rbrace)) = 
\omega_{j}^T + 2 \omega_{l^*}^T + \omega_{k}^T
>
\omega_{j}^T + \omega_{l^*}^T + \omega_{k}^T = 
\sum_{i \in S} \omega_i^T (v^G(T) - v^G(T \setminus \lbrace i \rbrace)).
\]
This contradicts $(\omega, \Sigma)$-convexity of the communication game $(N,v^G)$
as $p(l^*) = p(T)$.
\end{example}

\subsubsection{\texorpdfstring{$3$}{3}-pan}

The aim of this section is to exhibit a counter-example for the $3$-pan as depicted
in Figure~\ref{3-panWithw_4>Epsilon} under the following condition on the priorities:
\begin{equation}
\label{eqp(2)>=Max(p(1),p(4))>=p(3)}
p(2) \geq \max(p(1),p(4)) \geq p(3).
\end{equation}
The counter-example is the following.

\begin{example}[$3$-pan]
\label{Example3-pan-NonValidFor4-path}

\rm
Let us assume that there are only $4$ nodes denoted by $\{1,2,3,4\}$.
The graph is a $3$-pan such that $4$ has degree $2$ as represented in Figure~\ref{3-panWithw_4>Epsilon-bis}.
\def\ech{.5}
\begin{figure}[!ht]
\begin{center}
\begin{tikzpicture}[scale=\ech,shorten >=1pt,auto,node distance=4cm,thick,main
 node/.style={circle,draw,font=\Large\bfseries}]
\node [draw,text width=0.5cm,text centered,circle,scale=\ech] (A) at (-4,-2) {$4$};
\node [draw,text width=0.5cm,text centered,circle,scale=\ech] (B) at (-4,2) {$1$};
\node [draw,text width=0.5cm,text centered,circle,scale=\ech] (C) at (-2,0) {$2$};
\node [draw,text width=0.5cm,text centered,circle,scale=\ech] (D) at (0,0) {$3$};
\draw[-,>=latex,color=red,scale=\ech] (A) to node[midway,color=red,scale=\ech,above] {}(B);
\draw[-,>=latex,color=red,scale=\ech] (A) to node[midway,color=red,scale=\ech,above] {}(C);
\draw[-,>=latex,color=red,scale=\ech] (B) to node[midway,color=red,scale=\ech,above] {}(C);
\draw[-,>=latex,color=red,scale=\ech] (C) to node[midway,color=red,scale=\ech,above] {}(D);
\end{tikzpicture}
\caption{}
\label{3-panWithw_4>Epsilon-bis}
\end{center}
\end{figure}
We assume 
\[
p(2) \geq p(4),\ p(4)\geq p(1) \text{ and } p(4)\geq p(3).
\]
Let us prove that there is no conservation of $(\omega,\Sigma)$-convexity.
Let us define 
\begin{align*}
X & = \max \left(1+\frac{\omega_2}{\omega_4}, 1+\frac{\omega_3}{\omega_4} \right),\\
Y & = 1+\frac{\omega_1}{\omega_4},\\
Z & = X+2Y + (1-\alpha_p) \left\lbrack \frac{\omega_1}{\omega_2 + \omega_3 + \omega_4}X - \frac{\omega_1}{\omega_4}\right\rbrack, \\
\Theta &= Z+X-1,
\end{align*}
where
\[
\alpha_p =
\left\lbrace
\begin{array}{ll}
1 & \textrm{ if } p(2) = p(4) > p(3),\\
0 & \textrm{ otherwise}.
\end{array}
\right.
\]
We consider the following game $(N,v)$ and the communication game $(N,v^G)$ induced by $(N,v)$ and the communication structure given by $G$. 
\[
\begin{array}{ccc}
v(S)=
\begin{cases}
\alpha_p (X-1) \text{ if } S=\{1,2,3\},\\
X \text{ if } S=\{1,4\},\\
\alpha_p Y \text{ if } S = \{2,4\},\\
Y \text{ if } S = \{3,4\},\\
X+ \alpha_p(Y-1) \text{ if } S = \{1,2,4\},\\
X+Y-1 \text{ if } S = \{1,3,4\},\\
Z \text { if } S=\{2,3,4\},\\
\Theta \text{ if } S=N,\\
0 \text{ otherwise}.
\end{cases}
& &
v^G(S)=
\begin{cases}
\alpha_p (X-1) \text{ if } S=\{1,2,3\},\\
X \text{ if } S=\{1,4\},\\
\alpha_p Y \text{ if } S = \{2,4\},\\
0 \text{ if } S = \{3,4\},\\
X+ \alpha_p(Y-1) \text{ if } S = \{1,2,4\},\\
X \text{ if } S = \{1,3,4\},\\
Z \text { if } S=\{2,3,4\},\\
\Theta \text{ if } S=N,\\
0 \text{ otherwise}.
\end{cases}
\end{array}
\]
Let us note that this game corresponds exactly to the game given in Example~\ref{Example3-pan} if $\alpha_p = 0$.
We represent in Figure~\ref{LatticeFirstCounterExample}
the lattices of coalitions of $\lbrace 1, 2, 3, 4 \rbrace$ ordered by inclusion for $(N,v)$ and $(N,v^G)$
with the corresponding values.
Let us establish a few facts.
\def\ech{.6}
\begin{figure}[!ht]
\begin{subfigure}[c]{0.45\textwidth}
\begin{center}
\begin{tikzpicture}[scale=\ech,shorten >=1pt,auto,node distance=3cm,thick,main
 node/.style={circle,draw,font=\Large\bfseries}]
\node [draw,text width=0.5cm,text centered,circle,scale=\ech,label=right:{\scriptsize $0$}] (V) at (0,-2) {$\emptyset$};
\node [draw,text width=0.5cm,text centered,circle,scale=\ech,label=right:{\scriptsize $0$}] (1) at (-3,0) {$1$};
\node [draw,text width=0.5cm,text centered,circle,scale=\ech,label=right:{\scriptsize $0$}] (2) at (-1,0) {$2$};
\node [draw,text width=0.5cm,text centered,circle,scale=\ech,label=right:{\scriptsize $0$}] (3) at (1,0) {$3$};
\node [draw,text width=0.5cm,text centered,circle,scale=\ech,label=right:{\scriptsize $0$}] (4) at (3,0) {$4$};
\node [draw,text width=0.5cm,text centered,circle,scale=\ech,label=right:{\scriptsize $0$}] (12) at (-5,2) {$12$};
\node [draw,text width=0.5cm,text centered,circle,scale=\ech,label=right:{\scriptsize $0$}] (13) at (-3,2) {$13$};
\node [draw,text width=0.5cm,text centered,circle,scale=\ech,label=right:{\scriptsize $X$}] (14) at (-1,2) {$14$};
\node [draw,text width=0.5cm,text centered,circle,scale=\ech,label=right:{\scriptsize $0$}] (23) at (1,2) {$23$};
\node [draw,text width=0.5cm,text centered,circle,scale=\ech,label={[label distance=-.1cm]2:\scriptsize $\alpha_p Y$}](24) at (3,2) {$24$};
\node [draw,text width=0.5cm,text centered,circle,scale=\ech,label=right:{\scriptsize $Y$}] (34) at (5,2) {$34$};
\node [draw,text width=0.5cm,text centered,circle,scale=\ech,label=left:{\scriptsize $\alpha_p(X-1)$}] (123) at (-3,4) {$123$};
\node [draw,text width=0.5cm,text centered,circle,scale=\ech,label={[label distance=-.3cm]2:\tiny $\begin{array}{l}\\\\X+\\\alpha_p(Y-1)\end{array}$}] (124) at (-1,4) {$124$};
\node [draw,text width=0.5cm,text centered,circle,scale=\ech,label={[label distance=-.3cm]2:\tiny $\begin{array}{l}X+\\Y-1\end{array}$}] (134) at (1,4) {$134$};
\node [draw,text width=0.5cm,text centered,circle,scale=\ech,label=right:{\scriptsize $Z$}] (234) at (3,4) {$234$};
\node [draw,text width=0.5cm,text centered,circle,scale=\ech,label=right:{\scriptsize $\Theta$}] (1234) at (0,6) {$N$};
\draw[-,>=latex,color=red,scale=\ech] (V) to node[midway,color=red,scale=\ech,above] {}(1);
\draw[-,>=latex,color=red,scale=\ech] (V) to node[midway,color=red,scale=\ech,above] {}(2);
\draw[-,>=latex,color=red,scale=\ech] (V) to node[midway,color=red,scale=\ech,above] {}(3);
\draw[-,>=latex,color=red,scale=\ech] (V) to node[midway,color=red,scale=\ech,above] {}(4);
\draw[-,>=latex,color=red,scale=\ech] (1) to node[midway,color=red,scale=\ech,above] {}(12);
\draw[-,>=latex,color=red,scale=\ech] (1) to node[midway,color=red,scale=\ech,above] {}(13);
\draw[-,>=latex,color=red,scale=\ech] (1) to node[midway,color=red,scale=\ech,above] {}(14);
\draw[-,>=latex,color=red,scale=\ech] (2) to node[midway,color=red,scale=\ech,above] {}(12);
\draw[-,>=latex,color=red,scale=\ech] (2) to node[midway,color=red,scale=\ech,above] {}(23);
\draw[-,>=latex,color=red,scale=\ech] (2) to node[midway,color=red,scale=\ech,above] {}(24);
\draw[-,>=latex,color=red,scale=\ech] (3) to node[midway,color=red,scale=\ech,above] {}(13);
\draw[-,>=latex,color=red,scale=\ech] (3) to node[midway,color=red,scale=\ech,above] {}(23);
\draw[-,>=latex,color=red,scale=\ech] (3) to node[midway,color=red,scale=\ech,above] {}(34);
\draw[-,>=latex,color=red,scale=\ech] (4) to node[midway,color=red,scale=\ech,above] {}(14);
\draw[-,>=latex,color=red,scale=\ech] (4) to node[midway,color=red,scale=\ech,above] {}(24);
\draw[-,>=latex,color=red,scale=\ech] (4) to node[midway,color=red,scale=\ech,above] {}(34);
\draw[-,>=latex,color=red,scale=\ech] (12) to node[midway,color=red,scale=\ech,above] {}(123);
\draw[-,>=latex,color=red,scale=\ech] (12) to node[midway,color=red,scale=\ech,above] {}(124);
\draw[-,>=latex,color=red,scale=\ech] (13) to node[midway,color=red,scale=\ech,above] {}(123);
\draw[-,>=latex,color=red,scale=\ech] (13) to node[midway,color=red,scale=\ech,above] {}(134);
\draw[-,>=latex,color=red,scale=\ech] (14) to node[midway,color=red,scale=\ech,above] {}(124);
\draw[-,>=latex,color=red,scale=\ech] (14) to node[midway,color=red,scale=\ech,above] {}(134);
\draw[-,>=latex,color=red,scale=\ech] (23) to node[midway,color=red,scale=\ech,above] {}(123);
\draw[-,>=latex,color=red,scale=\ech] (23) to node[midway,color=red,scale=\ech,above] {}(234);
\draw[-,>=latex,color=red,scale=\ech] (24) to node[midway,color=red,scale=\ech,above] {}(124);
\draw[-,>=latex,color=red,scale=\ech] (24) to node[midway,color=red,scale=\ech,above] {}(234);
\draw[-,>=latex,color=red,scale=\ech] (34) to node[midway,color=red,scale=\ech,above] {}(134);
\draw[-,>=latex,color=red,scale=\ech] (34) to node[midway,color=red,scale=\ech,above] {}(234);
\draw[-,>=latex,color=red,scale=\ech] (1234) to node[midway,color=red,scale=\ech,above] {}(123);
\draw[-,>=latex,color=red,scale=\ech] (1234) to node[midway,color=red,scale=\ech,above] {}(124);
\draw[-,>=latex,color=red,scale=\ech] (1234) to node[midway,color=red,scale=\ech,above] {}(134);
\draw[-,>=latex,color=red,scale=\ech] (1234) to node[midway,color=red,scale=\ech,above] {}(234);
\end{tikzpicture}
\end{center}
\caption{$(N,v)$.}
\label{LatticeFirstCounterExample-1}
\end{subfigure}
\hfill
\begin{subfigure}[c]{0.45\textwidth}
\begin{center}
\begin{tikzpicture}[scale=\ech,shorten >=1pt,auto,node distance=3cm,thick,main
 node/.style={circle,draw,font=\Large\bfseries}]
\node [draw,text width=0.5cm,text centered,circle,scale=\ech,label=right:{\scriptsize $0$}] (V) at (0,-2) {$\emptyset$};
\node [draw,text width=0.5cm,text centered,circle,scale=\ech,label=right:{\scriptsize $0$}] (1) at (-3,0) {$1$};
\node [draw,text width=0.5cm,text centered,circle,scale=\ech,label=right:{\scriptsize $0$}] (2) at (-1,0) {$2$};
\node [draw,text width=0.5cm,text centered,circle,scale=\ech,label=right:{\scriptsize $0$}] (3) at (1,0) {$3$};
\node [draw,text width=0.5cm,text centered,circle,scale=\ech,label=right:{\scriptsize $0$}] (4) at (3,0) {$4$};
\node [draw,text width=0.5cm,text centered,circle,scale=\ech,label=right:{\scriptsize $0$}] (12) at (-5,2) {$12$};
\node [draw,text width=0.5cm,text centered,circle,scale=\ech,label=right:{\scriptsize $0$}] (13) at (-3,2) {$13$};
\node [draw,text width=0.5cm,text centered,circle,scale=\ech,label=right:{\scriptsize $X$}] (14) at (-1,2) {$14$};
\node [draw,text width=0.5cm,text centered,circle,scale=\ech,label=right:{\scriptsize $0$}] (23) at (1,2) {$23$};
\node [draw,text width=0.5cm,text centered,circle,scale=\ech,label={[label distance=-.1cm]2:\scriptsize $\alpha_p Y$}] (24) at (3,2) {$24$};
\node [draw,text width=0.5cm,text centered,circle,scale=\ech,label=right:{\scriptsize $0$}] (34) at (5,2) {$34$};
\node [draw,text width=0.5cm,text centered,circle,scale=\ech,label=left:{\scriptsize $\alpha_p(X-1)$}] (123) at (-3,4) {$123$};
\node [draw,text width=0.5cm,text centered,circle,scale=\ech,label={[label distance=-.3cm]2:\tiny $\begin{array}{l}\\\\X+\\\alpha_p(Y-1)\end{array}$}] (124) at (-1,4) {$124$};
\node [draw,text width=0.5cm,text centered,circle,scale=\ech,label=right:{\scriptsize $X$}] (134) at (1,4) {$134$};
\node [draw,text width=0.5cm,text centered,circle,scale=\ech,label=right:{\scriptsize $Z$}] (234) at (3,4) {$234$};
\node [draw,text width=0.5cm,text centered,circle,scale=\ech,label=right:{\scriptsize $\Theta$}] (1234) at (0,6) {$N$};
\draw[-,>=latex,color=red,scale=\ech] (V) to node[midway,color=red,scale=\ech,above] {}(1);
\draw[-,>=latex,color=red,scale=\ech] (V) to node[midway,color=red,scale=\ech,above] {}(2);
\draw[-,>=latex,color=red,scale=\ech] (V) to node[midway,color=red,scale=\ech,above] {}(3);
\draw[-,>=latex,color=red,scale=\ech] (V) to node[midway,color=red,scale=\ech,above] {}(4);
\draw[-,>=latex,color=red,scale=\ech] (1) to node[midway,color=red,scale=\ech,above] {}(12);
\draw[-,>=latex,color=red,scale=\ech] (1) to node[midway,color=red,scale=\ech,above] {}(13);
\draw[-,>=latex,color=red,scale=\ech] (1) to node[midway,color=red,scale=\ech,above] {}(14);
\draw[-,>=latex,color=red,scale=\ech] (2) to node[midway,color=red,scale=\ech,above] {}(12);
\draw[-,>=latex,color=red,scale=\ech] (2) to node[midway,color=red,scale=\ech,above] {}(23);
\draw[-,>=latex,color=red,scale=\ech] (2) to node[midway,color=red,scale=\ech,above] {}(24);
\draw[-,>=latex,color=red,scale=\ech] (3) to node[midway,color=red,scale=\ech,above] {}(13);
\draw[-,>=latex,color=red,scale=\ech] (3) to node[midway,color=red,scale=\ech,above] {}(23);
\draw[-,>=latex,color=red,scale=\ech] (3) to node[midway,color=red,scale=\ech,above] {}(34);
\draw[-,>=latex,color=red,scale=\ech] (4) to node[midway,color=red,scale=\ech,above] {}(14);
\draw[-,>=latex,color=red,scale=\ech] (4) to node[midway,color=red,scale=\ech,above] {}(24);
\draw[-,>=latex,color=red,scale=\ech] (4) to node[midway,color=red,scale=\ech,above] {}(34);
\draw[-,>=latex,color=red,scale=\ech] (12) to node[midway,color=red,scale=\ech,above] {}(123);
\draw[-,>=latex,color=red,scale=\ech] (12) to node[midway,color=red,scale=\ech,above] {}(124);
\draw[-,>=latex,color=red,scale=\ech] (13) to node[midway,color=red,scale=\ech,above] {}(123);
\draw[-,>=latex,color=red,scale=\ech] (13) to node[midway,color=red,scale=\ech,above] {}(134);
\draw[-,>=latex,color=red,scale=\ech] (14) to node[midway,color=red,scale=\ech,above] {}(124);
\draw[-,>=latex,color=red,scale=\ech] (14) to node[midway,color=red,scale=\ech,above] {}(134);
\draw[-,>=latex,color=red,scale=\ech] (23) to node[midway,color=red,scale=\ech,above] {}(123);
\draw[-,>=latex,color=red,scale=\ech] (23) to node[midway,color=red,scale=\ech,above] {}(234);
\draw[-,>=latex,color=red,scale=\ech] (24) to node[midway,color=red,scale=\ech,above] {}(124);
\draw[-,>=latex,color=red,scale=\ech] (24) to node[midway,color=red,scale=\ech,above] {}(234);
\draw[-,>=latex,color=red,scale=\ech] (34) to node[midway,color=red,scale=\ech,above] {}(134);
\draw[-,>=latex,color=red,scale=\ech] (34) to node[midway,color=red,scale=\ech,above] {}(234);
\draw[-,>=latex,color=red,scale=\ech] (1234) to node[midway,color=red,scale=\ech,above] {}(123);
\draw[-,>=latex,color=red,scale=\ech] (1234) to node[midway,color=red,scale=\ech,above] {}(124);
\draw[-,>=latex,color=red,scale=\ech] (1234) to node[midway,color=red,scale=\ech,above] {}(134);
\draw[-,>=latex,color=red,scale=\ech] (1234) to node[midway,color=red,scale=\ech,above] {}(234);
\end{tikzpicture}
\end{center}
\caption{$(N,v^G)$.}
\label{LatticeFirstCounterExample-2}
\end{subfigure}
\caption{}
\label{LatticeFirstCounterExample}
\end{figure}

\noindent
\underline{The game $(N,v)$ is not convex but is $(\omega,\Sigma)$-convex}

A detailed proof for the general case with priorities is provided in Appendix~\ref{Append-Counter-Example-3-pan-Non-Valid-For-4-path}.

\noindent
\underline{The communication game $(N,v^G)$ on the $3$-pan is not $(\omega,\Sigma)$-convex:}

Let us insist on the coalitions whose value changed compared to the original game $(N,v)$
\begin{itemize}
\item the value of $\{3,4\}$ changes from $Y$ to $0$ hence loosing $Y$,
\item the value of $\{1,3,4\}$ changes from $X+Y-1$ to $X$ hence loosing  $Y-1$. 
\end{itemize}
The key element is to notice that the value of $\{1,3,4\}$ is loosing less than the value of $\{3,4\}$
which allows for a contradiction to $(\omega,\Sigma)$-convexity.
Let us consider $S=\{2,3,4\}$ and $T=N$:
\begin{eqnarray}
\label{eqSumiInSOmegaiT(vG(S)-vG(S-i))}
\sum_{i\in S} \omega_i^T (v^G(S)-v^G(S \setminus{i})) & = & \omega_2^T (Z-0) + \omega_3^T (Z-\alpha_pY) + \omega_4^T (Z-0),
\end{eqnarray}
whereas
\begin{eqnarray}
\label{eqSumiInSOmegaiT(vG(T)-vG(T-i))}
\sum_{i\in S} \omega_i^T (v^G(T)-v^G(T \setminus{i}))
& = & \omega_2^T (\Theta-X) + \omega_3^T (\Theta-X-\alpha_p(Y-1)) + \omega_4^T (\Theta-\alpha_p(X-1)) \nonumber \\
& = & \omega_2^T (Z-1) + \omega_3^T (Z-\alpha_p Y - (1- \alpha_p))\nonumber \\
& & + \omega_4^T (Z + (1 -\alpha_p)(X-1)).
\end{eqnarray}
If $\alpha_p = 1$,
we get a contradiction as $2$ has maximal priority.
If $\alpha_p = 0$, then (\ref{eqSumiInSOmegaiT(vG(S)-vG(S-i))}) and (\ref{eqSumiInSOmegaiT(vG(T)-vG(T-i))}) are equivalent to
\begin{eqnarray}
\label{eqSumiInSOmegaiT(vG(S)-vG(S-i))-Alpha_p}
\sum_{i\in S} \omega_i^T (v^G(S)-v^G(S \setminus{i})) & = & (\omega_2^T + \omega_3^T + \omega_4^T) Z,
\end{eqnarray}
and
\begin{eqnarray}
\label{eqSumiInSOmegaiT(vG(T)-vG(T-i))-Alpha_p}
\sum_{i\in S} \omega_i^T (v^G(T)-v^G(T \setminus{i}))
& = & (\omega_2^T + \omega_3^T + \omega_4^T) Z - \omega_2^T - \omega_3^T + \omega_4^T (X-1).
\end{eqnarray}
Let us first assume $p(4) < p(N)$.
Then we have $\omega_4^T = 0$ and get a contradiction as $p(2) = p(N)$.
Let us now assume $p(4) = p(N)$.
By assumption and as $\alpha_p = 0$,
we also have $p(3) = p(N)$.
Then, by definition of $X$, we get
either
$- \omega_2^T + \omega_4^T (X-1)= - \omega_2 + \omega_4 (X-1) = 0$
or 
$- \omega_3^T + \omega_4^T (X-1)= - \omega_3 + \omega_4 (X-1) = 0$.
In both cases we still obtain a contradiction.
\end{example}

By symmetry
of $1$ and $4$,
interverting $1$ and $4$ in Example~\ref{Example3-pan-NonValidFor4-path},
we also get a contradiction if $\max(p(1),p(4)) = p(1)$.
Thus we have a contradiction on the $3$-pan
for any weight system satisfying (\ref{eqp(2)>=Max(p(1),p(4))>=p(3)}).

\begin{remark}
\label{remarkExtensionofExample5-panTo4-path}
Example~\ref{Example3-pan-NonValidFor4-path} can be extended to the $4$-path
if $\alpha_p = 0$ but not if $\alpha_p = 1$.
Let us consider the $4$-path
corresponding to $\lbrace 1,4,2,3 \rbrace$
as represented in Figure~\ref{4-path1423-2}.
Starting from a game $(N,v)$,
the only differences between the communication games associated with the specified $3$-pan and $4$-path
are the values of $\{1,2,3\}$ and $\lbrace 1,2 \rbrace$.
In the $4$-path,
the value of coalition $\{1,2,3\}$
(resp. $\lbrace 1,2 \rbrace$)
is equal to $v(\{1\})+v(\{2,3\}) = 0$
(resp. $v(\{1\})+v(\{2\}) = 0$).
(\ref{eqSumiInSOmegaiT(vG(S)-vG(S-i))}) is not modified and (\ref{eqSumiInSOmegaiT(vG(T)-vG(T-i))}) becomes
\begin{eqnarray}
\label{eqSumiInSOmegaiT(vG(T)-vG(T-i))-Alpha_p-2}
\sum_{i\in S} \omega_i^T (v^G(T)-v^G(T \setminus{i}))
& = & \omega_2^T (\Theta-X) + \omega_3^T (\Theta-X-\alpha_p(Y-1)) + \omega_4^T (\Theta-0) \nonumber \\
& = & \omega_2^T (Z-1) + \omega_3^T (Z-\alpha_p Y - (1- \alpha_p)) \nonumber\\
& & + \omega_4^T (Z+X-1).
\end{eqnarray}
If $\alpha_p = 0$,
then (\ref{eqSumiInSOmegaiT(vG(S)-vG(S-i))}) and (\ref{eqSumiInSOmegaiT(vG(T)-vG(T-i))-Alpha_p-2})
are equivalent to (\ref{eqSumiInSOmegaiT(vG(S)-vG(S-i))-Alpha_p}) and (\ref{eqSumiInSOmegaiT(vG(T)-vG(T-i))-Alpha_p})
and we get the same contradiction as before.
If $\alpha_p = 1$,
then we would need $\omega_2 > \omega_4(X-1)$ to get a contradiction.
This is not possible as $X \geq 1 + \frac{\omega_2}{\omega_4}$ by definition.
\end{remark}

\subsubsection{\texorpdfstring{$4$}{4}-path}

The aim of this section is to exhibit a counter-example for the $4$-path as represented
in Figure~\ref{4-path1423-2}
\begin{figure}[!ht]
\centering{
\begin{tikzpicture}[scale=\ech,shorten >=1pt,auto,node distance=3cm,thick,main
 node/.style={circle,draw,font=\Large\bfseries}]
\node [draw,text width=0.5cm,text centered,circle,scale=\ech] (A) at (-6,0) {$1$};
\node [draw,text width=0.5cm,text centered,circle,scale=\ech] (B) at (-4,0) {$4$};
\node [draw,text width=0.5cm,text centered,circle,scale=\ech] (C) at (-2,0) {$2$};
\node [draw,text width=0.5cm,text centered,circle,scale=\ech] (D) at (0,0) {$3$};
\draw[-,>=latex,color=red,scale=\ech] (A) to node[midway,color=red,scale=\ech,above] {}(B);
\draw[-,>=latex,color=red,scale=\ech] (B) to node[midway,color=red,scale=\ech,above] {}(C);
\draw[-,>=latex,color=red,scale=\ech] (C) to node[midway,color=red,scale=\ech,above] {}(D);
\end{tikzpicture}
}
\caption{$4$-path $\lbrace 1,4,2,3 \rbrace$.}
\label{4-path1423-2}
\end{figure}
under the following condition on the priorities:
\begin{equation}
\label{eqMin(p(2),p(4))>=Max(p(1),p(3))}
\min(p(2),p(4)) \geq \max(p(1),p(3)).
\end{equation}
Depending on the precise order of priorities, several counter-examples are necessary.
\begin{itemize}
\item
Assume that 
\[
p(2)> p(4) \geq \max(p(1),p(3)).
\]
By Remark~\ref{remarkExtensionofExample5-panTo4-path} and as $\alpha_p = 0$,
Example~\ref{Example3-pan-NonValidFor4-path} provides a counter-example.

\item Assume that 
\[
p(2) =p(4) =\max(p(1),p(3)).
\]
We have two cases that are similar up to permutation of the roles.
If $p(3) \geq p(1)$,
then Example~\ref{Example3-pan-NonValidFor4-path} still provides a counterexample
by Remark~\ref{remarkExtensionofExample5-panTo4-path} and as $\alpha_p = 0$.
If $p(1) > p(3)$,
one can intervert at the same time the role of $2$ and $4$ and the role of $1$ and $3$ to obtain a counterexample.
\item Finally, assume that
\[
p(2)= p(4) > \max(p(1),p(3)).
\]
One needs a new counter-example. Example~\ref{Example4-path-strict} will be our counter-example.
\end{itemize}
By symmetry of $2$ and $4$, one obtains a counterexample for any weight system satisfying~(\ref{eqMin(p(2),p(4))>=Max(p(1),p(3))}).

\begin{example}
\label{Example4-path-strict}
\rm
We consider the following game $(N,v)$ and the communication game $(N,v^G)$ induced by $(N,v)$ and the communication structure given by $G$. 
\[
v(S)=\begin{cases}
1 \text{ if } S \in
\{ \{1,4\}, \{3,4\}, \{1,2,4\},\{1,3,4\}, \{2,3,4\}, N  \},\\
0 \text{ otherwise}.
\end{cases}
\]
\[
v^G(S)=\begin{cases}
1 \text{ if } S \in \{ \{1,4\}, \{1,2,4\}, \{1,3,4\}, \{2,3,4\}, N  \},\\
0 \text{ otherwise}.
\end{cases}
\]
We represent in Figure~\ref{LatticeSecondCounterExample-2} the lattices of coalitions of $\lbrace 1, 2, 3, 4 \rbrace$ ordered by inclusion
for $(N,v)$ and $(N,v^G)$ with the corresponding values.
\begin{figure}[!ht]
\begin{subfigure}[c]{0.45\textwidth}
\begin{center}
\begin{tikzpicture}[scale=\ech,shorten >=1pt,auto,node distance=3cm,thick,main
 node/.style={circle,draw,font=\Large\bfseries}]
\node [draw,text width=0.5cm,text centered,circle,scale=\ech,label=right:{\scriptsize $0$}] (V) at (0,-2) {$\emptyset$};
\node [draw,text width=0.5cm,text centered,circle,scale=\ech,label=right:{\scriptsize $0$}] (1) at (-3,0) {$1$};
\node [draw,text width=0.5cm,text centered,circle,scale=\ech,label=right:{\scriptsize $0$}] (2) at (-1,0) {$2$};
\node [draw,text width=0.5cm,text centered,circle,scale=\ech,label=right:{\scriptsize $0$}] (3) at (1,0) {$3$};
\node [draw,text width=0.5cm,text centered,circle,scale=\ech,label=right:{\scriptsize $0$}] (4) at (3,0) {$4$};
\node [draw,text width=0.5cm,text centered,circle,scale=\ech,label=right:{\scriptsize $0$}] (12) at (-5,2) {$12$};
\node [draw,text width=0.5cm,text centered,circle,scale=\ech,label=right:{\scriptsize $0$}] (13) at (-3,2) {$13$};
\node [draw,text width=0.5cm,text centered,circle,scale=\ech,label=right:{\scriptsize $1$}] (14) at (-1,2) {$14$};
\node [draw,text width=0.5cm,text centered,circle,scale=\ech,label=right:{\scriptsize $0$}] (23) at (1,2) {$23$};
\node [draw,text width=0.5cm,text centered,circle,scale=\ech,label=right:{\scriptsize $0$}] (24) at (3,2) {$24$};
\node [draw,text width=0.5cm,text centered,circle,scale=\ech,label=right:{\scriptsize $1$}] (34) at (5,2) {$34$};
\node [draw,text width=0.5cm,text centered,circle,scale=\ech,label=right:{\scriptsize $0$}] (123) at (-3,4) {$123$};
\node [draw,text width=0.5cm,text centered,circle,scale=\ech,label=right:{\scriptsize $1$}] (124) at (-1,4) {$124$};
\node [draw,text width=0.5cm,text centered,circle,scale=\ech,label=right:{\scriptsize $1$}] (134) at (1,4) {$134$};
\node [draw,text width=0.5cm,text centered,circle,scale=\ech,label=right:{\scriptsize $1$}] (234) at (3,4) {$234$};
\node [draw,text width=0.5cm,text centered,circle,scale=\ech,label=right:{\scriptsize $1$}] (1234) at (0,6) {$N$};
\draw[-,>=latex,color=red,scale=\ech] (V) to node[midway,color=red,scale=\ech,above] {}(1);
\draw[-,>=latex,color=red,scale=\ech] (V) to node[midway,color=red,scale=\ech,above] {}(2);
\draw[-,>=latex,color=red,scale=\ech] (V) to node[midway,color=red,scale=\ech,above] {}(3);
\draw[-,>=latex,color=red,scale=\ech] (V) to node[midway,color=red,scale=\ech,above] {}(4);
\draw[-,>=latex,color=red,scale=\ech] (1) to node[midway,color=red,scale=\ech,above] {}(12);
\draw[-,>=latex,color=red,scale=\ech] (1) to node[midway,color=red,scale=\ech,above] {}(13);
\draw[-,>=latex,color=red,scale=\ech] (1) to node[midway,color=red,scale=\ech,above] {}(14);
\draw[-,>=latex,color=red,scale=\ech] (2) to node[midway,color=red,scale=\ech,above] {}(12);
\draw[-,>=latex,color=red,scale=\ech] (2) to node[midway,color=red,scale=\ech,above] {}(23);
\draw[-,>=latex,color=red,scale=\ech] (2) to node[midway,color=red,scale=\ech,above] {}(24);
\draw[-,>=latex,color=red,scale=\ech] (3) to node[midway,color=red,scale=\ech,above] {}(13);
\draw[-,>=latex,color=red,scale=\ech] (3) to node[midway,color=red,scale=\ech,above] {}(23);
\draw[-,>=latex,color=red,scale=\ech] (3) to node[midway,color=red,scale=\ech,above] {}(34);
\draw[-,>=latex,color=red,scale=\ech] (4) to node[midway,color=red,scale=\ech,above] {}(14);
\draw[-,>=latex,color=red,scale=\ech] (4) to node[midway,color=red,scale=\ech,above] {}(24);
\draw[-,>=latex,color=red,scale=\ech] (4) to node[midway,color=red,scale=\ech,above] {}(34);
\draw[-,>=latex,color=red,scale=\ech] (12) to node[midway,color=red,scale=\ech,above] {}(123);
\draw[-,>=latex,color=red,scale=\ech] (12) to node[midway,color=red,scale=\ech,above] {}(124);
\draw[-,>=latex,color=red,scale=\ech] (13) to node[midway,color=red,scale=\ech,above] {}(123);
\draw[-,>=latex,color=red,scale=\ech] (13) to node[midway,color=red,scale=\ech,above] {}(134);
\draw[-,>=latex,color=red,scale=\ech] (14) to node[midway,color=red,scale=\ech,above] {}(124);
\draw[-,>=latex,color=red,scale=\ech] (14) to node[midway,color=red,scale=\ech,above] {}(134);
\draw[-,>=latex,color=red,scale=\ech] (23) to node[midway,color=red,scale=\ech,above] {}(123);
\draw[-,>=latex,color=red,scale=\ech] (23) to node[midway,color=red,scale=\ech,above] {}(234);
\draw[-,>=latex,color=red,scale=\ech] (24) to node[midway,color=red,scale=\ech,above] {}(124);
\draw[-,>=latex,color=red,scale=\ech] (24) to node[midway,color=red,scale=\ech,above] {}(234);
\draw[-,>=latex,color=red,scale=\ech] (34) to node[midway,color=red,scale=\ech,above] {}(134);
\draw[-,>=latex,color=red,scale=\ech] (34) to node[midway,color=red,scale=\ech,above] {}(234);
\draw[-,>=latex,color=red,scale=\ech] (1234) to node[midway,color=red,scale=\ech,above] {}(123);
\draw[-,>=latex,color=red,scale=\ech] (1234) to node[midway,color=red,scale=\ech,above] {}(124);
\draw[-,>=latex,color=red,scale=\ech] (1234) to node[midway,color=red,scale=\ech,above] {}(134);
\draw[-,>=latex,color=red,scale=\ech] (1234) to node[midway,color=red,scale=\ech,above] {}(234);
\end{tikzpicture}
\end{center}
\caption{$(N,v)$.}
\label{LatticeSecondCounterExample-2-1}
\end{subfigure}
\hfill
\begin{subfigure}[c]{0.45\textwidth}
\begin{center}
\begin{tikzpicture}[scale=\ech,shorten >=1pt,auto,node distance=3cm,thick,main
 node/.style={circle,draw,font=\Large\bfseries}]
\node [draw,text width=0.5cm,text centered,circle,scale=\ech,label=right:{\scriptsize $0$}] (V) at (0,-2) {$\emptyset$};
\node [draw,text width=0.5cm,text centered,circle,scale=\ech,label=right:{\scriptsize $0$}] (1) at (-3,0) {$1$};
\node [draw,text width=0.5cm,text centered,circle,scale=\ech,label=right:{\scriptsize $0$}] (2) at (-1,0) {$2$};
\node [draw,text width=0.5cm,text centered,circle,scale=\ech,label=right:{\scriptsize $0$}] (3) at (1,0) {$3$};
\node [draw,text width=0.5cm,text centered,circle,scale=\ech,label=right:{\scriptsize $0$}] (4) at (3,0) {$4$};
\node [draw,text width=0.5cm,text centered,circle,scale=\ech,label=right:{\scriptsize $0$}] (12) at (-5,2) {$12$};
\node [draw,text width=0.5cm,text centered,circle,scale=\ech,label=right:{\scriptsize $0$}] (13) at (-3,2) {$13$};
\node [draw,text width=0.5cm,text centered,circle,scale=\ech,label=right:{\scriptsize $1$}] (14) at (-1,2) {$14$};
\node [draw,text width=0.5cm,text centered,circle,scale=\ech,label=right:{\scriptsize $0$}] (23) at (1,2) {$23$};
\node [draw,text width=0.5cm,text centered,circle,scale=\ech,label=right:{\scriptsize $0$}] (24) at (3,2) {$24$};
\node [draw,text width=0.5cm,text centered,circle,scale=\ech,label=right:{\scriptsize $0$}] (34) at (5,2) {$34$};
\node [draw,text width=0.5cm,text centered,circle,scale=\ech,label=right:{\scriptsize $0$}] (123) at (-3,4) {$123$};
\node [draw,text width=0.5cm,text centered,circle,scale=\ech,label=right:{\scriptsize $1$}] (124) at (-1,4) {$124$};
\node [draw,text width=0.5cm,text centered,circle,scale=\ech,label=right:{\scriptsize $1$}] (134) at (1,4) {$134$};
\node [draw,text width=0.5cm,text centered,circle,scale=\ech,label=right:{\scriptsize $1$}] (234) at (3,4) {$234$};
\node [draw,text width=0.5cm,text centered,circle,scale=\ech,label=right:{\scriptsize $1$}] (1234) at (0,6) {$N$};
\draw[-,>=latex,color=red,scale=\ech] (V) to node[midway,color=red,scale=\ech,above] {}(1);
\draw[-,>=latex,color=red,scale=\ech] (V) to node[midway,color=red,scale=\ech,above] {}(2);
\draw[-,>=latex,color=red,scale=\ech] (V) to node[midway,color=red,scale=\ech,above] {}(3);
\draw[-,>=latex,color=red,scale=\ech] (V) to node[midway,color=red,scale=\ech,above] {}(4);
\draw[-,>=latex,color=red,scale=\ech] (1) to node[midway,color=red,scale=\ech,above] {}(12);
\draw[-,>=latex,color=red,scale=\ech] (1) to node[midway,color=red,scale=\ech,above] {}(13);
\draw[-,>=latex,color=red,scale=\ech] (1) to node[midway,color=red,scale=\ech,above] {}(14);
\draw[-,>=latex,color=red,scale=\ech] (2) to node[midway,color=red,scale=\ech,above] {}(12);
\draw[-,>=latex,color=red,scale=\ech] (2) to node[midway,color=red,scale=\ech,above] {}(23);
\draw[-,>=latex,color=red,scale=\ech] (2) to node[midway,color=red,scale=\ech,above] {}(24);
\draw[-,>=latex,color=red,scale=\ech] (3) to node[midway,color=red,scale=\ech,above] {}(13);
\draw[-,>=latex,color=red,scale=\ech] (3) to node[midway,color=red,scale=\ech,above] {}(23);
\draw[-,>=latex,color=red,scale=\ech] (3) to node[midway,color=red,scale=\ech,above] {}(34);
\draw[-,>=latex,color=red,scale=\ech] (4) to node[midway,color=red,scale=\ech,above] {}(14);
\draw[-,>=latex,color=red,scale=\ech] (4) to node[midway,color=red,scale=\ech,above] {}(24);
\draw[-,>=latex,color=red,scale=\ech] (4) to node[midway,color=red,scale=\ech,above] {}(34);
\draw[-,>=latex,color=red,scale=\ech] (12) to node[midway,color=red,scale=\ech,above] {}(123);
\draw[-,>=latex,color=red,scale=\ech] (12) to node[midway,color=red,scale=\ech,above] {}(124);
\draw[-,>=latex,color=red,scale=\ech] (13) to node[midway,color=red,scale=\ech,above] {}(123);
\draw[-,>=latex,color=red,scale=\ech] (13) to node[midway,color=red,scale=\ech,above] {}(134);
\draw[-,>=latex,color=red,scale=\ech] (14) to node[midway,color=red,scale=\ech,above] {}(124);
\draw[-,>=latex,color=red,scale=\ech] (14) to node[midway,color=red,scale=\ech,above] {}(134);
\draw[-,>=latex,color=red,scale=\ech] (23) to node[midway,color=red,scale=\ech,above] {}(123);
\draw[-,>=latex,color=red,scale=\ech] (23) to node[midway,color=red,scale=\ech,above] {}(234);
\draw[-,>=latex,color=red,scale=\ech] (24) to node[midway,color=red,scale=\ech,above] {}(124);
\draw[-,>=latex,color=red,scale=\ech] (24) to node[midway,color=red,scale=\ech,above] {}(234);
\draw[-,>=latex,color=red,scale=\ech] (34) to node[midway,color=red,scale=\ech,above] {}(134);
\draw[-,>=latex,color=red,scale=\ech] (34) to node[midway,color=red,scale=\ech,above] {}(234);
\draw[-,>=latex,color=red,scale=\ech] (1234) to node[midway,color=red,scale=\ech,above] {}(123);
\draw[-,>=latex,color=red,scale=\ech] (1234) to node[midway,color=red,scale=\ech,above] {}(124);
\draw[-,>=latex,color=red,scale=\ech] (1234) to node[midway,color=red,scale=\ech,above] {}(134);
\draw[-,>=latex,color=red,scale=\ech] (1234) to node[midway,color=red,scale=\ech,above] {}(234);
\end{tikzpicture}
\end{center}
\caption{$(N,v^G)$.}
\label{LatticeSecondCounterExample-2-2}
\end{subfigure}
\caption{}
\label{LatticeSecondCounterExample-2}
\end{figure}

We establish the following facts under the assumption that $p(2)=p(4)>\max(p(1),p(3))$.

\noindent
\underline{The game $(N,v)$ is not convex but  $(\omega,\Sigma)$-convex}:
See the proof in Appendix \ref{proof-Example4-path-strict}.

\noindent
\underline{The communication game $(N,v^G)$ is not $(\omega,\Sigma)$-convex:}

Let us insist on the coalitions whose value changed compared to the original game $(N,v)$
\begin{itemize}
\item the value of $\{3,4\}$ changes from $1$ to $0$,
\item the value of $\{1,2,3\}$
(resp. $\{1,3,4\}$)
becomes equal to the value of $\{2,3\}$
(resp. $\{1,4\}$)
and therefore stays equal to $0$
(resp. $1$).
\end{itemize}

Let us prove that $(N,v^G)$ is not $(\omega,\Sigma)$-convex.
We consider $S=\{2,3,4\}$ and $T=N$:
\begin{equation}
\label{eqSumiInSOmegaiT(vG(S)-vG(S-i))-2}
\sum_{i\in S} \omega_i^T (v^G(S)-v^G(S \setminus{i})) = \omega_2^T + \omega_3^T + \omega_4^T, \nonumber
\end{equation}
whereas
\begin{equation}
\label{eqSumiInSOmegaiT(vG(T)-vG(T-i))-2}
\sum_{i\in S} \omega_i^T (v^G(T)-v^G(T \setminus{i})) =  \omega_2^T (1 - 1) + \omega_3^T (1 - 1) + \omega_4^T (1 - 0)
= \omega_4^T.\nonumber
\end{equation}
As $p(2)=p(N)>p(3)$, we have $\omega_3^T = 0$ but $\omega_2^T \not= 0$ 
implying a contradiction to $(\omega, \Sigma)$-convexity of $(N,v^G)$.
\end{example}

\subsection{Necessary conditions}

We now establish several necessary conditions based on the previous counter-examples and the result for the singleton partition.
Example \ref{ExampleNon-CompleteCycle-priorities} implies directly the following result.

\begin{lemma}
\label{Lemma-Weighted-Non-Complete-Cycle}
Let $(\omega,\Sigma)$ be a weight system and let $G=(N,E)$ be a graph preserving  $(\omega,\Sigma)$-convexity.
Let $C$ be a cycle in $G$ and let $j$ be a node of $C$.
Let $i$ and $k$ be neighbors of $j$ in~$C$.
If $p(j)=p(V(C))$
then $\lbrace i,k \rbrace \in E$.
\end{lemma}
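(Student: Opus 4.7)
The plan is to prove the contrapositive via a direct adaptation of Example~\ref{ExampleNon-CompleteCycle-priorities}. Assume that $\{i,k\} \notin E$; we must exhibit an $(\omega,\Sigma)$-convex game $(N,v)$ whose communication restriction $(N,v^G)$ fails $(\omega,\Sigma)$-convexity. As a candidate, take the convex game $v(S) = |S|-1$ for $S \neq \emptyset$ (and $v(\emptyset)=0$). Since convexity implies $(\omega,\Sigma)$-convexity for every weight system, $(N,v)$ is $(\omega,\Sigma)$-convex, so inheritance would force $(N,v^G)$ to be $(\omega,\Sigma)$-convex as well.

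Now test the inequality (\ref{eqDefinition(omega,Sigma)-convexity}) on $S := \{i,j,k\} \subset T := V(C)$. The induced subgraph on $S$ consists of the two edges $\{i,j\}, \{j,k\}$ (a path), so $S$ itself is $G$-connected and $v^G(S) = |S|-1 = 2$. Removing $j$, however, leaves $\{i,k\}$, which has no edge between its members by hypothesis, so $v^G(S \setminus \{j\}) = v(\{i\}) + v(\{k\}) = 0$; removing $i$ or $k$ leaves a connected edge, giving $v^G(S\setminus\{i\}) = v^G(S\setminus\{k\}) = 1$. On the side of $T$, removing any single node from a cycle leaves a connected path, so $v^G(T) = |T|-1$ and $v^G(T \setminus \{x\}) = |T|-2$ for each $x \in S$, i.e.\ every marginal contribution on the $T$-side equals $1$.

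Substituting into (\ref{eqDefinition(omega,Sigma)-convexity}) the $T$-marginals are $\overline{\omega}^T_i + \overline{\omega}^T_j + \overline{\omega}^T_k$, whereas the $S$-marginals are $\overline{\omega}^T_i + 2\overline{\omega}^T_j + \overline{\omega}^T_k$. The hypothesis $p(j) = p(V(C)) = p(T)$ gives $\overline{\omega}^T_j = \omega_j > 0$, so the $S$-side strictly exceeds the $T$-side and the $(\omega,\Sigma)$-convexity inequality for $(N,v^G)$ is violated. This contradicts the assumption that $G$ preserves $(\omega,\Sigma)$-convexity, proving $\{i,k\} \in E$.

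I expect no serious obstacle here: the argument is essentially a packaged reuse of Example~\ref{ExampleNon-CompleteCycle-priorities}, the only real content being (a) verifying that exactly one pair among $\{i,j\},\{j,k\},\{i,k\}$ is missing so that the jump $v^G(S) - v^G(S\setminus\{j\}) = 2$ appears, and (b) observing that $p(j) = p(T)$ is precisely what makes the decisive coefficient $\overline{\omega}^T_j$ nonzero (an arbitrary node of the cycle need not have this property, which is why the hypothesis is stated this way).
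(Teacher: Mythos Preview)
Your proof is correct and follows essentially the same approach as the paper, which simply states that the lemma is a direct consequence of Example~\ref{ExampleNon-CompleteCycle-priorities}. Your write-up is in fact more explicit than the paper's, correctly noting that the assumption $\{i,k\}\notin E$ forces $|V(C)|\ge 4$ so that $S\subsetneq T$, and that $p(j)=p(T)$ is exactly what makes the key coefficient $\overline{\omega}^T_j$ positive.
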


\begin{remark}
Let us note that if all nodes in $C$ have the same priority
then Lemma~\ref{Lemma-Weighted-Non-Complete-Cycle} implies
cycle-completeness of $C$.
\end{remark}

Based on the previous section on singleton partition,
a necessary condition is for each level of priority to be composed of disjoint complete subgraphs or stars.

\begin{proposition}
\label{Levelk-StarOrCompleteSubgraphs}
If a graph $(N,E)$ preserves the $(\omega,\Sigma)$-convexity, given a priority $k$, the set of players of priority $k$ corresponds to a collection of disconnected star/complete subgraphs.
\end{proposition}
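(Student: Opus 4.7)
The plan is to fix a priority level $k$, consider the subgraph $G_k := G[N_k]$ of $G$ induced by the players of priority $k$, and reduce the question to the already-established singleton case. Concretely, I will show by contradiction that every connected component of $G_k$ is a star or a complete graph; the conclusion of the proposition follows at once.

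Suppose some connected component $H$ of $G_k$ is neither a star nor a complete graph. By the equivalence between items~\ref{InheritanceOf(omega,Sigma)-convexityCycle-completeNo4PathOr3Pan} and~\ref{(omega,Sigma)-ConvexityIfAndOnlyIfCompleteGraphOrStar} in Theorem~\ref{characterization_singleton}, $H$ either contains a $4$-path, contains a $3$-pan, or fails to be cycle-complete. In the last case, $H$ contains a chordless (hence non-complete) cycle of length at least four. In all three cases, this furnishes a vertex set $V \subseteq N_k$ such that $G[V]$ is exactly a non-complete cycle, a $3$-pan, or a $4$-path. Indeed, since $V \subseteq N_k$, every edge of $G$ with both endpoints in $V$ is already an edge of $G_k$, so no extraneous edge can appear when one passes from $G_k$ to $G$.

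Next, I apply the relevant counter-example game $v_0$ defined on $V$: Example~\ref{ExampleNon-CompleteCycle-priorities} for the cycle case, Example~\ref{Example3-pan-NonValidFor4-path} with $\alpha_p=0$ for the $3$-pan (all vertices of $V$ share the priority $k$, so by its definition $\alpha_p$ vanishes), or Example~\ref{Example3-pan-NonValidFor4-path} adapted via Remark~\ref{remarkExtensionofExample5-panTo4-path} for the $4$-path. In each case $v_0$ is $(\omega,\Sigma)$-convex on $V$, while the associated communication game on $G[V]$ is not. I extend $v_0$ to a game $v$ on $N$ by setting $v(S) := v_0(S\cap V)$; every vertex of $N\setminus V$ is then a null player, and Remark~\ref{remarkExtensionWithNullPlayers} ensures that $(\omega,\Sigma)$-convexity of the extension $(N,v)$ reduces to $(\omega,\Sigma)$-convexity of its restriction to $V$, which is guaranteed by the chosen example.

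Finally, for any $S \subseteq V$ the connected components of $S$ in $G$ coincide with those in $G[V]$, so $v^G(S) = v_0^{G[V]}(S)$. Hence the pair $(S,T)$ with $S \subsetneq T \subseteq V$ that witnesses the failure of $(\omega,\Sigma)$-convexity in the stand-alone example continues to witness it in $(N, v^G)$, contradicting the assumption that $G$ preserves $(\omega,\Sigma)$-convexity. The main subtlety to watch is that $V$ must be chosen so that $G[V]$ equals (and is not a supergraph of) one of the three forbidden configurations; the containment $V \subseteq N_k$ is precisely what rules out the appearance of additional edges.
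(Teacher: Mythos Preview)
Your approach is essentially the paper's: locate inside $G_k$ one of the forbidden configurations from the singleton case (non-complete cycle, $4$-path, or $3$-pan), apply the corresponding counter-example game on those vertices, and extend to $N$ by null players via Remark~\ref{remarkExtensionWithNullPlayers}. The paper's proof is terser but follows exactly the same line; your extra care about $G[V]=G_k[V]$ and about $v^G|_V = v_0^{G[V]}$ just makes explicit what the paper leaves implicit.

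There is, however, one incorrect step. You assert that if $H$ fails cycle-completeness then it contains a \emph{chordless} cycle of length at least four. This is false: $K_4$ minus one edge is chordal (every induced cycle is a triangle) yet not cycle-complete (the $4$-cycle through all four vertices has a non-adjacent pair). Fortunately you do not need chordlessness. Example~\ref{ExampleNon-CompleteCycle-priorities} applies to \emph{any} non-complete cycle $C$, provided one can exhibit a vertex $l^*$ with $p(l^*)=p(V(C))$ whose two neighbours \emph{in the cycle} are non-adjacent in $G$; the computation with $S=\{j,l^*,k\}$ and $T=V(C)$ is unaffected by extra chords inside $V(C)$, since such chords only preserve the connectedness of $T$ and of $T\setminus\{i\}$. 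As the paper notes just after Example~\ref{ExampleNon-CompleteCycle}, a minimal non-complete cycle always contains such an $l^*$, and since $V(C)\subseteq N_k$ the priority condition $p(l^*)=p(V(C))$ is automatic. With this adjustment your argument is complete.
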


The proof relies on showing that there exists no non-complete cycle, no $3$-pan, and no $4$-path but not directly in the original graph but in a family of graphs constructed by taking into account the different priorities.

\begin{proof}
Fix a priority $k$ and $N_k$ the set of players in layer $k$. If the connected components of $G_k$ are not complete graphs or stars, then $G_k$ contains a subgraph $\tilde{G}_k$ corresponding to a non-complete cycle or a $4$-path or a $3$-pan.
We can consider one of the games defined in Examples~\ref{ExampleNon-CompleteCycle} or~\ref{Example3-pan}
and extend it to $N$ by taking all remaining players as null players.
By Remark~\ref{remarkExtensionWithNullPlayers}, the resulting game is $(\omega,\Sigma)$-convex.
As all nodes of $\tilde{G}_k$ belong to the same layer,
we still get a contradiction to $(\omega,\Sigma)$-convexity of the communication game
as in the case of a single partition.
\end{proof}

\begin{remark}
Let us insist that when restricting to the set of players in layer $k$, one can loose the connectivity.
Hence \emph{a priori} the set of players of priority $k$ may be composed of several stars and/or complete subgraphs.
\end{remark}

Moreover the connection between two such components in different layers depends on the type of the component of larger priority.

\begin{proposition}
\label{PlayerConnectedToAHigherLevel}
Let $(\omega,\Sigma)$ be a weight system and $(N,E)$ be a graph preserving the $(\omega,\Sigma)$-convexity.
If a player $s$ of priority $k$ is connected to a player $a$ in a layer with higher priority then
one of the following statements is satisfied
\begin{itemize}
\item
$a$ is part of a complete component $K$ in the graph restricted to its priority  and $s$ is then connected to each node of $K$,
\item
$a$ is the center of a star $S$ in the graph restricted to its priority and $s$ cannot be linked to any leaf of $S$ if $\vert V(S) \vert \geq 3$.
\end{itemize}
\end{proposition}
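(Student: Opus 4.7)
The plan is to combine Proposition~\ref{Levelk-StarOrCompleteSubgraphs} with the forbidden-subgraph results already established in this section. By Proposition~\ref{Levelk-StarOrCompleteSubgraphs}, the connected component $C$ containing $a$ in the subgraph induced by the priority layer $N_{p(a)}$ is either a complete subgraph or a star. I would split into the two cases of the claim, and argue each by contradiction: if the claimed structure fails, I would exhibit in $G$ a forbidden induced subgraph---either a $3$-pan or $4$-path whose priority labelling matches Example~\ref{Example3-pan-NonValidFor4-path} or Example~\ref{Example4-path-strict}, or a non-complete cycle violating Lemma~\ref{Lemma-Weighted-Non-Complete-Cycle}.

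For the complete-component case $C = K$, the nontrivial subcase is $|V(K)| \geq 3$ (when $|V(K)| \leq 2$, either bullet of the statement holds immediately, since a $K_2$ is also a $2$-vertex star). Assuming for contradiction that some $b \in V(K)$ is not linked to $s$, I would pick a third vertex $c \in V(K) \setminus \lbrace a,b \rbrace$; the induced subgraph on $\lbrace s,a,b,c \rbrace$ contains the triangle $\lbrace a,b,c \rbrace$ together with the edge $\lbrace s,a \rbrace$. If $\lbrace s,c \rbrace \notin E$, this is precisely a $3$-pan with apex $a$, triangle $\lbrace a,b,c \rbrace$ and pendant $s$; since $p(a)=p(b)=p(c) > p(s)$, the priority pattern matches the hypotheses of Example~\ref{Example3-pan-NonValidFor4-path}. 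Otherwise $\lbrace s,c \rbrace \in E$ and $\lbrace s,b \rbrace \notin E$, and then the $4$-cycle $s-a-b-c-s$ is non-complete while the maximum-priority vertex $a$ has non-adjacent neighbors $s$ and $b$, contradicting Lemma~\ref{Lemma-Weighted-Non-Complete-Cycle}.

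For the star case $C = S$ with $|V(S)| \geq 3$ and center $c_0$, I would first prove $a = c_0$. Assuming for contradiction that $a$ is a leaf, one picks another leaf $b$; the induced subgraph on $\lbrace s,a,c_0,b \rbrace$ always contains the path $s-a-c_0-b$ with $\lbrace a,b \rbrace \notin E$, and a case split on the presence of $\lbrace s,c_0 \rbrace$ and $\lbrace s,b \rbrace$ in $E$ yields in every subcase a $4$-path (handled by Example~\ref{Example4-path-strict} together with Remark~\ref{remarkExtensionofExample5-panTo4-path}), a $3$-pan, or a non-complete cycle carrying a maximum-priority vertex with non-adjacent neighbors. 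Once $a = c_0$ is fixed, assuming $s$ is linked to some leaf $\ell$ and picking another leaf $b$, the induced subgraph on $\lbrace s,a,\ell,b \rbrace$ is either a $3$-pan (apex $a$, triangle $\lbrace s,a,\ell \rbrace$, pendant $b$) if $\lbrace s,b \rbrace \notin E$, or the non-complete $4$-cycle $s-\ell-a-b-s$ with $a$ of maximal priority and non-adjacent neighbors $\ell$ and $b$ if $\lbrace s,b \rbrace \in E$; both are forbidden.

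The hard part is the bookkeeping of this case analysis. For each $4$-vertex configuration, several different induced subgraphs can arise depending on which ``extra'' edges incident to $s$ are actually present; in every subcase one has to verify that the resulting graph contains a forbidden pattern \emph{whose priority labelling matches} the hypotheses of Example~\ref{Example3-pan-NonValidFor4-path}, Example~\ref{Example4-path-strict}, or Lemma~\ref{Lemma-Weighted-Non-Complete-Cycle}. Correctly identifying which vertex of the forbidden template plays the apex, triangle or pendant role (respectively the inner vertex or the endpoint of a $4$-path), so that the priority inequalities required by those examples are met, is where the care is needed.
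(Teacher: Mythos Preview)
Your approach is essentially the same as the paper's: invoke Proposition~\ref{Levelk-StarOrCompleteSubgraphs} and then eliminate each bad configuration by locating in $G$ a $3$-pan, a $4$-path, or a non-complete cycle whose priority labelling triggers one of the established counter-examples, with the same case splits (the paper even leaves the ``$a$ is the center but $s$ is linked to a leaf'' part implicit, which you spell out). One small bookkeeping correction: in the $4$-path subcase $s$--$a$--$c_0$--$b$ arising when $a$ is a leaf and both $\{s,c_0\}$ and $\{s,b\}$ are absent, the endpoint $b$ has the \emph{same} priority as the inner vertices $a,c_0$, so Example~\ref{Example4-path-strict} (which requires the strict inequality $p(2)=p(4)>\max(p(1),p(3))$) does not apply---you need Example~\ref{Example3-pan-NonValidFor4-path} via Remark~\ref{remarkExtensionofExample5-panTo4-path} instead, exactly as the paper does (with $s$ in the role of~$1$).
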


\begin{remark}
Connected component with one or two nodes play special roles.
Indeed a connected component of size one is at the same time a star with no leaf and a complete component.
Similarly,
a connected component of size two is at the same time a star centered on one of the two nodes,
a star centered on the second one and a complete component.
\end{remark}

\begin{proof}
Let $s$ be a player of priority $k$ connected to a player $a$ of priority $k'>k$.
Let $C_a$ be the connected component of $G_{k'}$ containing $a$.
By Proposition~\ref{Levelk-StarOrCompleteSubgraphs}, it is necessarily a star or a complete component.
We can assume that $C_a$ contains at least three nodes otherwise Proposition~\ref{PlayerConnectedToAHigherLevel}
is trivially satisfied.

By contradiction,
let us assume that $a$ is not the center of a star and not part of a complete component.
Then $C_a$ is a star and $a$ is a leaf of this star.
Let $c$ be the center of $C_a$ and let $b \not= a$ be another leaf.
Then $\lbrace s,a,c,b \rbrace$ corresponds to a path in $G$
(but not necessarily a $4$-path as some more edges between nodes $s,a,c,b$ could exist).
As $C_a$ is a star, there is no edge linking $a$ and $b$.
We now distinguish several cases:
\begin{itemize}
\item
Let us assume $s$ non-connected to $b$.
If $s$ is not connected
(resp. connected)
to $c$
then $\lbrace s,a,c,b \rbrace$ corresponds to a $4$-path
(resp. $3$-pan)
as represented in Figure~\ref{fig3-panWithPriorityLevel-4-Path}
(resp. Figure~\ref{fig3-panWithPriorityLevel-3-Pan})
\begin{figure}[!ht]
\begin{subfigure}[c]{0.45\textwidth}
\centering
\begin{pspicture}(-1,-.3)(1,2)
\tiny
\begin{psmatrix}[mnode=circle,colsep=0.4,rowsep=0.8]
{$a$}	 	& {$c$} & {$b$}\\
{$s$}
\psset{arrows=-, shortput=nab,labelsep={0.08}}
\tiny
\ncline{1,1}{2,1}
\ncline{1,1}{1,2}
\ncline{1,2}{1,3}
\end{psmatrix}
\normalsize
\uput[0](-2.8,1.2){\textcolor{orange}{$k'$}}
\pspolygon[linecolor=orange,linearc=.4,linewidth=.02](-2.3,.8)(-2.3,1.6)(.2,1.6)(.2,.8)
\uput[0](-2.8,0){\textcolor{cyan}{$k$}}
\pspolygon[linecolor=cyan,linearc=.4,linewidth=.02](-2.3,-.4)(-2.3,.4)(.2,.4)(.2,-.4)
\end{pspicture}
\caption{$\lbrace s, c \rbrace \notin E$ and $\lbrace s, b \rbrace \notin E$.}
\label{fig3-panWithPriorityLevel-4-Path}
\end{subfigure}
\hfill
\begin{subfigure}[c]{0.45\textwidth}
\centering
\begin{pspicture}(-1,-.3)(1,2)
\tiny
\begin{psmatrix}[mnode=circle,colsep=0.4,rowsep=0.8]
{$a$}	 	& {$c$} & {$b$}\\
{$s$}
\psset{arrows=-, shortput=nab,labelsep={0.08}}
\tiny
\ncline{1,1}{2,1}
\ncline{1,1}{1,2}
\ncline{2,1}{1,2}
\ncline{1,2}{1,3}
\end{psmatrix}
\normalsize
\uput[0](-2.8,1.2){\textcolor{orange}{$k'$}}
\pspolygon[linecolor=orange,linearc=.4,linewidth=.02](-2.3,.8)(-2.3,1.6)(.2,1.6)(.2,.8)
\uput[0](-2.8,0){\textcolor{cyan}{$k$}}
\pspolygon[linecolor=cyan,linearc=.4,linewidth=.02](-2.3,-.4)(-2.3,.4)(.2,.4)(.2,-.4)
\end{pspicture}
\caption{$\lbrace s, b \rbrace \notin E$.}
\label{fig3-panWithPriorityLevel-3-Pan}
\end{subfigure}
\caption{$k<k'$.}
\label{fig3-panWithPriorityLevel}
\end{figure}
and one can use Example~\ref{Example3-pan-NonValidFor4-path} to get a contradiction ($s$ playing the role of $1$).

\item
Let us now assume $s$ connected to $b$.
If $s$ is not connected
(resp. connected)
to $c$,
then we have a non-complete cycle
as represented in Figure~\ref{figCycleWithPriorityLevel-1}
(resp. Figure~\ref{figCycleWithPriorityLevel-2})
\begin{figure}[!ht]
\begin{subfigure}[c]{0.45\textwidth}
\centering
\begin{pspicture}(-1,-.3)(1,2)
\tiny
\begin{psmatrix}[mnode=circle,colsep=0.4,rowsep=0.8]
{$a$}	 	& {$c$} & {$b$}\\
{$s$}
\psset{arrows=-, shortput=nab,labelsep={0.08}}
\tiny
\ncline{1,1}{2,1}
\ncline{1,1}{1,2}
\ncline{2,1}{1,3}
\ncline{1,2}{1,3}
\end{psmatrix}
\normalsize
\uput[0](-2.8,1.2){\textcolor{orange}{$k'$}}
\pspolygon[linecolor=orange,linearc=.4,linewidth=.02](-2.3,.8)(-2.3,1.6)(.2,1.6)(.2,.8)
\uput[0](-2.8,0){\textcolor{cyan}{$k$}}
\pspolygon[linecolor=cyan,linearc=.4,linewidth=.02](-2.3,-.4)(-2.3,.4)(.2,.4)(.2,-.4)
\end{pspicture}
\caption{$\lbrace s, c \rbrace \notin E$ and $\lbrace s, b \rbrace \in E$.}
\label{figCycleWithPriorityLevel-1}
\end{subfigure}
\hfill
\begin{subfigure}[c]{0.45\textwidth}
\centering
\begin{pspicture}(-1,-.3)(1,2)
\tiny
\begin{psmatrix}[mnode=circle,colsep=0.4,rowsep=0.8]
{$a$}	 	& {$c$} & {$b$}\\
{$s$}
\psset{arrows=-, shortput=nab,labelsep={0.08}}
\tiny
\ncline{1,1}{2,1}
\ncline{1,1}{1,2}
\ncline{2,1}{1,2}
\ncline{2,1}{1,3}
\ncline{1,2}{1,3}
\end{psmatrix}
\normalsize
\uput[0](-2.8,1.2){\textcolor{orange}{$k'$}}
\pspolygon[linecolor=orange,linearc=.4,linewidth=.02](-2.3,.8)(-2.3,1.6)(.2,1.6)(.2,.8)
\uput[0](-2.8,0){\textcolor{cyan}{$k$}}
\pspolygon[linecolor=cyan,linearc=.4,linewidth=.02](-2.3,-.4)(-2.3,.4)(.2,.4)(.2,-.4)
\end{pspicture}
\caption{$\lbrace s, c \rbrace \in E$ and $\lbrace s, b \rbrace \in E$.}
\label{figCycleWithPriorityLevel-2}
\end{subfigure}
\caption{$k<k'$.}
\label{figCycleWithPriorityLevel}
\end{figure}
and Example~\ref{ExampleNon-CompleteCycle-priorities} gives a contradiction considering $c$ as the key node with high priority.
\end{itemize}

Let us now turn to the case where $C_a$ is a complete component.
\begin{itemize}
\item
Assume that there is exactly one node in $C_a$ that is not linked to $s$ denoted by $t$. Let $b$ be a node of $C_a$ different from $a$ and $t$ as represented in Figure~\ref{figCompleteComponentWithPriorityLevel-1}.
Then $\lbrace s,a,t,b,s \rbrace$ corresponds to a non-complete cycle
(as $\lbrace s,t \rbrace \notin E$)
and Example~\ref{ExampleNon-CompleteCycle-priorities}
gives a contradiction with $a$ as the key node with high priority.
\begin{figure}[!ht]
\begin{subfigure}[c]{0.45\textwidth}
\centering
\begin{pspicture}(-1,-.3)(1,2.6)
\tiny
\begin{psmatrix}[mnode=circle,colsep=0.5,rowsep=0.5]
{$a$}	& {$t$}\\
{} & {$b$}\\
{$s$}
\psset{arrows=-, shortput=nab,labelsep={0.08}}
\tiny
\ncline{1,1}{2,1}
\ncline{1,1}{2,2}
\ncline{1,1}{1,2}
\ncline{2,1}{1,2}
\ncline{2,1}{2,2}
\ncline{1,2}{2,2}
\ncarc[arcangle=-35]{1,1}{3,1}
\ncline{3,1}{2,1}
\ncline{3,1}{2,2}
\end{psmatrix}
\normalsize
\uput[0](-2.4,1.2){\textcolor{orange}{$k'$}}
\pspolygon[linecolor=orange,linearc=.4,linewidth=.02](-1.8,.6)(-1.8,2.3)(.2,2.3)(.2,.6)
\uput[0](-2.4,0){\textcolor{cyan}{$k$}}
\pspolygon[linecolor=cyan,linearc=.4,linewidth=.02](-1.8,-.4)(-1.8,.4)(.2,.4)(.2,-.4)
\end{pspicture}
\caption{$\lbrace s, t \rbrace \notin E$.}
\label{figCompleteComponentWithPriorityLevel-1}
\end{subfigure}
\hfill
\begin{subfigure}[c]{0.45\textwidth}
\centering
\begin{pspicture}(-1,-.3)(1,2.6)
\tiny
\begin{psmatrix}[mnode=circle,colsep=0.5,rowsep=0.5]
{$a$}	& {$c$}\\
{} & {$b$}\\
{$s$}
\psset{arrows=-, shortput=nab,labelsep={0.08}}
\tiny
\ncline{1,1}{2,1}
\ncline{1,1}{2,2}
\ncline{1,1}{1,2}
\ncline{2,1}{1,2}
\ncline{2,1}{2,2}
\ncline{1,2}{2,2}
\ncarc[arcangle=-35]{1,1}{3,1}
\ncline[linestyle=dotted]{3,1}{2,1}
\end{psmatrix}
\normalsize
\uput[0](-2.4,1.2){\textcolor{orange}{$k'$}}
\pspolygon[linecolor=orange,linearc=.4,linewidth=.02](-1.8,.6)(-1.8,2.3)(.2,2.3)(.2,.6)
\uput[0](-2.4,0){\textcolor{cyan}{$k$}}
\pspolygon[linecolor=cyan,linearc=.4,linewidth=.02](-1.8,-.4)(-1.8,.4)(.2,.4)(.2,-.4)
\end{pspicture}
\caption{$\lbrace s, c \rbrace \notin E$ and $\lbrace s, b \rbrace \notin E$.}
\label{figCompleteComponentWithPriorityLevel-2}
\end{subfigure}
\caption{$k<k'$.}
\label{figCompleteComponentWithPriorityLevel}
\end{figure}

\item
Assume that there exist at least two nodes $b$ and $c$ in $C_a$ that are not linked to $s$
as represented in Figure~\ref{figCompleteComponentWithPriorityLevel-2}.
Then $\lbrace s,a,b,c \rbrace$ corresponds to a $3$-pan where $s$ has degree~$1$,
and Example~\ref{Example3-pan-NonValidFor4-path} gives a contradiction where $s$ plays the role of $3$ with a strictly lower priority.
\qedhere
\end{itemize}
\end{proof}

We now investigate how the presence of other layers impact what happens on a layer of priority $k$. We establish that if a graph preserves $(\omega,\Sigma)$-convexity, then there are important restrictions on connected components simultaneously linked to a connected component lying in a higher layer.

\begin{proposition}
\label{propC1andC2LinkedToCOfHigherLevelThenSingletons}
Let $(\omega,\Sigma)$ be a weight system with at least two priority levels and let $G=(N,E)$ be a graph preserving $(\omega,\Sigma)$-convexity.
Let $k, k', k''$ be priority levels with $k \leq k'< k''$.
Let $C_1$
(resp. $C_2$)
be a connected component of $G_k$
(resp. $G_{k'}$)
linked to a connected component $C$ of $G_{k''}$.
Then the following statements are satisfied:
\begin{enumerate}
\item
\label{itemPropC1AndC2LinkedToACommonnodeInC}
There exists a node $i$ in $C$ such that $C_1$ and $C_2$ are both linked to $i$.
\item
\label{itemPropNoLinkBetweenC1AndC2ThenC2Singleton}
If there is no link between $C_1$ and $C_2$ then
\begin{enumerate}
\item
\label{itemPropC2NecessarilySingleton}
$C_2$ is a singleton.
\item
\label{itemPropCIsAStar}
$C$ is a star and $C_2$ is linked to $C$ only at its center $c$.
\item
\label{itemPropC1LinkedOnlyToCenterOfC}
$C_1$ is linked to $C$ only at its center $c$.
\item
\label{itemPropC2CannotBeLinkedToConnectedComponentsOfALowerLayer}
$C_2$ cannot be linked to connected components of a lower layer.
\item
\label{itemPropNoPathLinkingC1ToC2NotPassingThroughi}
There is no path linking $C_1$ to $C_2$ that does not pass through $c$.
\end{enumerate}
\item
\label{itemPropIfk<k'}
If there exists a link between $C_1$ and $C_2$,
then we have $k < k'$ and there exists a node $j_1$
(resp. $j_2$)
in $C_1$
(resp. $C_2$)
such that $\lbrace i, j_1, j_2, i \rbrace$ is a triangle.
Moreover,
any component $C_3$ linked to $C$ belonging to a layer with priority $l \leq k'$ is linked to $j_1$ and $j_2$
and we necessarily have $l \notin \lbrace k,k'\rbrace$.
\end{enumerate}
\end{proposition}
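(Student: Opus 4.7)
The plan is to establish each sub-statement by constructing specific forbidden subgraph patterns ($3$-pan, $4$-path, or non-complete cycle) and invoking the previous counter-examples (Examples~\ref{ExampleNon-CompleteCycle-priorities}, \ref{Example3-pan-NonValidFor4-path}, \ref{Example4-path-strict}) together with Proposition~\ref{PlayerConnectedToAHigherLevel} and Lemma~\ref{Lemma-Weighted-Non-Complete-Cycle}. The overall argument is a case analysis driven by the priority structure $k \leq k' < k''$.

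For part~\ref{itemPropC1AndC2LinkedToACommonnodeInC}, I would pick $s_1 \in C_1$ and $s_2 \in C_2$ each linked to some node of $C$. When $|V(C)| \geq 3$, Proposition~\ref{PlayerConnectedToAHigherLevel} forces $C$ either to be complete (so $s_1$ and $s_2$ are each linked to every node of $C$) or to be a star whose center must receive both links; in either case a common neighbor exists, and $|V(C)|\leq 2$ is trivial.

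For part~\ref{itemPropNoLinkBetweenC1AndC2ThenC2Singleton}, I would assume no edge between $C_1$ and $C_2$ and let $i$ be a common neighbor from the previous step. To prove~(a), suppose $t \in C_2 \setminus \{s_2\}$ is adjacent to $s_2$; then $\{s_1, i, s_2, t\}$ induces a $4$-path if $\{i,t\} \notin E$ and a $3$-pan (with pendant $s_1$ attached to $i$) otherwise, with priority pattern $(k, k'', k', k')$ matching the hypotheses of Example~\ref{Example3-pan-NonValidFor4-path} (with $\alpha_p = 0$ after relabeling, so Remark~\ref{remarkExtensionofExample5-panTo4-path} covers the $4$-path case). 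Statements~(b) and~(c) then follow from Proposition~\ref{PlayerConnectedToAHigherLevel}: were $C$ complete, a second node $j \in C$ would yield the non-complete cycle $s_1, i, s_2, j$ whose maximal-priority vertex $i$ contradicts Lemma~\ref{Lemma-Weighted-Non-Complete-Cycle}, so $C$ is a star and all external links from $C_1 \cup C_2$ must hit its center $c = i$. For~(d), if $s_2$ were linked to some $s' \in C'$ at priority $l < k'$, the vertices $\{s_1, c, s_2, s'\}$ would form either a $4$-path, a $3$-pan, or a non-complete cycle depending on the presence of $\{c, s'\}$ and $\{s_1, s'\}$, each configuration reducible to Example~\ref{Example3-pan-NonValidFor4-path} or Lemma~\ref{Lemma-Weighted-Non-Complete-Cycle}. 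Statement~(e) follows because any detour from $C_1$ to $C_2$ avoiding $c$ would have to pass through a component forbidden by~(d) or its symmetric analogue.

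For part~\ref{itemPropIfk<k'}, the equality $k = k'$ would force $C_1$ and $C_2$ to coincide as connected components of $G_k = G_{k'}$, so necessarily $k < k'$. Picking a common neighbor $i \in C$ from~(\ref{itemPropC1AndC2LinkedToACommonnodeInC}), one shows that $i$ must be adjacent to both $j_1$ and $j_2$: otherwise, a non-complete cycle through $i$, $j_1$, $j_2$, and an auxiliary vertex of $C$ would arise, contradicting Lemma~\ref{Lemma-Weighted-Non-Complete-Cycle} since $i$ has the strictly highest priority inside it. A further component $C_3$ of priority $l \leq k'$ linked to $C$ must be adjacent to both $j_1$ and $j_2$ by the same cycle trick; and $l \notin \{k, k'\}$ because otherwise $C_3$ would share $j_1$ (if $l = k$) or $j_2$ (if $l = k'$) with $C_1$ or $C_2$, violating distinctness. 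The main obstacle throughout is the precise matching of edge configurations to the counter-examples: in~(2d) and the $C_3$ analysis up to four sub-cases arise, and one must select in each the correct relabeling and verify the priority hypothesis (in particular $\alpha_p = 0$ when invoking Remark~\ref{remarkExtensionofExample5-panTo4-path}).
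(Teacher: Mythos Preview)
Your overall architecture is right—each statement does reduce to forbidden-subgraph patterns handled by the earlier examples—but there are two genuine gaps.

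First, in part~\ref{itemPropC1AndC2LinkedToACommonnodeInC} you dismiss $|V(C)| \leq 2$ as trivial. It is not. When $|V(C)| = 2$, say $C = \{i_1, i_2\}$, Proposition~\ref{PlayerConnectedToAHigherLevel} permits $s_1$ to be linked only to $i_1$ and $s_2$ only to $i_2$ (a two-vertex component satisfies the star alternative there with either node as center). So one can have $\{s_1, i_1\}, \{i_1, i_2\}, \{i_2, s_2\} \in E$ with no common neighbor in $C$. The paper rules this out by observing that $\{s_1, i_1, i_2, s_2\}$ is then a $4$-path whose two interior vertices have strictly highest priority $k'' > \max(k, k')$; this is exactly the hypothesis of Example~\ref{Example4-path-strict}, which is invoked here and nowhere else in the proof. (If $\{s_1, s_2\} \in E$, the resulting $4$-cycle is killed by Lemma~\ref{Lemma-Weighted-Non-Complete-Cycle}.) The same oversight propagates into your handling of \ref{itemPropCIsAStar} and \ref{itemPropC1LinkedOnlyToCenterOfC}: for $|V(C)| = 2$ you still need a case analysis showing that neither $s_1$ nor $s_2$ can be adjacent to the non-center node, which the paper carries out via three explicit $3$-pan and cycle configurations.

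Second, your argument for~\ref{itemPropNoPathLinkingC1ToC2NotPassingThroughi} is incomplete. You claim any path avoiding $c$ must pass through a component forbidden by \ref{itemPropC2CannotBeLinkedToConnectedComponentsOfALowerLayer}, but that item only forbids $C_2$ from touching \emph{lower} layers; a detour could leave $C_2$ into a higher-priority component distinct from $C$, or into an intermediate layer between $k'$ and $k''$, and nothing you have established excludes that. The paper's argument is different and more robust: take a shortest path $P$ from $j_1$ to $j_2$ avoiding $c$, form the cycle $P \cup \{j_1, c, j_2\}$, and let $i'$ be a vertex of maximal priority on this cycle. If $i' = c$, Lemma~\ref{Lemma-Weighted-Non-Complete-Cycle} forces $\{j_1, j_2\} \in E$, contradicting the no-link hypothesis. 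If $i'$ lies on $P$ (and then $i' \notin \{j_1, j_2\}$ since $p(c) = k''$ already exceeds $k$ and $k'$), the same lemma produces a chord between the two neighbors of $i'$ in $P$, contradicting minimality of $P$.
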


\begin{proof}
Let us consider $C_1$ and $C_2$ linked to $C$.

\textbf{\ref{itemPropC1AndC2LinkedToACommonnodeInC}}
Let us assume that $C_1$ and $C_2$ are not linked to a common node in $C$.
Then, by Proposition~\ref{PlayerConnectedToAHigherLevel},
$C$ is necessarily a star of size $2$.
Let $i_1, i_2$ be the two distinct nodes of $C$ and let $j_1$
(resp. $j_2$)
be a node in $C_1$
(resp. $C_2$)
such that $\lbrace i_1, j_1 \rbrace \in E$
(resp. $\lbrace i_2, j_2 \rbrace \in E$)
as represented in Figure~\ref{figTwoConnectedComponentsLinkedToTwoNodesOfHigherPriority}.
\begin{figure}[!ht]
\centering
\begin{pspicture}(-1,-.3)(1,2)
\tiny
\begin{psmatrix}[mnode=circle,colsep=0.4,rowsep=0.5]
	& {$i_1$} & {$i_2$}\\
{$j_1$} & & & {$j_2$}
\psset{arrows=-, shortput=nab,labelsep={0.08}}
\tiny
\ncline{2,1}{1,2}
\ncline{1,3}{2,4}
\ncline{1,2}{1,3}
\ncline{2,3}{2,4}
\ncline[linestyle=dotted]{2,1}{2,4}
\end{psmatrix}
\normalsize
\uput[0](-3.4,1.2){\textcolor{red}{$k''$}}
\pspolygon[linecolor=red,linearc=.4,linewidth=.02](-2.8,.7)(-2.8,1.5)(-.8,1.5)(-.8,.7)
\uput[0](.6,.2){\textcolor{orange}{$k'$}}
\pspolygon[linecolor=orange,linearc=.4,linewidth=.02](-1.6,-.4)(-1.6,.5)(.6,.5)(.6,-.4)
\uput[0](-4.8,-.2){\textcolor{cyan}{$k$}}
\pspolygon[linecolor=cyan,linearc=.4,linewidth=.02](-4.2,-.5)(-4.2,.4)(-1.8,.4)(-1.8,-.5)
\end{pspicture}
\caption{$k \leq k' < k''$.}
\label{figTwoConnectedComponentsLinkedToTwoNodesOfHigherPriority}
\end{figure}
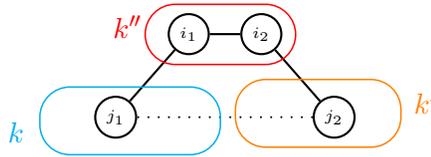
$\lbrace j_1, i_2 \rbrace$ and $\lbrace j_2, i_1\rbrace$ cannot exist otherwise
$C_1$ and $C_2$ would be linked to a common node in $C$.
If $\lbrace j_1, j_2 \rbrace \notin E$,
then $\lbrace j_1, i_1, i_2, j_2 \rbrace$ corresponds to a $4$-path where $i_1$ and $i_2$ have highest priority
and Example~\ref{Example4-path-strict} gives a contradiction to inheritance of $(\omega, \Sigma)$-convexity.
If $\lbrace j_1, j_2 \rbrace \in E$,
then $\lbrace j_1, i_1, i_2, j_2 \rbrace$ corresponds to a $4$-cycle where $i_1$ and $i_2$ have highest priority
and Example~\ref{ExampleNon-CompleteCycle-priorities}
implies $\lbrace j_1, i_2 \rbrace \in E$ and $\lbrace j_2, i_1\rbrace \in E$,
a contradiction.

\textbf{\ref{itemPropNoLinkBetweenC1AndC2ThenC2Singleton}}
We assume that there is no link between $C_1$ and $C_2$.

\textbf{\ref{itemPropC2NecessarilySingleton}}
By contradiction,
let us assume that $C_2$ is not a singleton.
Let $i$ be a node in $C$ and $j_1$
(resp. $j_2$)
be a node in $C_1$
(resp. $C_2$)
such that $\lbrace i,j_1 \rbrace \in E$
(resp. $\lbrace i,j_2 \rbrace \in E$).
As $\vert V(C_2) \vert \geq 2$,
there exists a node $l_2 \not= j_2$ in $C_2$ such that $\lbrace j_2, l_2 \rbrace \in E$.
If $\lbrace i, l_2 \rbrace \notin E$
(resp. $\lbrace i, l_2 \rbrace \in E$),
then there exists a $4$-path
(resp. $3$-pan)
as represented in Figure~\ref{figTwoConnectedComponentsLinkedToAnodeOfHigherPriority-1}
(resp. Figure~\ref{figTwoConnectedComponentsLinkedToAnodeOfHigherPriority-2})
\begin{figure}[!ht]
\begin{subfigure}[c]{0.45\textwidth}
\centering
\begin{pspicture}(-1,-.3)(1,2)
\tiny
\begin{psmatrix}[mnode=circle,colsep=0.4,rowsep=0.5]
	& {$i$}\\
{$j_1$} & & {$j_2$} & {$l_2$}
\psset{arrows=-, shortput=nab,labelsep={0.08}}
\tiny
\ncline{2,1}{1,2}
\ncline{1,2}{2,3}
\ncline{2,3}{2,4}
\end{psmatrix}
\normalsize
\uput[0](-3.4,1.2){\textcolor{red}{$k''$}}
\pspolygon[linecolor=red,linearc=.4,linewidth=.02](-2.8,.6)(-2.8,1.4)(-.8,1.4)(-.8,.6)
\uput[0](.6,.2){\textcolor{orange}{$k'$}}
\pspolygon[linecolor=orange,linearc=.4,linewidth=.02](-1.8,-.4)(-1.8,.5)(.6,.5)(.6,-.4)
\uput[0](-4.8,-.2){\textcolor{cyan}{$k$}}
\pspolygon[linecolor=cyan,linearc=.4,linewidth=.02](-4.2,-.5)(-4.2,.4)(-2.1,.4)(-2.1,-.5)
\end{pspicture}
\caption{$\lbrace i, l_2 \rbrace \notin E$.}
\label{figTwoConnectedComponentsLinkedToAnodeOfHigherPriority-1}
\end{subfigure}
\hfill
\begin{subfigure}[c]{0.45\textwidth}
\centering
\begin{pspicture}(-1,-.3)(1,2)
\tiny
\begin{psmatrix}[mnode=circle,colsep=0.4,rowsep=0.5]
& {$i$}\\
{$j_1$} & & {$j_2$} & {$l_2$}
\psset{arrows=-, shortput=nab,labelsep={0.08}}
\tiny
\ncline{2,1}{1,2}
\ncline{1,2}{2,3}
\ncline{2,3}{2,4}
\ncline{1,2}{2,4}
\end{psmatrix}
\normalsize
\uput[0](-3.4,1.2){\textcolor{red}{$k''$}}
\pspolygon[linecolor=red,linearc=.4,linewidth=.02](-2.8,.6)(-2.8,1.4)(-.8,1.4)(-.8,.6)
\uput[0](.6,.2){\textcolor{orange}{$k'$}}
\pspolygon[linecolor=orange,linearc=.4,linewidth=.02](-1.8,-.4)(-1.8,.5)(.6,.5)(.6,-.4)
\uput[0](-4.8,-.2){\textcolor{cyan}{$k$}}
\pspolygon[linecolor=cyan,linearc=.4,linewidth=.02](-4.2,-.5)(-4.2,.4)(-2.1,.4)(-2.1,-.5)
\end{pspicture}
\caption{$\lbrace i, l_2 \rbrace \in E$.}
\label{figTwoConnectedComponentsLinkedToAnodeOfHigherPriority-2}
\end{subfigure}
\caption{$k \leq k' < k''$.}
\label{figTwoConnectedComponentsLinkedToAnodeOfHigherPriority}
\end{figure}
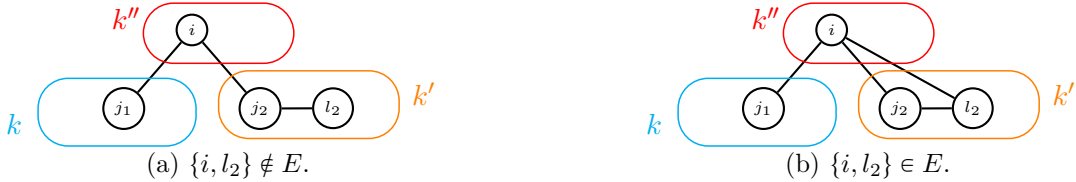
with $p(i) > p(j_2) = p(l_2) \geq p(j_1)$.
Then Example~\ref{Example3-pan-NonValidFor4-path} gives a contradiction to inheritance of $(\omega, \Sigma)$-convexity.
Thus $C_2$ is a singleton.

\textbf{\ref{itemPropCIsAStar}}
Let us prove that $C$ is necessarily a star and that $C_2$ is linked only to the center of $C$.
Let us first consider $\vert V(C) \vert = 2$.
Let us set $V(C) = \lbrace i_1, i_2 \rbrace$.
By Statement~\ref{itemPropC1AndC2LinkedToACommonnodeInC},
we can assume w.l.o.g. that $j_1$ and $j_2$ are both linked to $i_1$.
If $j_1$ is linked
(resp. non linked)
to $i_2$ and $j_2$ non linked
(resp. linked)
to $i_2$
as represented in Figure~\ref{figTwoSingletonsLinkedToAnodeOfHigherPriority-b}
(resp. Figure~\ref{figTwoSingletonsLinkedToAnodeOfHigherPriority-c})
then $\lbrace i_1, j_2, i_2, j_1 \rbrace$ corresponds to a $3$-pan and Example~\ref{Example3-pan-NonValidFor4-path}
gives a contradiction.
If $j_1$ and $j_2$ are linked to $i_2$ as represented in Figure~\ref{figTwoSingletonsLinkedToAnodeOfHigherPriority-a}
\begin{figure}[!ht]
\begin{subfigure}[t]{0.32\textwidth}
\centering
\begin{pspicture}(-1,-.3)(1,1.7)
\tiny
\begin{psmatrix}[mnode=circle,colsep=0.6,rowsep=0.5]
{$i_1$} & {$i_2$}\\
{$j_1$} & [mnode=R]{\framebox{$j_2$}}
\psset{arrows=-, shortput=nab,labelsep={0.08}}
\tiny
\ncline{1,1}{1,2}
\ncline{1,1}{2,1}
\ncline{1,1}{2,2}
\ncline{2,1}{1,2}
\end{psmatrix}
\normalsize
\uput[0](-2.4,1.2){\textcolor{red}{$k''$}}
\pspolygon[linecolor=red,linearc=.4,linewidth=.02](-1.8,.7)(-1.8,1.5)(.2,1.5)(.2,.7)
\uput[0](.7,.2){\textcolor{orange}{$k'$}}
\pspolygon[linecolor=orange,linearc=.4,linewidth=.02](-.8,-.35)(-.8,.45)(.7,.45)(.7,-.35)
\uput[0](-3.2,-.2){\textcolor{cyan}{$k$}}
\pspolygon[linecolor=cyan,linearc=.4,linewidth=.02](-2.6,-.4)(-2.6,.4)(-1,.4)(-1,-.4)
\end{pspicture}
\caption{$\lbrace j_1, i_2 \rbrace \in E$.}
\label{figTwoSingletonsLinkedToAnodeOfHigherPriority-b}
\end{subfigure}
\hfill
\begin{subfigure}[t]{0.32\textwidth}
\centering
\begin{pspicture}(-1,-.3)(1,1.7)
\tiny
\begin{psmatrix}[mnode=circle,colsep=0.6,rowsep=0.5]
{$i_1$} & {$i_2$}\\
{$j_1$} &  [mnode=R]{\framebox{$j_2$}}
\psset{arrows=-, shortput=nab,labelsep={0.08}}
\tiny
\ncline{1,1}{1,2}
\ncline{1,1}{2,1}
\ncline{1,2}{2,2}
\ncline{1,1}{2,2}
\end{psmatrix}
\normalsize
\uput[0](-2.4,1.2){\textcolor{red}{$k''$}}
\pspolygon[linecolor=red,linearc=.4,linewidth=.02](-1.8,.7)(-1.8,1.5)(.2,1.5)(.2,.7)
\uput[0](.7,.2){\textcolor{orange}{$k'$}}
\pspolygon[linecolor=orange,linearc=.4,linewidth=.02](-.8,-.35)(-.8,.45)(.7,.45)(.7,-.35)
\uput[0](-3.2,-.2){\textcolor{cyan}{$k$}}
\pspolygon[linecolor=cyan,linearc=.4,linewidth=.02](-2.6,-.4)(-2.6,.4)(-1,.4)(-1,-.4)
\end{pspicture}
\caption{$\lbrace j_2, i_2 \rbrace \in E$.}
\label{figTwoSingletonsLinkedToAnodeOfHigherPriority-c}
\end{subfigure}
\hfill
\begin{subfigure}[t]{0.32\textwidth}
\centering
\begin{pspicture}(-1,-.3)(1,1.7)
\tiny
\begin{psmatrix}[mnode=circle,colsep=0.6,rowsep=0.5]
{$i_1$} & {$i_2$}\\
{$j_1$} &  [mnode=R]{\framebox{$j_2$}}
\psset{arrows=-, shortput=nab,labelsep={0.08}}
\tiny
\ncline{1,1}{1,2}
\ncline{1,1}{2,1}
\ncline{1,2}{2,2}
\ncline{2,1}{1,2}
\ncline{1,1}{2,2}
\end{psmatrix}
\normalsize
\uput[0](-2.4,1.2){\textcolor{red}{$k''$}}
\pspolygon[linecolor=red,linearc=.4,linewidth=.02](-1.8,.7)(-1.8,1.5)(.2,1.5)(.2,.7)
\uput[0](.7,.2){\textcolor{orange}{$k'$}}
\pspolygon[linecolor=orange,linearc=.4,linewidth=.02](-.8,-.35)(-.8,.45)(.7,.45)(.7,-.35)
\uput[0](-3.2,-.2){\textcolor{cyan}{$k$}}
\pspolygon[linecolor=cyan,linearc=.4,linewidth=.02](-2.6,-.4)(-2.6,.4)(-1,.4)(-1,-.4)
\end{pspicture}
\caption{$\lbrace j_1, i_2 \rbrace \in E$ and $\lbrace j_2, i_2 \rbrace \in E$.}
\label{figTwoSingletonsLinkedToAnodeOfHigherPriority-a}
\end{subfigure}
\caption{$k \leq k' < k''$, $C_2$ singleton, $C_2 = \lbrace j_2 \rbrace$.}
\label{figTwoSingletonsLinkedToAnodeOfHigherPriority}
\end{figure}
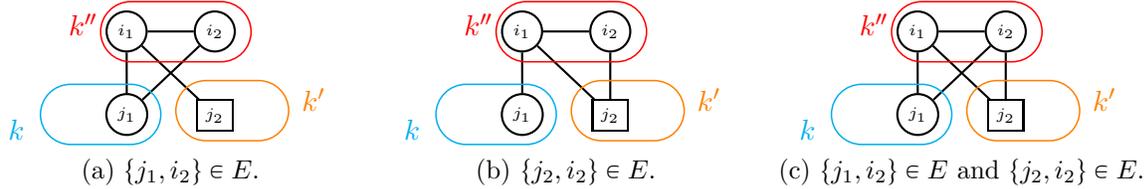
then $\lbrace i_1, j_2, i_2, j_1, i_1 \rbrace$ corresponds to a non-complete cycle and Example~\ref{ExampleNon-CompleteCycle-priorities}
gives a contradiction.
By considering that $C$ is a star with center $c = i_1$, we get the result.
Let us now consider $\vert V(C) \vert \geq 3$.
Let us assume that $C$ is a complete component.
By Proposition~\ref{PlayerConnectedToAHigherLevel},
$j_1$ and $j_2$ are linked to all nodes of $C$.
Let $i_1$ and $i_2$ be two nodes of $C$.
Then $\lbrace i_1, i_2, j_1, j_2 \rbrace$ corresponds to a non-complete cycle
as in Figure~\ref{figTwoSingletonsLinkedToAnodeOfHigherPriority-a}
and Example~\ref{ExampleNon-CompleteCycle-priorities} still gives a contradiction.
Finally,
Proposition~\ref{PlayerConnectedToAHigherLevel} implies that $C$ is a star and that $j_2$ is only linked to its center~$c$.

\textbf{\ref{itemPropC1LinkedOnlyToCenterOfC}}
Let us now prove that $C_1$ is linked to $C$ only at its center $c$.
If $\vert V(C) \vert \geq 3$,
then Proposition~\ref{PlayerConnectedToAHigherLevel} implies the result.
Let us assume $\vert V(C) \vert = 2$
and let $i_2 \not= c$ be the leaf node of~$C$.
If $j_1$ is linked to $i_2$,
then the situation is equivalent to the one represented in Figure~\ref{figTwoSingletonsLinkedToAnodeOfHigherPriority-b}
setting $i_1 = c$ and we get the same contradiction as before.
Let us assume $\vert V(C_1) \vert \geq 2$
and let us consider $j_3 \in V(C_1)$ with $j_3 \not= j_1$.
Let us assume $j_3$ linked to $i_2$
as represented in Figure~\ref{figTwoSingletonsLinkedToAnodeOfHigherPriority-2}.
\begin{figure}[!ht]
\centering
\begin{pspicture}(-1,-.3)(1,1.7)
\tiny
\begin{psmatrix}[mnode=circle,colsep=0.6,rowsep=0.5]
{$i_2$} & {$c$}\\
{$j_3$} & {$j_1$} &  [mnode=R]{\framebox{$j_2$}}
\psset{arrows=-, shortput=nab,labelsep={0.08}}
\tiny
\ncline{1,1}{1,2}
\ncline{1,2}{2,2}
\ncline{1,2}{2,3}
\ncline{2,1}{1,1}
\ncline[linestyle=dotted]{2,1}{2,2}
\end{psmatrix}
\normalsize
\uput[0](-3.7,1.2){\textcolor{red}{$k''$}}
\pspolygon[linecolor=red,linearc=.4,linewidth=.02](-3.1,.7)(-3.1,1.5)(-1.1,1.5)(-1.1,.7)
\uput[0](.7,.2){\textcolor{orange}{$k'$}}
\pspolygon[linecolor=orange,linearc=.4,linewidth=.02](-.8,-.35)(-.8,.45)(.7,.45)(.7,-.35)
\uput[0](-3.6,-.2){\textcolor{cyan}{$k$}}
\pspolygon[linecolor=cyan,linearc=.4,linewidth=.02](-3.1,-.4)(-3.1,.4)(-1,.4)(-1,-.4)
\end{pspicture}
\caption{$k \leq k' < k''$, $C_2$ singleton, $C_2 = \lbrace j_2 \rbrace$.}
\label{figTwoSingletonsLinkedToAnodeOfHigherPriority-2}
\end{figure}
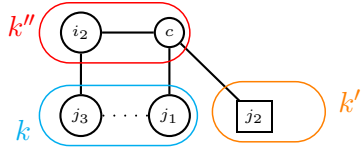
As $C_1$ is connected,
there exists a path $P$ linking $j_1$ to $j_3$.
Then $P \cup \lbrace j_3, i_2, c, j_1 \rbrace$ corresponds to a cycle.
As $p(c) = k''$, Example~\ref{ExampleNon-CompleteCycle-priorities} implies $\lbrace j_1, i_2 \rbrace \in E$, a contradiction.

\textbf{\ref{itemPropC2CannotBeLinkedToConnectedComponentsOfALowerLayer}}
Let us now prove that $C_2$ cannot be linked to components in lower layers.
By contradiction,
let $C_3$ be a connected component in a layer with priority lower than $k'$ linked to $C_2$.
Let $j_3$ be a node in $C_3$ such that $\lbrace j_2, j_3 \rbrace \in E$.
Let us first assume $\lbrace c, j_3 \rbrace \notin E$.
If $\lbrace j_1, j_3 \rbrace \in E$
(resp. $\lbrace j_1, j_3 \rbrace \notin E$),
then $\lbrace j_1, c, j_2, j_3 \rbrace$ corresponds to a non-complete cycle
(resp. a $4$-path)
as represented in Figure~\ref{figThreeConnectedComponentsLinkedToAnodeOfHigherPriority-1}
(resp. Figure~\ref{figThreeConnectedComponentsLinkedToAnodeOfHigherPriority-2})
\begin{figure}[!ht]
\begin{subfigure}[c]{0.45\textwidth}
\centering
\begin{pspicture}(-1,-.3)(1,2.6)
\tiny
\begin{psmatrix}[mnode=circle,colsep=0.4,rowsep=0.5]
	& {$c$}\\
{$j_1$}   & &  [mnode=R]{\framebox{$j_2$}}\\
 & & {$j_3$}
\psset{arrows=-, shortput=nab,labelsep={0.08}}
\tiny
\ncline{2,1}{1,2}
\ncline{1,2}{2,3}
\ncline{2,3}{3,3}
\ncline{2,1}{3,3}
\end{psmatrix}
\normalsize
\uput[0](-3.2,2.2){\textcolor{red}{$k''$}}
\pspolygon[linecolor=red,linearc=.4,linewidth=.02](-2.6,1.7)(-2.6,2.5)(.2,2.5)(.2,1.7)
\uput[0](.2,1.4){\textcolor{orange}{$k'$}}
\pspolygon[linecolor=orange,linearc=.4,linewidth=.02](-1,.7)(-1,1.5)(.2,1.5)(.2,.7)
\uput[0](-3.2,1){\textcolor{cyan}{$k$}}
\pspolygon[linecolor=cyan,linearc=.4,linewidth=.02](-2.6,.65)(-2.6,1.45)(-1.4,1.45)(-1.4,.65)
\uput[0](.2,0){\textcolor{green}{$l$}}
\pspolygon[linecolor=green,linearc=.4,linewidth=.02](-1,-.4)(-1,.4)(.2,.4)(.2,-.4)
\end{pspicture}
\caption{$\lbrace j_1, j_3 \rbrace \in E$.}
\label{figThreeConnectedComponentsLinkedToAnodeOfHigherPriority-1}
\end{subfigure}
\hfill
\begin{subfigure}[c]{0.45\textwidth}
\centering
\begin{pspicture}(-1,-.3)(1,2.6)
\tiny
\begin{psmatrix}[mnode=circle,colsep=0.4,rowsep=0.5]
	& {$c$}\\
{$j_1$} & &  [mnode=R]{\framebox{$j_2$}}\\
& & {$j_3$}
\psset{arrows=-, shortput=nab,labelsep={0.08}}
\tiny
\ncline{2,1}{1,2}
\ncline{1,2}{2,3}
\ncline{2,3}{3,3}
\end{psmatrix}
\normalsize
\uput[0](-3.2,2.2){\textcolor{red}{$k''$}}
\pspolygon[linecolor=red,linearc=.4,linewidth=.02](-2.6,1.7)(-2.6,2.5)(.2,2.5)(.2,1.7)
\uput[0](.2,1.4){\textcolor{orange}{$k'$}}
\pspolygon[linecolor=orange,linearc=.4,linewidth=.02](-1,.7)(-1,1.5)(.2,1.5)(.2,.7)
\uput[0](-3.2,1){\textcolor{cyan}{$k$}}
\pspolygon[linecolor=cyan,linearc=.4,linewidth=.02](-2.6,.65)(-2.6,1.45)(-1.4,1.45)(-1.4,.65)
\uput[0](.2,0){\textcolor{green}{$l$}}
\pspolygon[linecolor=green,linearc=.4,linewidth=.02](-1,-.4)(-1,.4)(.2,.4)(.2,-.4)
\end{pspicture}
\caption{$\lbrace j_1, j_3 \rbrace \notin E$.}
\label{figThreeConnectedComponentsLinkedToAnodeOfHigherPriority-2}
\end{subfigure}
\caption{$k \leq k' < k''$, $l<k$, and $\lbrace c, j_3 \rbrace \notin E$.}
\label{figThreeConnectedComponentsLinkedToAnodeOfHigherPriority}
\end{figure}
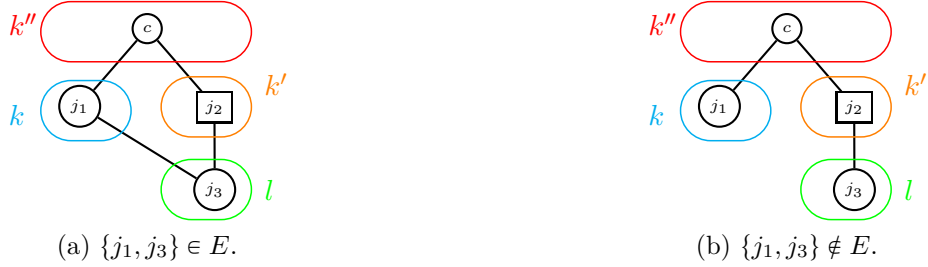
where $c$ has highest priority and Example~\ref{ExampleNon-CompleteCycle-priorities}
(resp. Example~\ref{Example3-pan-NonValidFor4-path})
gives a contradiction.
Let us now assume $\lbrace c, j_3 \rbrace \in E$.
Then we have either a non-complete cycle or a $3$-pan and we get contradictions as in the previous case.

\textbf{\ref{itemPropNoPathLinkingC1ToC2NotPassingThroughi}}
By contradiction let us assume the existence of a path linking $C_1$ to $C_2$
and that does not pass through $c$.
Then we can select a shortest path $P$ linking $j_1$ to $j_2$ and not containing~$c$.
Then $T = P \cup \lbrace j_1, c, j_2 \rbrace$ corresponds to a cycle.
If $p(c) = p(T)$,
then Example~\ref{ExampleNon-CompleteCycle-priorities} implies $\lbrace j_1, j_2 \rbrace \in E$,
a contradiction.
Otherwise,
let $i'$ be a node of $P$ satisfying $p(i')= p(T)$.
Let us note that we necessarily have $i' \notin \lbrace j_1, j_2 \rbrace $
as $p(j_1) = k \leq k' = p(j_2) < k'' = p(c)$.
Then Example~\ref{ExampleNon-CompleteCycle-priorities} implies the existence of an edge linking the neighbors of $i'$ in the path $P$.
This contradicts the minimality of $P$.

\textbf{\ref{itemPropIfk<k'}}
Let us now assume the existence of a link between $C_1$ and $C_2$.
We necessarily have $k < k'$.
Let us consider $i \in C$ and $j_1 \in C_1$
(resp. $j_2 \in C_2$) with $ \lbrace i, j_1 \rbrace \in E$
(resp. $\lbrace i, j_2 \rbrace \in E$).
Let us assume $\lbrace j_1,j_2 \rbrace \notin E$.
By the existence of a link between $C_1$ and $C_2$,
there exists a path $P$ linking $j_1$ to $j_2$
and passing only through $C_1$ and $C_2$
as represented in Figure~\ref{figTwoConnectedComponentsLinkedToAnodeOfHigherPriority-3-1}.
\begin{figure}[!ht]
\begin{subfigure}[c]{0.3\textwidth}
\centering
\begin{pspicture}(-1,-.3)(1,2.6)
\tiny
\begin{psmatrix}[mnode=circle,colsep=0.4,rowsep=0.5]
	& {$i$}\\
  & & {$j_2$}\\
{$j_1$} & {} & {}
\psset{arrows=-, shortput=nab,labelsep={0.08}}
\tiny
\ncline{3,1}{1,2}
\ncline{1,2}{2,3}
\ncline{2,3}{3,3}
\ncline{3,1}{3,2}
\ncline[linestyle=dashed]{3,2}{3,3}
\end{psmatrix}
\normalsize
\uput[0](-3.2,2.2){\textcolor{red}{$k''$}}
\pspolygon[linecolor=red,linearc=.4,linewidth=.02](-2.6,1.7)(-2.6,2.5)(.2,2.5)(.2,1.7)
\uput[0](-3.2,1.2){\textcolor{cyan}{$k'$}}
\pspolygon[linecolor=cyan,linearc=.4,linewidth=.02](-2.6,.7)(-2.6,1.5)(.2,1.5)(.2,.7)
\uput[0](-3.2,0){\textcolor{orange}{$k$}}
\pspolygon[linecolor=orange,linearc=.4,linewidth=.02](-2.6,-.4)(-2.6,.4)(.2,.4)(.2,-.4)
\end{pspicture}
\caption{$\lbrace j_1, j_2 \rbrace \notin E$.}
\label{figTwoConnectedComponentsLinkedToAnodeOfHigherPriority-3-1}
\end{subfigure}
\hfill
\begin{subfigure}[c]{0.3\textwidth}
\centering
\begin{pspicture}(-1,-.3)(1,2.6)
\tiny
\begin{psmatrix}[mnode=circle,colsep=0.4,rowsep=0.5]
	& {$i$}\\
  & & {$j_2$}\\
{$j_1$} &  & & {$j_3$}
\psset{arrows=-, shortput=nab,labelsep={0.08}}
\tiny
\ncline{3,1}{1,2}
\ncline{1,2}{2,3}
\ncline{3,1}{2,3}
\ncarc[arcangle=55]{1,2}{3,4}
\end{psmatrix}
\normalsize
\uput[0](-4.1,2.2){\textcolor{red}{$k''$}}
\pspolygon[linecolor=red,linearc=.4,linewidth=.02](-3.5,1.7)(-3.5,2.5)(-.7,2.5)(-.7,1.7)
\uput[0](-4.1,1.2){\textcolor{cyan}{$k'$}}
\pspolygon[linecolor=cyan,linearc=.4,linewidth=.02](-3.5,.7)(-3.5,1.5)(-.7,1.5)(-.7,.7)
\uput[0](-4.1,0){\textcolor{orange}{$k$}}
\pspolygon[linecolor=orange,linearc=.4,linewidth=.02](-3.5,-.4)(-3.5,.4)(-1.8,.4)(-1.8,-.4)
\uput[0](-1.6,0){\textcolor{green}{$l$}}
\pspolygon[linecolor=green,linearc=.4,linewidth=.02](-1.1,-.4)(-1.1,.4)(.5,.4)(.5,-.4)
\end{pspicture}
\caption{$\lbrace j_1, j_2 \rbrace \in E$, $l \leq k'$.}
\label{figTwoConnectedComponentsLinkedToAnodeOfHigherPriority-3-2}
\end{subfigure}
\hfill
\begin{subfigure}[c]{0.3\textwidth}
\centering
\begin{pspicture}(-1,-.3)(1,2.6)
\tiny
\begin{psmatrix}[mnode=circle,colsep=0.4,rowsep=0.5]
	& {$i$}\\
  & & {$j_2$}\\
{$j_1$} & & & {$j_3$}
\psset{arrows=-, shortput=nab,labelsep={0.08}}
\tiny
\ncline{3,1}{1,2}
\ncline{1,2}{2,3}
\ncline{3,1}{2,3}
\ncline{2,3}{3,4}
\ncarc[arcangle=55]{1,2}{3,4}
\end{psmatrix}
\normalsize
\uput[0](-4.1,2.2){\textcolor{red}{$k''$}}
\pspolygon[linecolor=red,linearc=.4,linewidth=.02](-3.5,1.7)(-3.5,2.5)(-.7,2.5)(-.7,1.7)
\uput[0](-4.1,1.2){\textcolor{cyan}{$k'$}}
\pspolygon[linecolor=cyan,linearc=.4,linewidth=.02](-3.5,.7)(-3.5,1.5)(-.7,1.5)(-.7,.7)
\uput[0](-4.1,0){\textcolor{orange}{$k$}}
\pspolygon[linecolor=orange,linearc=.4,linewidth=.02](-3.5,-.4)(-3.5,.4)(-1.8,.4)(-1.8,-.4)
\uput[0](-1.6,0){\textcolor{green}{$l$}}
\pspolygon[linecolor=green,linearc=.4,linewidth=.02](-1.1,-.4)(-1.1,.4)(.5,.4)(.5,-.4)
\end{pspicture}
\caption{$\lbrace j_1, j_2 \rbrace \in E$, $l \leq k'$.}
\label{figTwoConnectedComponentsLinkedToAnodeOfHigherPriority-3-3}
\end{subfigure}
\caption{$ k < k' < k''$.}
\label{figTwoConnectedComponentsLinkedToAnodeOfHigherPriority-3}
\end{figure}
Example~\ref{ExampleNon-CompleteCycle-priorities} applied with the cycle formed by $\lbrace i, j_1 \rbrace \cup P \cup \lbrace j_2, i \rbrace$
implies $\lbrace j_1,j_2 \rbrace \in E$, a contradiction.
Therefore,
we necessarily have $\lbrace j_1,j_2 \rbrace \in E$.

Let us assume $C$ linked to a component $C_3$ different from $C_1$ and $C_2$.
Let us assume that $C_3$ belongs to a layer with priority $l \leq k'$.
Let $j_3$  be a node of $C_3$ linked to $i$ as represented in Figure~\ref{figTwoConnectedComponentsLinkedToAnodeOfHigherPriority-3-2}
with $l=k$.
If $\lbrace j_2,j_3 \rbrace \notin E$,
then Example~\ref{Example3-pan-NonValidFor4-path} gives a contradiction.
Thus we necessarily have $\lbrace j_2,j_3 \rbrace \in E$ as represented in Figure~\ref{figTwoConnectedComponentsLinkedToAnodeOfHigherPriority-3-3}.
Then $\lbrace i, j_1,j_2,j_3,i  \rbrace$ is a cycle where $i$ has highest priority
and Example~\ref{ExampleNon-CompleteCycle-priorities} implies $\lbrace j_1, j_3 \rbrace \in E$.
\end{proof}

As a direct consequence of Proposition~\ref{propC1andC2LinkedToCOfHigherLevelThenSingletons},
we get the following result.

\begin{lemma}\label{Samelevel}
Let $(\omega,\Sigma)$ be a weight system with at least two priority levels and let $G=(N,E)$ be a graph preserving $(\omega,\Sigma)$-convexity.
Let $k, k'$ be priority levels with $k < k'$.
Let $C_1$ and $C_2$ be connected components of $G_k$
linked to a connected component $C$ of $G_{k'}$.
Then $C$ is a star and $C_1$ and $C_2$ are necessarily singletons linked only to the center of the star.
Moreover,
$C_1$ and $C_2$ cannot be linked to connected components of a lower layer.
\end{lemma}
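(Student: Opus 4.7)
The plan is to deduce the result directly from Proposition~\ref{propC1andC2LinkedToCOfHigherLevelThenSingletons} by choosing the three priority levels appropriately. The central observation is that, since $C_1$ and $C_2$ are two distinct connected components of $G_k$, there can be no edge of $G$ joining a node of $C_1$ to a node of $C_2$: such an edge would have both endpoints of priority $k$, hence would belong to $E(G_k)$, contradicting that $C_1$ and $C_2$ are separate components of $G_k$. In particular, the hypothesis ``there is no link between $C_1$ and $C_2$'' required by Statement~\ref{itemPropNoLinkBetweenC1AndC2ThenC2Singleton} of Proposition~\ref{propC1andC2LinkedToCOfHigherLevelThenSingletons} is automatically satisfied.

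With this in hand, I would apply Proposition~\ref{propC1andC2LinkedToCOfHigherLevelThenSingletons} with the triple of priority levels $(k,k,k')$, where both of the two low-layer components play their natural role and $C$ plays the role of the component at the highest of the three levels. Since $k \leq k < k'$, the assumption $k \leq k' < k''$ of the proposition is met, and we fall into Statement~\ref{itemPropNoLinkBetweenC1AndC2ThenC2Singleton}. Items \ref{itemPropC2NecessarilySingleton}, \ref{itemPropCIsAStar}, \ref{itemPropC1LinkedOnlyToCenterOfC}, and \ref{itemPropC2CannotBeLinkedToConnectedComponentsOfALowerLayer} then yield, respectively, that $C_2$ is a singleton, that $C$ is a star with some center $c$, that both $C_1$ and $C_2$ are linked to $C$ only at $c$, and that $C_2$ cannot be linked to any component of a lower layer.

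To obtain the symmetric conclusions for $C_1$, I would reapply Proposition~\ref{propC1andC2LinkedToCOfHigherLevelThenSingletons} after swapping the roles of $C_1$ and $C_2$, which is legitimate because both components sit in the same layer $k$. This gives that $C_1$ is also a singleton and that $C_1$ likewise cannot be linked to any connected component of a lower layer, completing the proof.

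I do not anticipate any real obstacle. The only point needing care is to ensure that the configuration falls under Statement~\ref{itemPropNoLinkBetweenC1AndC2ThenC2Singleton} of Proposition~\ref{propC1andC2LinkedToCOfHigherLevelThenSingletons} rather than its alternative (which would force the existence of an edge between $C_1$ and $C_2$); this is precisely what the initial observation on same-layer components rules out.
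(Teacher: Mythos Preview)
Your proposal is correct and is exactly the approach the paper intends: the lemma is stated as a direct consequence of Proposition~\ref{propC1andC2LinkedToCOfHigherLevelThenSingletons}, and you have spelled out the details—specializing to $k=k'$ in the proposition, observing that two distinct components of $G_k$ cannot be joined by an edge so that Statement~\ref{itemPropNoLinkBetweenC1AndC2ThenC2Singleton} applies, and then swapping the roles of $C_1$ and $C_2$ to obtain the symmetric conclusions.
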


\subsection{Necessary and sufficient conditions for Priority-decreasing trees.}
\label{Subsection-Necessary-Sufficient-Conditions-Priority-Decreasing-Trees}
We now investigate a special type of graphs.

\begin{definition}
Let $(\omega,\Sigma)$ be a weight system.
We say that a graph $G = (N,E)$ is an \emph{$(\omega, \Sigma)$-priority-decreasing tree} if 
\begin{itemize}
\item
$G$ is cycle-free,
\item
There exists a root $r$ such that the priorities are decreasing for the tree-order generated from $r$.
\end{itemize}
\end{definition}
We give in Figure~\ref{fig(omega,Sigma)-Priority-Decreasing-Trees} examples of $(\omega, \Sigma)$-priority-decreasing trees.
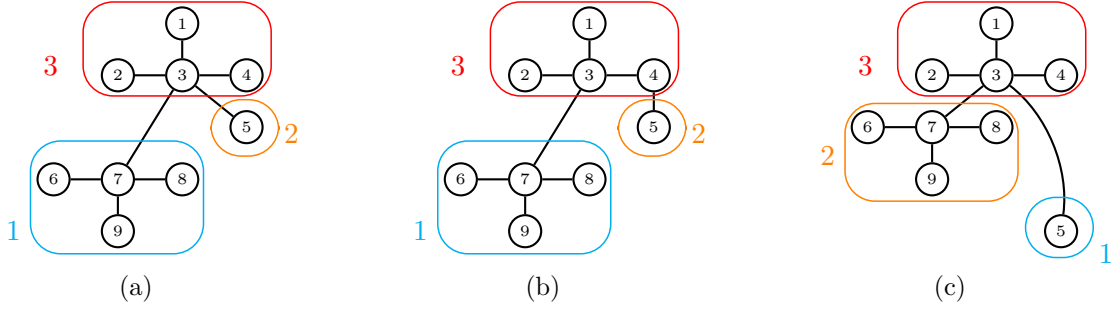
\begin{figure}[!ht]
\begin{subfigure}[t]{0.3\textwidth}
\centering
\begin{pspicture}(-1,-.3)(1,2.8)
\tiny
\begin{psmatrix}[mnode=circle,colsep=0.4,rowsep=0.2]
& & {$1$}\\
& {$2$} & {$3$} & {$4$}\\
&  & &  {$5$}\\
 {$6$} & {$7$} & {$8$}\\
  & {$9$}
\psset{arrows=-, shortput=nab,labelsep={0.08}}
\tiny
\ncline{1,3}{2,3}
\ncline{2,2}{2,3}
\ncline{2,3}{2,4}
\ncline{2,3}{3,4}
\ncline{2,3}{4,2}
\ncline{4,1}{4,2}
\ncline{4,2}{4,3}
\ncline{4,2}{5,2}
\end{psmatrix}
\normalsize
\uput[0](-3.1,2.2){\textcolor{red}{$3$}}
\pspolygon[linecolor=red,linearc=.4,linewidth=.02](-2.4,1.8)(-2.4,3.05)(.1,3.05)(.1,1.8)
\uput[0](.1,1.3){\textcolor{orange}{$2$}}
\pspolygon[linecolor=orange,linearc=.4,linewidth=.02](-.7,1)(-.7,1.75)(.2,1.75)(.2,1)
\uput[0](-3.6,0){\textcolor{cyan}{$1$}}
\pspolygon[linecolor=cyan,linearc=.4,linewidth=.02](-3.1,-.3)(-3.1,1.2)(-.8,1.2)(-.8,-.3)
\end{pspicture}
\caption{}
\label{fig(omega,Sigma)-Priority-Decreasing-Trees-1}
\end{subfigure}
\hfill
\begin{subfigure}[t]{0.3\textwidth}
\centering
\begin{pspicture}(-1,-.3)(1,2.8)
\tiny
\begin{psmatrix}[mnode=circle,colsep=0.4,rowsep=0.2]
& & {$1$}\\
& {$2$} & {$3$} & {$4$}\\
&  & &  {$5$}\\
 {$6$} & {$7$} & {$8$}\\
  & {$9$}
\psset{arrows=-, shortput=nab,labelsep={0.08}}
\tiny
\ncline{1,3}{2,3}
\ncline{2,2}{2,3}
\ncline{2,3}{2,4}
\ncline{2,4}{3,4}
\ncline{2,3}{4,2}
\ncline{4,1}{4,2}
\ncline{4,2}{4,3}
\ncline{4,2}{5,2}
\end{psmatrix}
\normalsize
\uput[0](-3.1,2.2){\textcolor{red}{$3$}}
\pspolygon[linecolor=red,linearc=.4,linewidth=.02](-2.4,1.8)(-2.4,3.05)(.1,3.05)(.1,1.8)
\uput[0](.1,1.3){\textcolor{orange}{$2$}}
\pspolygon[linecolor=orange,linearc=.4,linewidth=.02](-.7,1)(-.7,1.75)(.2,1.75)(.2,1)
\uput[0](-3.6,0){\textcolor{cyan}{$1$}}
\pspolygon[linecolor=cyan,linearc=.4,linewidth=.02](-3.1,-.3)(-3.1,1.2)(-.8,1.2)(-.8,-.3)
\end{pspicture}
\caption{}
\label{fig(omega,Sigma)-Priority-Decreasing-Trees-2}
\end{subfigure}
\hfill
\begin{subfigure}[t]{0.3\textwidth}
\centering
\begin{pspicture}(-1,-.3)(1,2.8)
\tiny
\begin{psmatrix}[mnode=circle,colsep=0.4,rowsep=0.2]
& & {$1$}\\
& {$2$} & {$3$} & {$4$}\\
 {$6$} & {$7$} & {$8$}\\
 & {$9$}\\
&  & &  {$5$}
\psset{arrows=-, shortput=nab,labelsep={0.08}}
\tiny
\ncline{1,3}{2,3}
\ncline{2,2}{2,3}
\ncline{2,3}{2,4}
\ncarc[arcangle=30]{2,3}{5,4}
\ncline{2,3}{3,2}
\ncline{3,1}{3,2}
\ncline{3,2}{3,3}
\ncline{3,2}{4,2}
\end{psmatrix}
\normalsize
\uput[0](-3.1,2.2){\textcolor{red}{$3$}}
\pspolygon[linecolor=red,linearc=.4,linewidth=.02](-2.4,1.8)(-2.4,3.05)(.1,3.05)(.1,1.8)
\uput[0](.1,-.3){\textcolor{cyan}{$1$}}
\pspolygon[linecolor=cyan,linearc=.4,linewidth=.02](-.7,-.35)(-.7,.45)(.2,.45)(.2,-.35)
\uput[0](-3.6,1){\textcolor{orange}{$2$}}
\pspolygon[linecolor=orange,linearc=.4,linewidth=.02](-3.1,.4)(-3.1,1.7)(-.8,1.7)(-.8,.4)
\end{pspicture}
\caption{}
\label{fig(omega,Sigma)-Priority-Decreasing-Trees-3}
\end{subfigure}
\caption{$(\omega, \Sigma)$-priority-decreasing trees with root $3$.}
\label{fig(omega,Sigma)-Priority-Decreasing-Trees}
\end{figure}
The following theorem provides a characterization of priority-decreasing trees preserving $(\omega,\Sigma)$-convexity.

\begin{theorem}
\label{characterization_hierarchy}
Let $(\omega,\Sigma)$ be a weight system and $G=(N,E)$ be an $(\omega,\Sigma)$-priority-decreasing tree.
We denote the priorities by $1$ to $\overline{k}$ from low to high priority.
The following statements are equivalent:
\begin{enumerate}
\item
$G$ preserves $(\omega,\Sigma)$-convexity.
\item
\label{ExistenceOfStrictlyDecreasingSubsequence}
There exist $m\geq 1 $ and a strictly decreasing subsequence of priorities $(k_l)_{1\leq l \leq m}$
such that:
\begin{enumerate}
\item
$k_1=\overline{k}$,
\item
For all $l$ with $1 \leq l \leq m$,
the graph restricted to the set of nodes with priority between $k_l$ and $k_{l+1}+1$ is a star denoted $(N_l,E_l)$
(setting $k_{m+1}=0$),
\item
The center of $(N_l,E_l)$ denoted $c_l$ has priority $k_l$,
\item
If $m \geq 2$, then for all $l$ with $1 \leq l \leq m-1$
there is exactly one edge $e_l$ linking $(N_l,E_l)$ to $(N_{l+1},E_{l+1})$,
and $e_l$ connects a node of priority $k_{l+1}$ in $N_{l+1}$ to~$c_l$.
\end{enumerate}
\end{enumerate}
\end{theorem}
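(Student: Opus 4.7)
The proof splits into showing necessity (statement~1 $\Rightarrow$ statement~2) and sufficiency (statement~2 $\Rightarrow$ statement~1).

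\textbf{Necessity.} I proceed by induction on $|N|$, peeling off the top star at each step using the propositions of the preceding section. Let $G$ be an $(\omega, \Sigma)$-priority-decreasing tree with root $r$ that preserves $(\omega,\Sigma)$-convexity. Since priorities weakly decrease from $r$, the subgraph $G_{\overline{k}}$ induced by the top-priority layer contains $r$ and is a connected subtree of $G$; by Proposition~\ref{Levelk-StarOrCompleteSubgraphs} and acyclicity, it is a star. Let $c_1$ denote its center and set $k_1 := \overline{k}$. Proposition~\ref{PlayerConnectedToAHigherLevel} combined with the tree property forces every subtree of $G$ hanging below $V(G_{\overline{k}})$ to attach to $c_1$: attaching to a non-center leaf of a star of size $\geq 3$ is forbidden, while two distinct attachment points in a size-$2$ star would contradict Proposition~\ref{propC1andC2LinkedToCOfHigherLevelThenSingletons}(1). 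Enumerate the hanging subtrees $T^{(1)}, \ldots, T^{(p)}$ with roots $r^{(1)}, \ldots, r^{(p)}$. Because no edges connect distinct $T^{(j)}$'s in the tree, Proposition~\ref{propC1andC2LinkedToCOfHigherLevelThenSingletons}(2) applied pairwise forces every $T^{(j)}$ whose root is not of strictly minimum priority among the $r^{(j)}$'s to be a singleton with no descendants. Thus at most one subtree, say $T^\star$ with root $r^\star$ and $k_2 := p(r^\star)$, can be non-trivial. The set $N_1 = \{v \in N : k_2 + 1 \leq p(v) \leq k_1\}$ is then exactly $V(G_{\overline{k}})$ together with the singleton roots, all attached to $c_1$, so it is a star centered at $c_1$. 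The unique edge linking $N_1$ to $N \setminus N_1$ is $e_1 = \{c_1, r^\star\}$ with $p(r^\star) = k_2$, as required. The subtree $T^\star$ inherits from $G$ all the forbidden-subgraph conditions enforced by the propositions of Section~\ref{SectionWeightedConvexityAndCommunicationGameCaseWithSeveralPriorities}, so the induction hypothesis applied to $T^\star$ yields the remaining stars $N_2, \ldots, N_m$.

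\textbf{Sufficiency.} I induct on $m$. The base case $m = 1$ states that a star whose center carries the maximal priority preserves $(\omega,\Sigma)$-convexity; this extends Proposition~\ref{PropStarGraphPreserves(omega,Sigma)-convexity}, and the proof of \cite{Slikker1998} adapts by replacing each symmetric weight $\omega_i$ with the priority-adjusted weight $\overline{\omega}_i^T$ and using that the center has maximal priority in every coalition containing it (so contributions from lower-priority leaves drop out harmlessly when $T$ does not include the center, while the sums reduce to the classical case when $T$ does). For the inductive step with $m \geq 2$, detach the terminal star $N_m$ at the edge $e_{m-1} = \{c_{m-1}, u\}$. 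For any coalition $S \subseteq N$, the connected components of $(S, E^S)$ split cleanly into components wholly contained in $N_m \setminus \{u\}$ and a single component containing $u$ that merges parts of $S \cap N_m$ with parts of $S \setminus N_m$; this additive structure of $v^G$ allows the $(\omega,\Sigma)$-convexity inequality on $G$ to be decomposed into one inequality on the chain of length $m-1$ (to which the induction hypothesis applies) and a residual inequality on the star $N_m$ (handled by the base case).

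\textbf{Main obstacle.} The necessity direction is nearly mechanical given Section~\ref{SectionWeightedConvexityAndCommunicationGameCaseWithSeveralPriorities}. The real difficulty is the base case of the sufficiency proof, where Slikker's star argument must be reverified in the priority-weighted setting: priority-adjusted weights $\overline{\omega}_i^T$ can vanish for low-priority players when $T$ spans several priority layers, so every step manipulating the inequality must be checked to remain valid once these zero-weight contributions are accounted for. The inductive splitting at the interface $e_{m-1}$ also demands careful bookkeeping to guarantee that the split inequalities correspond exactly to the chain residual and the star residual, with no cross-terms left unmatched.
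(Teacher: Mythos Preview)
Your necessity argument is close in spirit to the paper's, though the paper does not frame it as induction on $|N|$: it directly constructs the sequence $(k_l)$ level by level, using Propositions~\ref{Levelk-StarOrCompleteSubgraphs}, \ref{PlayerConnectedToAHigherLevel}, \ref{propC1andC2LinkedToCOfHigherLevelThenSingletons} and Lemma~\ref{Samelevel} at each stage. Your inductive version has a small gap: the induction hypothesis is ``preserves $(\omega,\Sigma)$-convexity $\Rightarrow$ structure,'' so to recurse on $T^\star$ you must argue that the subtree $T^\star$ (with the restricted weight system) itself preserves $(\omega,\Sigma)$-convexity --- this is true (extend any game on $V(T^\star)$ by null players and use Remark~\ref{remarkExtensionWithNullPlayers} plus the fact that connected components inside a subtree of a tree are unchanged), but it is not the same as saying $T^\star$ ``inherits the forbidden-subgraph conditions.''

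The sufficiency direction, however, has a real gap. The paper does \emph{not} induct on $m$; it fixes $S\subset T$ with $p(S)=p(T)=k$, identifies the highest index $l$ such that $T$ meets $(N_l,E_l)$, and splits into three cases according to whether $c_l\in S$ and $c_l\in T$. The hard case $c_l\in S$, $c_l\in T$ is where the work lies, and its resolution (equations~(\ref{eqv(T1UT')-v(T')>=SumjInT1v(j)})--(\ref{titi}) in the paper) hinges on repeated applications of Proposition~\ref{weak_superadditivity}: one must compare $v(S\setminus\{c_l\})-\sum_{j\in N_l\cap S}v(\{j\})-v(S')$ against the corresponding expression for $T$, and this requires interleaving the low-priority leaves of $T\setminus S$ with those of $S$ in increasing priority order and invoking weak superadditivity at each step. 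Your proposal never mentions Proposition~\ref{weak_superadditivity}, and without it the key inequality~(\ref{titi}) is not available.

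Your proposed decomposition at the bottom edge $e_{m-1}$ does not work as stated. When $p(T)>k_m$, all nonzero weights $\overline{\omega}_i^T$ sit on players in $N_1\cup\cdots\cup N_{m-1}$, yet the values $v^G(T)$ and $v^G(T\setminus\{i\})$ still depend on $T\cap N_m$ through the connected component containing $c_l$. Concretely, if $c_l\in S$ and $i=c_l$, removing $c_l$ shatters the component into isolated leaves of $N_l$ \emph{and} a piece $T' = T\cap N_l^c$ reaching into lower stars; the resulting correction term $v(T\setminus\{c_l\})-\sum_{j\in N_l\cap T}v(\{j\})-v(T')$ mixes contributions from both sides of $e_{m-1}$ and does not split additively into a ``chain residual'' and a ``star residual.'' This is precisely the cross-term the paper handles with Proposition~\ref{weak_superadditivity}. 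Your base case is also not covered by Proposition~\ref{PropStarGraphPreserves(omega,Sigma)-convexity}, which assumes $\Sigma=\{N\}$; a star whose center has strictly higher priority than some leaves already requires the full case analysis of the paper (and again Proposition~\ref{weak_superadditivity}) rather than a simple transcription of Slikker's argument.
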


Let us observe that the graph given in Figure~\ref{fig(omega,Sigma)-Priority-Decreasing-Trees-1}
preserves $(\omega,\Sigma)$-convexity as it satisfies Condition~\ref{ExistenceOfStrictlyDecreasingSubsequence} with $m =2$.
Then $(N_1,E_1)$
(resp. $(N_2,E_2)$)
corresponds to the star induced by $\lbrace 1,2,3,4,5 \rbrace$
(resp. $\lbrace 6,7,8,9 \rbrace$).
It can be easily seen that the graphs given in Figures~\ref{fig(omega,Sigma)-Priority-Decreasing-Trees-2}
and \ref{fig(omega,Sigma)-Priority-Decreasing-Trees-3}
do not satisfy Condition~\ref{ExistenceOfStrictlyDecreasingSubsequence}
and therefore dot not preserve $(\omega,\Sigma)$-convexity. 

\begin{proof}
We start by establishing that (\ref{ExistenceOfStrictlyDecreasingSubsequence}) is a necessary condition to preserve $(\omega,\Sigma)$-convexity.
In a second step, we will prove that it is sufficient.

\noindent
\underline{Necessary condition:}
We first define the subsequence of priorities by induction.
\begin{enumerate}
\item
Let $k_1=\overline{k}$ be the maximal priority in $G=(N,E)$.
\item
Let $\tilde{N}_1$ be the set of nodes with priority $k_1$ and $\tilde{N}'_1$ the set of neighbors of all elements in $\tilde{N}_1$.
If $\tilde{N}'_1 \not= \emptyset$,
then define $k_2$ to be the minimum priority in $\tilde{N}'_1$.
\item
For every $l$ with $l \geq 2$, let $\tilde{N}_l$ be the set of nodes with priority $k_l$
and $\tilde{N}'_l$ the set of neighbors of all elements in $\tilde{N}_l$ with priority less than $k_l$.
If $\tilde{N}'_l\not= \emptyset$,
then define $k_{l+1}$ to be the minimum priority in $\tilde{N}'_l$.
\end{enumerate}

Let us now show that $G$ has to satisfy the precise structure described in Condition~\ref{ExistenceOfStrictlyDecreasingSubsequence}.
Set $m=1$.
We first consider the nodes at level $k_1$.
Under the assumption of decreasing order, they form a connected component.
For each node $n$ in $\tilde{N}'_1$, we define its sublevel component $C(n)$
as the connected component containing $n$
in the restricted graph $G_{p(n)}$ (containing only nodes having the same priority as $n$).
By Statements~\ref{itemPropC1AndC2LinkedToACommonnodeInC} and \ref{itemPropNoLinkBetweenC1AndC2ThenC2Singleton}
in Proposition~\ref{propC1andC2LinkedToCOfHigherLevelThenSingletons}
and as $G$ is cycle-free,
$\tilde{N}_1$ forms a star $S_1$ and each sublevel component $C(n)$ is linked to $S_1$ only at its center $c_1$
by the unique edge $\{ n, c_1 \}$.
Moreover, each $C(n)$ with priority between $k_2+1$ and $k_1$ is a leaf node in $G$.
We distinguish depending on the number of sublevel components having priority $k_2$:
\begin{itemize}
\item
If there are at least $2$ such components, then they are necessarily leaf nodes by Lemma~\ref{Samelevel}.
Therefore $G$ is a star network as represented in Figure~\ref{figTwoCompsLinkedToAHigherOne-Cycle-free-1-2-a}
and it ends the proof as Condition~\ref{ExistenceOfStrictlyDecreasingSubsequence} is satisfied taking $m =1$.
\begin{figure}[!ht]
\begin{subfigure}[t]{0.45\textwidth}
\centering
\begin{pspicture}(-1,-.4)(1,2.8)
\tiny
\begin{psmatrix}[mnode=circle,colsep=0.5,rowsep=0.4]
 & {}\\
{} & {$c_1$} & {}\\
&  & & [mnode=R]{\psframebox{$n$}}\\
[mnode=R]{\psframebox{$n_1$}}  & & [mnode=R]{\psframebox{$n_2$}}
\psset{arrows=-, shortput=nab,labelsep={0.08}}
\tiny
\ncline{1,2}{2,2}
\ncline{2,1}{2,2}
\ncline{2,2}{2,3}
\ncline{4,1}{2,2}
\ncline{4,3}{2,2}
\ncline{3,4}{2,2}
\end{psmatrix}
\normalsize
\uput[0](-4.2,1.8){\textcolor{red}{$k_1$}}
\pspolygon[linecolor=red,linearc=.4,linewidth=.02](-3.5,1.3)(-3.5,2.7)(-1,2.7)(-1,1.3)
\uput[0](-.8,-.2){\textcolor{cyan}{$k_2$}}
\pspolygon[linecolor=cyan,linearc=.4,linewidth=.02](-1.7,-.35)(-1.7,.45)(-.8,.45)(-.8,-.35)
\uput[0](-4.4,-.2){\textcolor{cyan}{$k_2$}}
\pspolygon[linecolor=cyan,linearc=.4,linewidth=.02](-3.8,-.4)(-3.8,.4)(-2.9,.4)(-2.9,-.4)
\pspolygon[linecolor=orange,linearc=.4,linewidth=.02](-.7,.4)(-.7,1.15)(.2,1.15)(.2,.4)
\end{pspicture}
\caption{If $p(n_1) = p(n_2) = k_2$ then $C(n_1)$ and $C(n_2)$ are singletons.}
\label{figTwoCompsLinkedToAHigherOne-Cycle-free-1-2-a}
\end{subfigure}
\hfill
\begin{subfigure}[t]{0.45\textwidth}
\centering
\begin{pspicture}(-1,-.3)(1,2.8)
\tiny
\begin{psmatrix}[mnode=circle,colsep=0.4,rowsep=0.2]
& & {}\\
& {} & {$c_1$} & {}\\
&  & &  [mnode=R]{\psframebox{$n$}}\\
 {} & {$n_1$} & {}\\
  & {}
\psset{arrows=-, shortput=nab,labelsep={0.08}}
\tiny
\ncline{1,3}{2,3}
\ncline{2,2}{2,3}
\ncline{2,3}{2,4}
\ncline{2,3}{3,4}
\ncline{2,3}{4,2}
\ncline{4,1}{4,2}
\ncline{4,2}{4,3}
\ncline{4,2}{5,2}
\end{psmatrix}
\normalsize
\uput[0](-3.1,2.2){\textcolor{red}{$k_1$}}
\pspolygon[linecolor=red,linearc=.4,linewidth=.02](-2.4,1.8)(-2.4,3.05)(.1,3.05)(.1,1.8)
\pspolygon[linecolor=orange,linearc=.4,linewidth=.02](-.7,1)(-.7,1.75)(.2,1.75)(.2,1)
\uput[0](-3.8,0){\textcolor{cyan}{$k_2$}}
\pspolygon[linecolor=cyan,linearc=.4,linewidth=.02](-3.1,-.3)(-3.1,1.2)(-.8,1.2)(-.8,-.3)
\end{pspicture}
\caption{$C(n_1)$ unique component with priority $k_2$.}
\label{figTwoCompsLinkedToAHigherOne-Cycle-free-1-2-b}
\end{subfigure}
\caption{$k_2 < p(n) < k_1$.}
\label{figTwoCompsLinkedToAHigherOne-Cycle-free-1-2}
\end{figure}
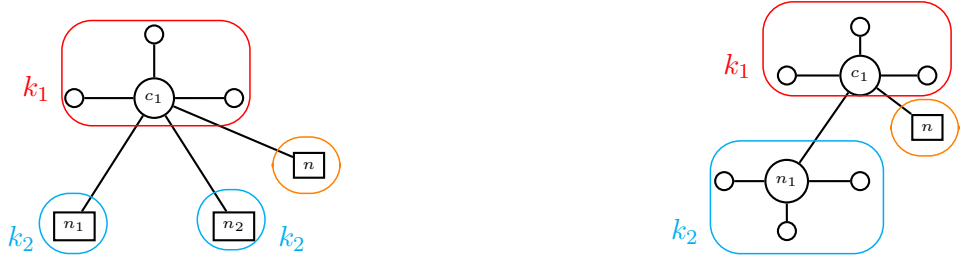
\item
Otherwise,
we increase $m$ by $1$.
Under the assumption of decreasing order,
we know that the priority of nodes in $N \setminus \lbrace c_1 \rbrace$
connected to the unique sublevel component of priority $k_2$ have a lower priority than $k_2$.
It follows that we had all nodes with higher priority.
The graph restricted to the set of nodes with priority between $k_1$ and $k_{2}+1$ is a star denoted $(N_1,E_1)$
with center $c_1$.
By construction, $c_1$ has priority $k_1$ and there exists a unique link between $c_1$ and a node with priority $k_2$
as represented in Figure~\ref{figTwoCompsLinkedToAHigherOne-Cycle-free-1-2-b}.
\end{itemize}
In the second case,
one can repeat the argument in the restricted graph of nodes with priority lower than $k_2$.\\

\noindent
\underline{Sufficient condition:}
We now prove that such a graph preserves the $(\omega,\Sigma)$-convexity.
The proof follows the lines of \cite{Slikker1998}.
We need to check that if the original game is $(\omega,\Sigma)$-convex
then Equation~(\ref{eqDefinition(omega,Sigma)-convexity}) remains true for the communication game
for any pair of subsets $S$ and $T$ such that $S\subset T$.\\ 

If $p(S)<p(T)$, then for every $i\in S$, $\overline{\omega}^T_i=0$ and Equation~(\ref{eqDefinition(omega,Sigma)-convexity}) is trivially true.
We can now turn to the case where $p(S)=p(T)=k$.
Equation~(\ref{eqDefinition(omega,Sigma)-convexity}) can be rewritten as
\begin{equation}
\label{eqDefinition(omega,Sigma)-convexity-sufficiency}
\sum_{i\in S,\ p(i)=k}\overline{\omega}^T_i \left(v^G(S)-v^G(S\setminus \{i\}) \right) \leq \sum_{i\in S,\ p(i)=k} \overline{\omega}_i^T \left(v^G(T)-v^G(T\setminus \{i\}) \right).
\end{equation}
Let $l$ be the highest index in $\lbrace 1, \ldots, m \rbrace$ such that $T$ intersects the star $(N_l,E_l)$
with center~$c_l$.
We have $k_{l+1} < k \leq k_l$.
We distinguish several cases depending if $c_l$ is in $S$ and/or $T$.
\begin{itemize}
\item
If $c_l \notin S$ and $c_l \notin T$
as represented in Figure~\ref{figTwoStarsClCl+1-a},
\begin{figure}[!ht]
\begin{subfigure}[t]{0.45\textwidth}
\centering
\begin{pspicture}(-2,-.2)(1,3.8)
\tiny
\begin{psmatrix}[mnode=circle,colsep=0.35,rowsep=0.25]
& & & {}\\
& {} & {$c_l$} & {}\\
&  & [mnode=R]{\psframebox{}} &  [mnode=R]{\psframebox{}}\\
 {} & {$c_{l+1}$} & {}\\
 [mnode=R]{\psframebox{}} & {} &  [mnode=R]{\psframebox{}}\\
	& &  {}
\psset{arrows=-, shortput=nab,labelsep={0.08}}
\tiny
\ncline[linestyle=dotted]{1,4}{2,3}
\ncline[linestyle=dotted]{4,2}{6,3}
\ncline{2,2}{2,3}
\ncline{2,3}{3,3}
\ncline{2,3}{2,4}
\ncline{2,3}{3,4}
\ncline{2,3}{4,2}
\ncline{4,1}{4,2}
\ncline{4,2}{4,3}
\ncline{4,2}{5,2}
\ncline{4,2}{5,1}
\ncline{4,2}{5,3}
\end{psmatrix}
\normalsize
\uput[0](-3.9,2.4){\textcolor{red}{$(N_l,E_l)$}}
\pspolygon[linecolor=red,linearc=.4,linewidth=.02](-2.3,2)(-2.3,3.3)(.25,3.3)(.25,2)
\uput[0](-5.5,1){\textcolor{cyan}{$(N_{l+1},E_{l+1})$}}
\pspolygon[linecolor=cyan,linearc=.4,linewidth=.02](-3.1,.3)(-3.1,1.85)(-.65,1.85)(-.65,.3)
\uput[0](.8,1){$T$}
\psline[linewidth=.4pt,linearc=0.5]
{-}(.9,0)(.7,1.6)(.1,2.6)(-1,2.6)(-2.6,1.7)(-2.1,0)
\uput[0](-.1,1){$S$}
\psline[linewidth=.4pt,linearc=0.5]
{-}(0,0)(-.2,1.5)(-.45,2.4)(-1,2.5)(-1.4,2.2)(-1.4,0)
\end{pspicture}
\caption{$c_l \notin S$ and $c_l \notin T$.}
\label{figTwoStarsClCl+1-a}
\end{subfigure}
\hfill
\begin{subfigure}[t]{0.45\textwidth}
\centering
\begin{pspicture}(-2,-.2)(1,3.8)
\tiny
\begin{psmatrix}[mnode=circle,colsep=0.35,rowsep=0.25]
& & & {}\\
& {} & {$c_l$} & {}\\
&  & [mnode=R]{\psframebox{}} &  [mnode=R]{\psframebox{}}\\
 {} & {$c_{l+1}$} & {}\\
 [mnode=R]{\psframebox{}} & {} &  [mnode=R]{\psframebox{}}\\
	& &  {}
\psset{arrows=-, shortput=nab,labelsep={0.08}}
\tiny
\ncline[linestyle=dotted]{1,4}{2,3}
\ncline[linestyle=dotted]{4,2}{6,3}
\ncline{2,2}{2,3}
\ncline{2,3}{3,3}
\ncline{2,3}{2,4}
\ncline{2,3}{3,4}
\ncline{2,3}{4,2}
\ncline{4,1}{4,2}
\ncline{4,2}{4,3}
\ncline{4,2}{5,2}
\ncline{4,2}{5,1}
\ncline{4,2}{5,3}
\end{psmatrix}
\normalsize
\uput[0](-3.9,2.4){\textcolor{red}{$(N_l,E_l)$}}
\pspolygon[linecolor=red,linearc=.4,linewidth=.02](-2.3,2)(-2.3,3.3)(.25,3.3)(.25,2)
\uput[0](-5.5,1){\textcolor{cyan}{$(N_{l+1},E_{l+1})$}}
\pspolygon[linecolor=cyan,linearc=.4,linewidth=.02](-3.1,.3)(-3.1,1.85)(-.65,1.85)(-.65,.3)
\uput[0](.8,1){$T$}
\psline[linewidth=.4pt,linearc=0.5]
{-}(.9,0)(.7,1.6)(.1,3.1)(-1,3.6)(-2.6,1.7)(-2.1,0)
\uput[0](-.1,1){$S$}
\psline[linewidth=.4pt,linearc=0.5]
{-}(0,0)(-.2,1.5)(-.45,2.4)(-1,2.5)(-1.4,2.2)(-1.4,0)
\end{pspicture}
\caption{$c_l \notin S$ and $c_l \in T$.}
\label{figTwoStarsClCl+1-b}
\end{subfigure}
\caption{$p(c_l) = k_l$ and $p(c_{l+1}) = k_{l+1}$ .}
\label{figTwoStarsClCl+1}
\end{figure}
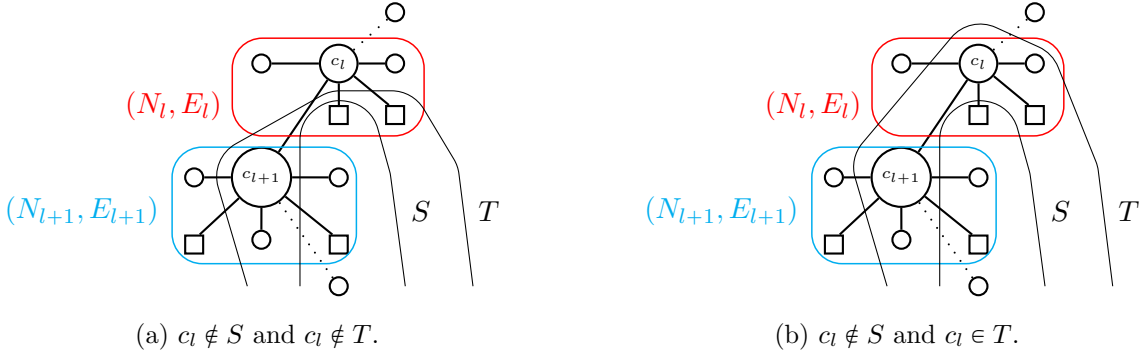
then
\[
v^G(S)=\sum_{i \in S,\, k_{l+1}<p(i) \leq k_{l} } v(\{i\})+v^G(S_{\leq k_{l+1}}).
\]
where $S_{\leq k_{l+1}}$ is the set of elements in $S$ with priority lower than $k_{l+1}$.
Notice that we do not explicit the remaining value since it depends on the connectivity and it is unnecessary.
It follows that
\begin{equation}
\label{vG(S)-vG(S-i)=v(i)-sufficiency}
v^G(S)-v^{G}(S\setminus\{i\})=v(\{i\}), \; \forall i \in S \textrm{ with } p(i)=k.
\end{equation}
At $T$, one obtains the same computation and therefore for every $i\in S$ with $p(i)=k$
\[
v^G(S)-v^{G}(S\setminus\{i\})=v(\{i\})=v^G(T)-v^{G}(T\setminus\{i\}),
\]
and (\ref{eqDefinition(omega,Sigma)-convexity-sufficiency}) is satisfied by considering the weighted sum.
\item
If $c_l \notin S$ but $c_l \in T$ as represented in Figure~\ref{figTwoStarsClCl+1-b},
then (\ref{vG(S)-vG(S-i)=v(i)-sufficiency}) is still satisfied
and we necessarily have $k = k_l$.
Let $\{ T_1, \ldots, T_m \}$  with $m \geq 1$ be the connected components of~$T$.
As $c_l \in T$, we can assume w.l.o.g. $c_l \in T_1$.
Then, we have $p(T_1) = k_l$ and $p(T_j) < k_l$ for all $j$ with $2 \leq j \leq m$.
We get
\begin{equation}
\label{eqSuminInSp(i)=klwTi(vG(T)-vG(T-i))-sufficiency}
\sum_{i \in S,\; p(i) = k_l} \overline{w}^T_i (v^G(T)-v^G(T\setminus\{i\})) =
\sum_{i \in S \cap T_1,\; p(i) = k_l} \overline{w}^{T_1}_i (v^G(T)-v^G(T\setminus\{i\})).
\end{equation}
Let us note that any $i$ in $T_1 \setminus \lbrace c_l \rbrace$ with $p(i) = k_l$ is necessarily a leaf in $(N_l, E_l)$. 
As $v^G(T)=\sum_{j=1}^{n} v(T_j)$
and as $c_l \notin S$,
(\ref{eqSuminInSp(i)=klwTi(vG(T)-vG(T-i))-sufficiency}) implies 
\begin{equation}
\label{eqSuminInSp(i)=klwTi(vG(T)-vG(T-i))-sufficiency-2}
\sum_{i \in S,\; p(i) = k_l} \overline{w}^T_i (v^G(T)-v^G(T\setminus\{i\})) =
\sum_{i \in S \cap T_1,\; p(i) = k_l} \overline{w}^{T_1}_i (v(T_1)-v(T_1\setminus\{i\})).
\end{equation}
Moreover, by~(\ref{vG(S)-vG(S-i)=v(i)-sufficiency}) and as $\overline{\omega}_i^T = \overline{\omega}_i^{T_1}$ for all $i \in S$, we have
\begin{equation}
\label{eqSumiInSCapT1p(i)=klwTAiv(i)-sufficiency}
\sum_{i\in S,\ p(i)=k_l}\overline{\omega}^T_i \left(v^G(S)-v^G(S\setminus \{i\}) \right)=
\sum_{i\in S\cap T_1,\ p(i)=k_l}\overline{\omega}^{T_1}_i v(\{ i \}).
\end{equation}
$(\omega,\Sigma)$-convexity of $v$ applied with $\lbrace i \rbrace \subseteq T_1 $ and $p(i) = k_l $ gives 
\[
\overline{w}^{T_1}_i \left(v(T_1)-v(T_1\setminus\{i\})\right) \geq \overline{w}^{T_1}_i \left( v(\{i\}) - v(\emptyset) \right)
= \overline{w}^{T_1}_i  v(\{i\}).
\]
Then,
using (\ref{eqSuminInSp(i)=klwTi(vG(T)-vG(T-i))-sufficiency-2}) and (\ref{eqSumiInSCapT1p(i)=klwTAiv(i)-sufficiency}),
we get that (\ref{eqDefinition(omega,Sigma)-convexity-sufficiency}) is satisfied.
\item
If $c_l \in S$ and $c_l \in T$, the computation is more complex.
As $c_l \in T$, we necessarily have $k = k_l$.
We first focus on the case where $S$ is composed of a single component.
Any $i \in S$ with $p(i)=k_l$ belongs to the star $(N_l,E_l)$ and therefore satisfies
\[
v^G(S)-v^G(S\setminus \{i\}) =
\begin{cases}
v(S)-v(S\setminus\{i\}) & \textrm{if } i \in S \setminus \lbrace c_l \rbrace,\\
v(S)-\sum_{j \in N_l \cap S} v(\{j\})-v(S \cap N_l^c) & \textrm{if } i = c_l.
\end{cases}
\]
It follows that one can replace all the increments by the first formula except when $i$ is the center.
One obtains
\begin{align}
\label{SumiInSp(i)=kwTi(v(S)-v(S-i))+wTcl(v(S)-v(S-cl))-SumjinNlSv(j)-v(S')}
\sum_{i\in S,\; p(i)=k_l} \overline{\omega}^T_i (v^G(S)-v^G(S\setminus \{i\})) =
&  \sum_{i\in S,\; p(i)=k_l} \overline{\omega}^T_i (v(S)-v(S\setminus \{i\}))\nonumber \\
& + \overline{\omega}^T_{c_l} \left(v(S\setminus{\{c_l\}})-\sum_{j \in N_l\cap S} v(\{j\})-v(S')\right),
\end{align}
where $S':= S \cap N_l^c$.
Similarly, one obtains for $T$
\begin{align}
\label{SumiInSp(i)=kwTi(v(T)-v(T-i))+wTcl(v(T)-v(T-cl))-SumjinNlTv(j)-v(T')}
\sum_{i\in S,\; p(i)=k_l} \overline{\omega}^T_i ( v^G(T)-v^G(T\setminus \{i\})) =
& \sum_{i\in S,\; p(i)=k_l} \overline{\omega}^T_i (v(T)-v(T\setminus \{i\})) \nonumber\\
& +\overline{\omega}^T_{c_l} \left(v(T\setminus{\{c_l\}})-\sum_{j \in N_l \cap T} v(\{j\})-v(T')\right),
\end{align}
where $T':= T \cap N_l^c$.
Let us note that all elements in $T'$ have a priority less than $k_{l+1}$
and all elements in $N_l$ have a priority strictly greater than $k_{l+1}$.
Let $S_1$ be the set of elements in $(S \cap N_l) \setminus \{c_l\}$ with lowest priority.
Let $T_1$ be the set of elements $j$ in $((T \setminus S) \cap N_l) \setminus \{c_l\}$
satisfying $p(j) \leq p(S_1)$ ($T_1$ may be empty).
Let $t_1, \ldots, t_m$ be the elements in $T_1$ ordered by increasing priorities.
Then,
$(\omega,\Sigma)$-convexity of $v$ applied successively to $t_k \subset T' \cup \bigcup_{j=1}^{k}t_j$
with $k \in \lbrace 1, \ldots, m\rbrace$ 
gives
\begin{equation}
\label{eqv(T1UT')-v(T')>=SumjInT1v(j)}
v(T_1 \cup T') - v(T') \geq \sum_{j \in T_1} v(\{j\}).
\end{equation}
Proposition~\ref{weak_superadditivity} applied to $S_1$ and $S' \subset T' \cup T_1$ gives
\begin{equation}
\label{eqv(S1UT1UT')-v(T1UT')>=v(S1US')-v(S')}
v(S_1 \cup T_1 \cup T') - v(T_1 \cup T') \geq v(S_1 \cup S') - v(S').
\end{equation}
(\ref{eqv(T1UT')-v(T')>=SumjInT1v(j)}) and (\ref{eqv(S1UT1UT')-v(T1UT')>=v(S1US')-v(S')}) imply
\begin{equation}
\label{eqv(S1UT1UT')-v(T')-SumjInT1v(j)>=v(S1US')-v(S')}
v(S_1 \cup T_1 \cup T') - v(T') - \sum_{j \in T_1} v(\{j\}) \geq v(S_1 \cup S') - v(S').
\end{equation}
Let now $S_2$ be the set of elements in $(S \cap N_l) \setminus (S_1 \cup \{c_l\})$ with lowest priority.
Let $T_2$ be the set of elements $j$ in $((T\setminus S) \cap N_l) \setminus (T_1 \cup \{c_l\})$
satisfying $p(j) \leq p(S_2)$.
By definition any element $j$ in $T_2$ also satisfies $p(j) > p(S_1)$.
By the same reasoning as before,
we get using successively the $(\omega,\Sigma)$-convexity of $v$
\begin{equation}
\label{eqv(S1UT1UT2UT')-v(S1UT1UT')>=SumjInT2v(j)}
v(S_1 \cup T_1 \cup T_2 \cup T') - v(S_1 \cup T_1 \cup T') \geq \sum_{j \in T_2} v(\{j\}),
\end{equation}
and by Proposition~\ref{weak_superadditivity} applied to $S_2$ and $S_1 \cup S' \subset S_1 \cup T_1 \cup T_2 \cup T'$
\begin{equation}
\label{eqv(S1US2UT1UT2UT')-v(S1UT1UT2UT')>=v(S1US2US')-v(S1US')}
v(S_1 \cup S_2 \cup T_1 \cup T_2 \cup T') - v(S_1 \cup T_1 \cup T_2\cup T') \geq v(S_1 \cup S_2 \cup S') - v(S_1 \cup S').
\end{equation}
(\ref{eqv(S1UT1UT')-v(T')-SumjInT1v(j)>=v(S1US')-v(S')}),
(\ref{eqv(S1UT1UT2UT')-v(S1UT1UT')>=SumjInT2v(j)}) and
(\ref{eqv(S1US2UT1UT2UT')-v(S1UT1UT2UT')>=v(S1US2US')-v(S1US')})
imply
\begin{equation}
\label{eqv(S1US2UT1UT2UT')-v(T')-SumjInT1UT2v(j)>=v(S1US2US')-v(S')}
v(S_1 \cup S_2 \cup T_1 \cup T_2 \cup T') - v(T') - \sum_{j \in T_1 \cup T_2} v(\{j\}) \geq v(S_1 \cup S_2 \cup S') - v(S').
\end{equation}
Repeating the reasoning, we finally get
\begin{equation}
\label{titi}
v(T \setminus \{c_l\}) - \sum_{j \in T\cap N_l} v(\{j\}) - v(T') \geq v(S \setminus \{c_l\}) -\sum_{j \in S\cap N_l} v(\{j\}) - v(S').
\end{equation}


$(\omega,\Sigma)$-convexity of $v$ applied with $S \subset T$ gives
\begin{equation}
\label{eqOmegaSigmaConvexityOfvAppliedWithSAndT-sufficiency}
\sum_{i\in S,\; p(i)=k_l} \overline{\omega}^T_i \left( v(S)-v(S\setminus \{i\}) \right)
\leq
\sum_{i\in S,\; p(i)=k_l} \overline{\omega}^T_i \left( v(T)-v(T\setminus \{i\}) \right).
\end{equation}
Finally, (\ref{SumiInSp(i)=kwTi(v(S)-v(S-i))+wTcl(v(S)-v(S-cl))-SumjinNlSv(j)-v(S')}),
(\ref{SumiInSp(i)=kwTi(v(T)-v(T-i))+wTcl(v(T)-v(T-cl))-SumjinNlTv(j)-v(T')}),
(\ref{titi}), and (\ref{eqOmegaSigmaConvexityOfvAppliedWithSAndT-sufficiency}) imply (\ref{eqDefinition(omega,Sigma)-convexity-sufficiency}).


Let us now show the general case.
Let $\{ S_1,...,S_n \}$
(resp. $\{T_1,...,T_m\}$)
with $n \geq 2$
(resp. $m \geq 1$)
be the partition of $S$
(resp. $T$)
into connected components.
As $c_l \in S$, we can assume w.l.o.g. $c_l \in S_1 \subseteq T_1$.
Then, we have $p(S_1) = p(T_1) = k_l$
and $p(S_j) < k_l$
(resp. $p(T_j) < k_l$)
for all $j$ with $2 \leq j \leq n$
(resp. $2 \leq j \leq m$).
We get
\begin{equation}
\label{eqSuminInSp(i)=klwTi(vG(S)-vG(S-i))-sufficiency}
\sum_{i \in S,\; p(i) = k_l} \overline{w}^T_i (v^G(S)-v^G(S\setminus\{i\})) =
\sum_{i \in S_1,\; p(i) = k_l} \overline{w}^T_i (v^G(S)-v^G(S\setminus\{i\})).
\end{equation}
As $v^G(S)=\sum_{j=1}^{n} v^G(S_j)$
and as $\omega_i^T = \omega_i^{T_1}$ for every $i$ in $S_1$,
(\ref{eqSuminInSp(i)=klwTi(vG(S)-vG(S-i))-sufficiency}) implies
\begin{equation}
\sum_{i \in S, \; p(i)=k_l} \overline{w}^T_i (v^G(S)-v^G(S\setminus\{i\})
=\sum_{i \in S_1, \; p(i)=k_l} \overline{w}^{T_1}_i (v^G(S_1)-v^G(S_1\setminus\{i\}).
\end{equation}
Similarly we have
\begin{equation}
\sum_{i \in S\; p(i)=k_l} \overline{w}^T_i (v^G(T)-v^G(T\setminus\{i\})
=\sum_{i \in S_1\; p(i)=k_l} \overline{w}^{T_1}_i (v^G(T_1)-v^G(T_1\setminus\{i\}).
\end{equation}
We have reduced the problem to the previous situation and therefore established the $(\omega,\Sigma)$-convexity for any pair of sets.
\qedhere
\end{itemize}
\end{proof}

By combining the results of Section~\ref{SectionWeightedShapleyAndweightedConvexity} and 
\ref{Subsection-Necessary-Sufficient-Conditions-Priority-Decreasing-Trees},
we obtain sufficient conditions on the game and on the graph 
to guarantee that the weighted Myerson value belongs to the core.

\appendix

\section{Proof of Proposition \ref{equivalence}}
\label{appendProofOfPropositionEquivalence}

\begin{proof}
Let us check that this definition is indeed coherent with the first definition of the weighted Shapley value.

Since it is defined through the expectation of $v$ under the probability distribution $\mathbb{P}_{\omega,\Sigma}$,
it is clearly a linear mapping of $v$.

Fix the unanimity game on the set $S$, then
\begin{itemize}
\item
A player not in $S$ will never have a positive marginal contribution, hence his value is $0$.
\item
A player $i$ in $S$ only has a positive marginal contribution, if $S \subseteq \{ L \geq i \}$ and $i$ is the pivotal element, hence the first element of $S$ in $L$.
What is the probability of this event?
If $i$ is not in $\overline{S}$, then this probability is $0$, hence he obtains $0$.
If $i$ is in $\overline{S}$, then its probability is exactly $\overline{\omega}_i^S/\overline{\omega}^S$.
The key argument is that the probability of the first element of $S$ to appear to be $i$
conditionally on the fact that an element of $S$ is picked at the current stage is exactly equal to $\frac{\omega_i(S)}{\omega(S)}$.
Let $i\in S$.
For any $t \in \lbrace 1, \ldots, n \rbrace$, we set $L_{\leq t} = \lbrace L(u), u \leq t \rbrace$.
Formally, we have
{
\small
\begin{align*}
& \mathbb{P}_{\omega,\Sigma} \big(i \textrm{ pivot of } S \big)\\
& =\sum_{t=0}^{n-1} \sum_{\substack{T \subset N-S,\\ |T|=t}} \mathbb{P}_{\omega,\Sigma}\Big(L(t+1)=i, L_{\leq t}=T \Big),\\
&=\sum_{t=0}^{n-1} \sum_{\substack{T \subset N-S,\\ |T|=t}} \mathbb{P}_{\omega,\Sigma}\Big(L(t+1) \in S,\ L_{\leq t}=T \Big)
\mathbb{P}_{\omega,\Sigma}\Big(L(t+1)=i|L(t+1) \in S,\ L_{\leq t}=T \Big),\\
&=\sum_{t=0}^{n-1} \sum_{\substack{T \subset N-S,\\ |T|=t}} \mathbb{P}_{\omega,\Sigma}\Big(L(t+1) \in S,\ L_{\leq t}=T \Big)
\frac{\mathbb{P}_{\omega,\Sigma}\Big(\{L(t+1)=i\} \cap \{L(t+1) \in S\} \cap \{L_{\leq t}=T\} \Big)}{\mathbb{P}_{\omega,\Sigma}\Big(\{\{L(t+1) \in S\} \cap \{L_{\leq t}=T\} \Big)},\\
&=\sum_{t=0}^{n-1} \sum_{\substack{T \subset N-S,\\ |T|=t}} \mathbb{P}_{\omega,\Sigma}\Big(L(t+1) \in S,\ L_{\leq t}=T \Big)
\frac{\mathbb{P}_{\omega,\Sigma}\Big(L(t+1)=i| L_{\leq t}=T \Big)}{\mathbb{P}_{\omega,\Sigma}\Big(L(t+1) \in S  |L_{\leq t}=T \Big)},\\
&=\sum_{t=0}^{n-1} \sum_{\substack{T \subset N-S,\\ |T|=t}} \mathbb{P}_{\omega,\Sigma}\Big(L(t+1) \in S,\ L_{\leq t}=T \Big)\left(\frac{\frac{\overline{\omega}^{N-T}_i}{\overline{\omega}^{N-T}}}{\sum_{j\in S}\frac{\overline{\omega}^{N-T}_j}{\overline{\omega}^{N-T}}}\right).
\end{align*}
}
By assumption $T \subset N-S$, hence $N-T$ is a superset of $S$ and therefore $p(N-T)$ is greater or equal to $p(S)$.
We distinguish two different cases, 
\begin{itemize}
\item if $p(N-T)>p(S)$, then $P\Big(L(t) \in S,\ \forall u< t, \  L(u) \notin S \Big)=0,$ since any element of $S$ has still a too low priority.
\item if $p(N-T)=p(S)$, then for every $i\in \overline{S}$, $\overline{\omega}^{N-T}_i=\omega_i$ whereas for $i\in S-\overline{S}$, one has $\overline{\omega}^{N-T}_i=0$.
\end{itemize}
Hence,
\begin{align*}
\mathbb{P}_{\omega,\Sigma}\Big(S \subseteq \{L \geq i \} \Big) &=\sum_{t=0}^{n-1} \sum_{T \subset N-S, |T|=t} \mathbb{P}_{\omega,\Sigma}\Big(L(t+1) \in S,\ \{L(u),u\leq t\}=T \Big)\frac{\omega_i}{\sum_{j\in \overline{S}} \omega_i},\\
&=\frac{\overline{\omega}^S_i}{\overline{\omega}^S} \left(\sum_{t=0}^{n-1}  \sum_{T \subset N-S, |T|=t} \mathbb{P}_{\omega,\Sigma}\Big(L(t+1) \in S,\ \{L(u),u\leq t\}=T \Big)\right),\\
&=\frac{\overline{\omega}^S_i}{\overline{\omega}^S} \left(\sum_{t=0}^{n-1}  \mathbb{P}_{\omega,\Sigma}\Big(L(t+1) \in S,\ \ \forall u\leq t, \ L(u) \notin S\} \Big)\right),\\
&=\frac{\overline{\omega}^S_i}{\overline{\omega}^S}.\qedhere
\end{align*}
\end{itemize}
\end{proof}

\section{Proof of Proposition~\ref{propPsiOmegaiTiInT=PhiOmegavT}}
\label{appendProofOfPropositionRec}

We first establish a formula for the $(\omega, \Sigma)$-weighted Shapley value in terms of the marginal contributions.
\begin{lemma}
The $(\omega, \Sigma)$-weighted Shapley value can be defined
as follows
\[
\Phi_i^\omega (v) = \sum_{\substack{S \subseteq N\\ i \in S}} \gamma_{S,i}^{N,w} (v(S) - v(S \setminus \lbrace i \rbrace))
\]
where
\[
\gamma_{S,i}^{N,w} =
\left \lbrace
\begin{array}{cl}
\sum\limits_{\substack{T \subseteq N :\\ T \supseteq S,\, p(T) = p(S)}} (-1)^{t-s} \frac{\omega_i}{\overline{\omega}^{T}} & \textrm{ if } i \in \overline{S},\\
0 & \textrm{ if } i \in S \setminus \overline{S},
\end{array}
\right.
\]
for all $S \subseteq N$
with $S \not= \emptyset$.
\end{lemma}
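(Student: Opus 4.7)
The plan is to start from the unanimity game definition of the weighted Shapley value, namely
\[
\Phi^{\omega}_{i}(v) = \sum_{S \subseteq N :\, i \in \overline{S}} \frac{\omega_i}{\overline{\omega}^S} \lambda_S(v),
\]
substitute the Möbius inversion $\lambda_S(v) = \sum_{T \subseteq S}(-1)^{s-t} v(T)$, and swap the two summations to express $\Phi^\omega_i(v)$ as a linear combination of the $v(T)$'s. This yields a coefficient
\[
c_T := \sum_{\substack{S \supseteq T\\ i \in \overline{S}}} (-1)^{s-t}\,\frac{\omega_i}{\overline{\omega}^S}
\]
in front of each $v(T)$. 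The goal is then to repackage $\sum_T c_T\, v(T)$ as $\sum_{S\ni i} \gamma_{S,i}^{N,w}(v(S)-v(S\setminus\{i\}))$, which boils down to verifying the two identities
\[
c_S = \gamma_{S,i}^{N,w}\ \text{ when } i \in S, \qquad c_T = -\gamma_{T\cup\{i\},i}^{N,w}\ \text{ when } i \notin T.
\]

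First I would handle the case $i \in T$. Since $i\in\overline{S}$ forces $p(S)=p(i)$, and since $T\subseteq S$ implies $p(S)\geq p(T)\geq p(i)$, two subcases arise. If $i\notin\overline{T}$, i.e.\ some $j\in T$ has $p(j)>p(i)$, then every $S\supseteq T$ satisfies $p(S)>p(i)$, so the sum defining $c_T$ is empty and $c_T=0$; on the other side $\gamma_{T,i}^{N,w}=0$ by definition. If $i\in\overline{T}$, then $p(T)=p(i)$ and the condition $i\in\overline{S}$ is equivalent to $p(S)=p(T)$, so the sum defining $c_T$ coincides termwise with the sum defining $\gamma_{T,i}^{N,w}$.

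Next I would treat $i \notin T$. The constraint $i\in\overline{S}$ implies $i\in S$, so one may restrict the sum over $S\supseteq T\cup\{i\}$. Writing $s-t=(s-|T\cup\{i\}|)+1$, one factor $-1$ comes out and leaves
\[
c_T \;=\; -\sum_{\substack{S \supseteq T\cup\{i\}\\ i \in \overline{S}}} (-1)^{s-|T\cup\{i\}|}\,\frac{\omega_i}{\overline{\omega}^S}.
\]
Applying the previous case to $T\cup\{i\}$ (which does contain $i$) then identifies the right-hand side with $-\gamma_{T\cup\{i\},i}^{N,w}$. Reassembling the pair $v(S)$ and $-v(S\setminus\{i\})$ gives exactly the desired marginal-contribution decomposition.

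There is no serious obstacle here: the argument is bookkeeping on Möbius inversion plus a careful tracking of which sets satisfy $i\in\overline{S}$. The only delicate point is to make sure that the priority condition $i\in\overline{S}$ is handled symmetrically in both the $i\in T$ and $i\notin T$ cases so that vanishing of $\gamma_{S,i}^{N,w}$ for $i\in S\setminus\overline{S}$ matches the vanishing of the corresponding sum, and so that the reindexation $T\mapsto T\cup\{i\}$ does not introduce spurious non-zero terms.
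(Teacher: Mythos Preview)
Your proposal is correct and follows essentially the same route as the paper. The only cosmetic difference is the order of the two bookkeeping steps: the paper first pairs $v(T)$ with $-v(T\setminus\{i\})$ inside the inner sum (rewriting $\sum_{T\subseteq S}(-1)^{s-t}v(T)=\sum_{T\subseteq S,\,i\in T}(-1)^{s-t}(v(T)-v(T\setminus\{i\}))$) and then swaps the summations, whereas you swap first and pair afterwards; the identification of the coefficient via the priority condition $i\in\overline{S}\Leftrightarrow p(S)=p(i)$ is identical in both.
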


\begin{proof}
By definition,
we have
\[
\Phi^{\omega}_{i}(v) =
\sum_{S \subseteq N :\, i \in S } \frac{\overline{\omega}^S_i}{\overline{\omega}^S} \lambda_S(v)=
\sum_{S \subseteq N :\, i \in S }  \sum_{T \subseteq S} (-1)^{s-t} v(T) \frac{\overline{\omega}^S_i}{\overline{\omega}^S}.
\]
We get
\begin{eqnarray}
\Phi_i^\omega (v) & = &
\sum_{S \subseteq N :\, i \in S } \left\lbrack \sum_{T \subseteq S :\, i \in T} (-1)^{s-t} (v(T) - v(T \setminus \lbrace i \rbrace)) \right \rbrack
\frac{\overline{\omega}^S_i}{\overline{\omega}^S} \nonumber\\
& = & \sum_{T \subseteq N :\, i \in T } \left\lbrack \sum_{S \subseteq N :\, S \supseteq T} (-1)^{s-t}
\frac{\overline{\omega}^S_i}{\overline{\omega}^S} \right \rbrack
(v(T) - v(T \setminus \lbrace i \rbrace))
\label{eqSum_TSum_S(-1)^(s-t)(v(T)-v(T-i)}
\end{eqnarray}
For any pair $S,T$ occurring in (\ref{eqSum_TSum_S(-1)^(s-t)(v(T)-v(T-i)})
we have $i \in T \subseteq S$
and thus $p(i) \leq p(T) \leq p(S)$.
As $\overline{\omega}^S_i = \omega_i$
(resp.  $\overline{\omega}^S_i = 0$)
for all $i \in S$ with
$p(i)=p(S)$
(resp. $p(i) \not =p(S)$),
we get the result.
\end{proof}

Let $\Phi^\omega(v^T)$ be the $(\omega, \Sigma)$-weighted Shapley value of the subgame $v^T$.
We have
\begin{equation}
\Phi_i^\omega (v^T) = \sum_{\substack{S \subseteq T\\ i \in S}} \gamma_{S,i}^{T,\omega} (v(S) - v(S \setminus \lbrace i \rbrace)).
\end{equation}

Define the vector $\Psi^{\omega} = (\Psi^{\omega}_{iT})_{i \in T,T\in \mathcal{P}(N)}$ recursively by
\[
\Psi^{\omega}_{iT}=
\frac{\overline{\omega}_i^T}{\overline{\omega}^T} (v(T)-v(T \setminus \{i\}))
+\sum_{j \in T \setminus \lbrace i \rbrace} \frac{\overline{\omega}_j^T}{\overline{\omega}^T} \Psi^{\omega}_{iT\setminus \{j\}},
\]
for all $i \in T$, $T \in \mathcal{P}(N)$
and setting $\Psi^{\omega}_{i\emptyset} = 0$ for all $i \in N$.

We recall Proposition~\ref{propPsiOmegaiTiInT=PhiOmegavT}.
\VectorAndShapleyValue*

In order to prove Proposition~\ref{propPsiOmegaiTiInT=PhiOmegavT},
we first establish the following lemma.

\begin{lemma}
\label{lemSumR(OmegaT-OmegarR)GammaSR=OmegaTGammaST}
For all $S, T \in \mathcal{P}(N)$ with $\emptyset \not= S \subset T$,
and for all $i \in \overline{S}$
\[
\sum_{j \in T \setminus S} \frac{\overline{\omega}_j^T}{\overline{\omega}^T} \gamma_{S,i}^{T \setminus \lbrace j \rbrace, \omega} =
\gamma_{S,i}^{T,\omega}.
\]
\end{lemma}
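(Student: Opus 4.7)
My plan is to expand both sides using the explicit formula
\[
\gamma_{S,i}^{T',\omega} = \sum_{\substack{R:\, S\subseteq R\subseteq T'\\ p(R)=p(S)}}(-1)^{r-s}\frac{\omega_i}{\overline{\omega}^R}
\]
(valid for $i\in\overline{S}$), swap the order of summation on the left-hand side, and identify the resulting expression with $\gamma_{S,i}^{T,\omega}$. A preliminary simplification is that $\overline{\omega}_j^T\neq 0$ only when $p(j)=p(T)$, so the outer sum on the LHS actually runs over $j\in T\setminus S$ with $p(j)=p(T)$. After swapping, the outer variable becomes $R$ with $S\subseteq R\subseteq T$ and $p(R)=p(S)$, and the inner sum becomes
\[
\sum_{\substack{j\in T\setminus R\\ p(j)=p(T)}}\frac{\omega_j}{\overline{\omega}^T},
\]
using that $S\subseteq R$ makes the exclusion $j\notin S$ redundant. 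Since $S\subseteq T$ forces $p(S)\leq p(T)$, the computation then splits into two cases.

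In the case $p(T)>p(S)$, every admissible $R$ has $p(R)=p(S)<p(T)$ and therefore contains no element of priority $p(T)$. Hence the condition $j\notin R$ is automatic for the $j$'s of priority $p(T)$, and the inner sum collapses to
\[
\sum_{\substack{j\in T\\ p(j)=p(T)}}\frac{\omega_j}{\overline{\omega}^T}=1,
\]
so the swapped LHS is precisely $\gamma_{S,i}^{T,\omega}$.

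In the case $p(T)=p(S)$, the condition $p(R)=p(S)$ is automatic for every $R$ with $S\subseteq R\subseteq T$, because $R\supseteq S$ already guarantees $p(R)\geq p(S)$ while $R\subseteq T$ gives $p(R)\leq p(T)=p(S)$. The inner sum then evaluates to $(\overline{\omega}^T-\overline{\omega}^R)/\overline{\omega}^T$, so the LHS decomposes as
\[
\sum_{\substack{R:\, S\subseteq R\subseteq T}}(-1)^{r-s}\frac{\omega_i}{\overline{\omega}^R}-\frac{\omega_i}{\overline{\omega}^T}\sum_{\substack{R:\, S\subseteq R\subseteq T}}(-1)^{r-s}.
\]
The first term is $\gamma_{S,i}^{T,\omega}$, and the correction term vanishes by the standard identity $\sum_{A\subseteq T\setminus S}(-1)^{|A|}=0$, which applies because $S\subsetneq T$.

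I expect the main obstacle to be purely bookkeeping: checking carefully that the set $\{j\in T\setminus S:\, p(j)=p(T),\, j\notin R\}$ coincides with $\{j\in T\setminus R:\, p(j)=p(T)\}$, that in the case $p(T)=p(S)$ the condition $p(R)=p(S)$ is indeed automatic for every intermediate $R$, and conversely that in the case $p(T)>p(S)$ this condition already excludes any $j$ of priority $p(T)$ from $R$. Once these priority identifications are handled, both cases reduce to elementary manipulations.
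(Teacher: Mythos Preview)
Your proof is correct and follows essentially the same approach as the paper: expand $\gamma_{S,i}^{T\setminus\{j\},\omega}$ via the explicit formula, swap the order of summation so that the outer variable is $R$, evaluate the inner sum over $j$ as $(\overline{\omega}^T-\overline{\omega}^R)/\overline{\omega}^T$ (respectively $1$) in the cases $p(T)=p(S)$ and $p(T)>p(S)$, and finish with the binomial identity $\sum_{A\subseteq T\setminus S}(-1)^{|A|}=0$. The only cosmetic difference is that the paper keeps the strict range $R\subset T$ throughout and reinserts the $R=T$ term explicitly in the equal-priority case, whereas you include $R=T$ from the outset and rely on the inner sum vanishing there; both bookkeeping choices lead to the same computation.
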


\begin{proof}
Let us consider $S \subset T \subseteq N$ and $i \in \overline{S}$.
We have 
\begin{eqnarray}
\label{eqSumRr=t-1(OmegaT-OmegaR)GammaSR}
\sum_{j \in T \setminus S} \overline{\omega}_j^T \gamma_{S,i}^{T \setminus \lbrace j \rbrace, \omega}
& = &
\sum_{j \in T \setminus S} \overline{\omega}_j^T
\left \lbrack
\sum\limits_{\substack{R \subseteq T \setminus \lbrace j \rbrace :\\ R \supseteq S,\, p(R) = p(S)}} (-1)^{r-s} \frac{\omega_i}{\overline{\omega}^{R}} 
\right \rbrack
\nonumber\\
& = &
\sum_{\substack{R \subset T :\, R \supseteq S,\\ p(R) = p(S)}}
(-1)^{r-s} \frac{\omega_i}{\overline{\omega}^{R}}
\sum_{j \in T \setminus R}\overline{\omega}_j^T  \nonumber\\
& = &
\sum_{\substack{R \subset T :\, R \supseteq S,\\ p(R) = p(S)}}
(-1)^{r-s} \frac{\omega_i}{\overline{\omega}^{R}}
(\overline{\omega}^T - \overline{\omega}^T_{R}).
\end{eqnarray}
If $p(T) \not= p(S)$,
then any $R \subset T$ with $p(R) = p(S)$ satisfies $\overline{\omega}^T_{R} = 0$
and (\ref{eqSumRr=t-1(OmegaT-OmegaR)GammaSR}) is equivalent to
\[
\sum_{j \in T \setminus S} \overline{\omega}_j^T \gamma_{S,i}^{T \setminus \lbrace j \rbrace, \omega}=
\overline{\omega}^T 
\sum_{\substack{R \subseteq T :\, R \supseteq S,\\ p(R) = p(S)}}
(-1)^{r-s} \frac{\omega_i}{\overline{\omega}^{R}}=
\overline{\omega}^T \gamma_{S,i}^{T,\omega}.
\]
If $p(T) = p(S)$,
then any $R \subseteq T$ with $R \supseteq S$ satisfies $p(R) = p(S)$ and $\overline{\omega}^{R} = \overline{\omega}^T_{R}$
and (\ref{eqSumRr=t-1(OmegaT-OmegaR)GammaSR}) is equivalent to
\begin{eqnarray}
\label{eqSumRr=t-1(OmegaT-OmegaR)GammaSR-2}
\sum_{j \in T \setminus S} \overline{\omega}_j^T \gamma_{S,i}^{T \setminus \lbrace j \rbrace, \omega}
& = &
\overline{\omega}^T 
\sum_{\substack{R \subseteq T :\, R \supseteq S,\\ p(R) = p(S)}}
(-1)^{r-s} \frac{\omega_i}{\overline{\omega}^{R}}
- (-1)^{t-s}\omega_i
- \sum_{R \subset T :\, R \supseteq S} (-1)^{r-s} \omega_i \nonumber\\
& = & 
\overline{\omega}^T \gamma_{S,i}^{T,\omega}
- \omega_i \left((-1)^{t-s}
+ \sum_{k=0}^{t-s-1} C_{t-s}^{k}(-1)^k \right).
\end{eqnarray}
Finally,
as $\sum_{k=0}^{t-s-1} C_{t-s}^{k}(-1)^k = (1-1)^{t-s} - (-1)^{t-s}$,
(\ref{eqSumRr=t-1(OmegaT-OmegaR)GammaSR-2}) implies the result.
\end{proof}

\begin{proof}[Proof of Proposition~\ref{propPsiOmegaiTiInT=PhiOmegavT}]
If $t=1$, then the result is satisfied.
Let us consider $T \in \mathcal{P}(N) $ with $t>1$ and $i \in T$.
We assume $\Psi_S^\omega = \Phi^\omega(v^S)$ for all $S \in \mathcal{P}(N)$ with $s = t-1$.
We get
\begin{eqnarray}
\label{eqPsi_Tw}
\Psi_{iT}^\omega & = & \frac{\overline{\omega}_i^T}{\overline{\omega}^T} (v(T) - v(T \setminus \lbrace i \rbrace))
+ \sum_{j \in T \setminus \lbrace i \rbrace} \frac{\overline{\omega}_j^T}{\overline{\omega}^T} \Phi_i^\omega(v^{T \setminus \lbrace j \rbrace}) \nonumber\\
& = & \frac{\overline{\omega}_i^T}{\overline{\omega}^T} (v(T) - v(T \setminus \lbrace i \rbrace))
+ \sum_{j \in T \setminus \lbrace i \rbrace} \frac{\overline{\omega}_j^T}{\overline{\omega}^T}
\sum_{\substack{S \subseteq T \setminus \lbrace j \rbrace\\ i \in S}}
\gamma_{S,i}^{T \setminus \lbrace j \rbrace, \omega} (v(S) - v(S \setminus \lbrace i \rbrace)) \nonumber \\
& = & \frac{\overline{\omega}_i^T}{\overline{\omega}^T} (v(T) - v(T \setminus \lbrace i \rbrace))
+ \sum_{\substack{S \subset T\\ i \in S}}
\sum_{j \in T \setminus S} \frac{\overline{\omega}_j^T}{\overline{\omega}^T}
\gamma_{S,i}^{T \setminus \lbrace j \rbrace, \omega} (v(S) - v(S \setminus \lbrace i \rbrace)).
\end{eqnarray}
Then,
(\ref{eqPsi_Tw}) and Lemma~\ref{lemSumR(OmegaT-OmegarR)GammaSR=OmegaTGammaST} imply
\begin{eqnarray}
\label{eqPsi_Tw-2}
\Psi_{iT}^\omega & = & \frac{\overline{\omega}_i^T}{\overline{\omega}^T} (v(T) - v(T \setminus \lbrace i \rbrace))
+ \sum_{\substack{S \subset T\\ i \in S}} \gamma_{S,i}^{T,\omega} (v(S) - v(S \setminus \lbrace i \rbrace)).
\end{eqnarray}
As $\gamma_{T,i}^{T,\omega} = \frac{\overline{\omega}_i^T}{\overline{\omega}^T}$,
(\ref{eqPsi_Tw-2}) implies the result.
\end{proof}

\section{Proof of Counter-example \texorpdfstring{$3$}{3}-pan (Example~\ref{Example3-pan-NonValidFor4-path})}
\label{Append-Counter-Example-3-pan-Non-Valid-For-4-path}

We show that Example~\ref{Example3-pan-NonValidFor4-path} described page~\pageref{Example3-pan-NonValidFor4-path}
provides a counter-example if
\[
p(2) \geq p(4),\, p(4) \geq p(1) \text{ and } p(4) \geq p(3).
\]

\noindent
\underline{The game $(N,v)$ is not convex:}

\noindent
The marginal contribution of player $1$ is strictly smaller to $\{3,4\}$ than to $\{4\}$.
\begin{align*}
v(\{1,4\})-v(\{4\})& =X-0=X,\\
v(\{1,3,4\})-v(\{3,4\}) & = (X+Y-1)-Y =X-1.
\end{align*}

\noindent
\underline{The game $(N,v)$ is $(\omega,\Sigma)$-convex:}

\noindent
We need to check the inequalities for any pair of non-empty sets $S$ and $T$ such that $S \subset T$.
\begin{itemize}[leftmargin=*]
\item
Let us first notice that the function $v$ is monotonic since $(Y-1)\geq 0$, $(X-1)\geq 0$ and $\Theta \geq Z\geq Y$.
Hence if $S$ has value $0$, we have
\begin{equation}
\label{eqSumiInSOmegaiT(v(S)-v(S-i))=0-ter}
\sum_{i\in S} \omega_i^T(v(S)-v(S \setminus{i}))=0-0=0.
\end{equation}
Since the function $v$ is monotonic, each marginal contribution has to be positive,
hence the $(\omega, \Sigma)$-convexity inequality is satisfied for any $T$ containing $S$.
(\ref{eqSumiInSOmegaiT(v(S)-v(S-i))=0-ter}) is satisfied if $S$ is a singleton,
and if $S$ is equal to  $\{1,2\}$, $\{1,3\}$, or $\{2,3\}$.
\item
If $S$ is equal to $\{1,4\}$, then
\[
\sum_{i\in S} \omega_i^T(v(S)-v(S \setminus{i})) = \omega_1^T X + \omega_4^T X=(\omega_1^T + \omega_4^T)X.
\]
There are $3$ possible cases for $T$: $\{1,2,4\}$, $\{1,3,4\}$ and $N$.
\begin{itemize}[leftmargin=*]
\item
Assume $T=\{1,2,4\}$,
\begin{align*}
\sum_{i\in S} \omega_i^T(v(T)-v(T \setminus{i})) & = \omega_1^T (X+\alpha_p(Y-1)- \alpha_p Y) + \omega_4^T (X+\alpha_p(Y-1)-0),\\
& = (\omega_1^T + \omega_4^T)X + \alpha_p \omega_4^T Y - \alpha_p (\omega_1^T + \omega_4^T).
\end{align*}

If $\alpha_p = 1$,
then by definition of $Y$ and as $p(4) \geq p(1)$,
we have $\omega_4^T Y - (\omega_1^T + \omega_4^T) \geq 0$ and the $(\omega, \Sigma)$-convexity inequality is satisfied.
If $\alpha_p = 0$, the inequality is satisfied and tight.

\item
Assume $T=\{1,3,4\}$,
\begin{align*}
\sum_{i\in S} \omega_i^T(v(T)-v(T \setminus{i})) & = \omega_1^T (X+Y-1- Y) + \omega_4^T (X+Y-1-0),\\
& = (\omega_1^T + \omega_4^T)X + \omega_4^T Y - (\omega_1^T + \omega_4^T).
\end{align*}
Then we can conclude as in the previous case with $\alpha_p = 1$.

\item
Let us consider $T=N$, then
\begin{align*}
\sum_{i\in S} \omega_i^T (v(T) - v(T \setminus{i})) & =  \omega_1^T (\Theta-Z) + \omega_4^T (\Theta- \alpha_p (X - 1)),\\
& = \omega_1^T (X-1) + \omega_4^T (Z + (1 - \alpha_p)(X-1)).
\end{align*}
If $\alpha_p = 1$,
we get
\begin{align*}
\sum_{i\in S} \omega_i^T (v(T) - v(T \setminus{i}))
& = \omega_1^T (X-1) + \omega_4^T \left(X+2Y \right),\\
& =  (\omega_1^T + \omega_4^T) X + \omega_4^T Y + (\omega_4^T Y-\omega_1^T).
\end{align*}
As $p(4) = p(N)$,
we have $\omega_4^T Y-\omega_1^T =  \omega_4 + \omega_1 - \omega_1^T \geq 0$
and the inequality is satisfied.
If $\alpha_p = 0$, we have
\begin{align*}
\sum_{i\in S} \omega_i^T (v(T) - v(T \setminus{i}))
& \geq  \omega_1^T (X-1) + \omega_4^T Z,\\
& =  \omega_1^T (X-1) + \omega_4^T \left(X+Y+1+\frac{\omega_1}{\omega_2+\omega_3+\omega_4}X \right),\\
& =  (\omega_1^T + \omega_4^T) X + \omega_4^T + (\omega_4^T Y-\omega_1^T) + \frac{\omega_1\omega_4^T}{\omega_2+\omega_3+\omega_4}X.
\end{align*}
If $p(4) < p(N)$ then by assumption we also have $p(1) < p(N)$ and the  $(\omega, \Sigma)$-convexity inequality is trivially satisfied.
Otherwise, we have $\omega_4^T Y-\omega_1^T =  \omega_4 + \omega_1 - \omega_1^T \geq 0$
and the inequality is also satisfied.

\end{itemize}

\item If $S$ is equal to $\{2,4\}$, then
\[
\sum_{i\in S} \omega_i^T (v(S)-v(S \setminus{i})) = \alpha_p (\omega_2^T + \omega_4^T)Y.
\]
There are $3$ possible cases for $T$: $\{1,2,4\}$, $\{2,3,4\}$ and $N$.
\begin{itemize}[leftmargin=*]
\item
Let us assume that $T=\{1,2,4\}$,
\begin{align*}
\sum_{i\in S} \omega_i^T (v(T)-v(T \setminus{i}))
& = \omega_2^T (X+ \alpha_p (Y-1) -X) + \omega_4^T (X+ \alpha_p ( Y-1) -0),\\
& = \alpha_p (\omega_2^T + \omega_4^T)Y + \omega_4^T X - \alpha_p (\omega_2^T + \omega_4^T).
\end{align*}
If $\alpha_p = 0$,
then the inequality is obviously satisfied.
If $\alpha_p = 1$,
then $p(2) = p(4)$
and we have $\omega_4 X - \omega_2 - \omega_4 \geq 0 $ by definition of $X$. 

\item
Let us assume $T=\{2,3,4\}$,
\begin{align*}
\sum_{i\in S} \omega_i^T (v(T)-v(T \setminus{i}))
& = \omega_2^T (Z-Y) + \omega_4^T (Z-0).
\end{align*}
If $\alpha_p = 0$,
then the inequality is obviously satisfied.
If $\alpha_p = 1$,
then $Z = X+2Y$ and the inequality is satisfied.
\item
Finally, let $T=N$, then
\begin{eqnarray}
\sum_{i\in S} \omega_i^T (v(T)-v(T \setminus{i}))
& = & \omega_2^T (\Theta-X-Y+1) + \omega_4^T (\Theta - \alpha_p(X-1)), \nonumber\\
& = & \omega_2^T (Z - Y) + \omega_4^T (Z + (1 - \alpha_p)(X-1)). \nonumber
\end{eqnarray}
If $\alpha_p = 0$,
then the inequality is obviously satisfied.
If $\alpha_p = 1$,
then $Z = X+2Y$ and the inequality is also satisfied.
\end{itemize}
\item If $S$ is equal to $\{3,4\}$, then
\[
\sum_{i\in S} \omega_i^T (v(S)-v(S \setminus{i})) = \omega_3^T Y + \omega_4^T Y = (\omega_3^T + \omega_4^T)Y.
\]
There are $3$ possible cases for $T$: $\{1,3,4\}$, $\{2,3,4\}$ and $N$. 
\begin{itemize}[leftmargin=*]
\item
Let us assume that $T=\{1,3,4\}$,
\begin{align*}
\sum_{i\in S} \omega_i^T (v(T)-v(T \setminus{i}))
& = \omega_3^T (X+Y-1-X) + \omega_4^T (X+Y-1- 0),\\
& = (\omega_3^T + \omega_4^T)Y + \omega_4^T X - (\omega_3^T + \omega_4^T),\\
& \geq (\omega_3^T + \omega_4^T)Y + \omega_3 - \omega_3^T.
\end{align*}
The last inequality is by definition of $X$ and as $p(4) = p(\lbrace 1,3,4 \rbrace)$.
Therefore the $(\omega, \Sigma)$-convexity inequality is satisfied.
\item
Let us now assume that $T=\{2,3,4\}$,
\begin{align*}
\sum_{i\in S} \omega_i^T (v(T)-v(T \setminus{i}))
& = \omega_3^T (Z- \alpha_p Y) + \omega_4^T (Z-0). \nonumber
\end{align*}
If $\alpha_p = 0$,
then the inequality is obviously satisfied.
If $\alpha_p = 1$,
then $Z = X+2Y$ and the inequality is also satisfied.
\item
Finally, let $T=N$, then
\begin{eqnarray*}
\sum_{i\in S} \omega_i^T (v(T)-v(T \setminus{i}))
& = & \omega_3^T (\Theta-X- \alpha_p(Y-1)) + \omega_4^T (\Theta- \alpha_p(X-1)),\\
& = & \omega_3^T (Z-1- \alpha_p(Y-1)) + \omega_4^T (Z + (1- \alpha_p)(X-1)).
\end{eqnarray*}
If $\alpha_p = 0$,
then the inequality is satisfied as $Z \geq Y +1$.
If $\alpha_p = 1$,
then $Z = X+2Y$ and the inequality is also satisfied.
\end{itemize}
\item
Let $S=\{1,2,3\}$ and $T=N$, then
\begin{eqnarray*}
\sum_{i\in S} \omega_i^T (v(S)-v(S \setminus{i}))
& = & \alpha_p (\omega_1^T + \omega_2^T + \omega_3^T) (X-1),
\end{eqnarray*}
and
\begin{eqnarray*}
\sum_{i\in S} \omega_i^T (v(T)-v(T \setminus{i}))
& = & \omega_1^T (\Theta-Z) + \omega_2^T (\Theta-(X+Y-1)) + \omega_3^T (\Theta-(X+ \alpha_p(Y-1))),\\
& = & \omega_1^T (X-1) + \omega_2^T (Z - Y) + \omega_3^T (Z - 1 - \alpha_p(Y-1))).
\end{eqnarray*}
If $\alpha_p = 0$,
then the inequality is satisfied as $Z \geq Y +1$.
If $\alpha_p = 1$,
then $Z = X+2Y$ and the inequality is also satisfied.
\item
Let $S=\{1,2,4\}$ and $T=N$, then
\begin{eqnarray*}
\sum_{i\in S} \omega_i^T (v(S)-v(S \setminus{i}))
& = & \omega_1^T (X- \alpha_p) + \alpha_p \omega_2^T (Y-1) + \omega_4^T (X+ \alpha_p (Y-1)),
\end{eqnarray*}
and
\begin{eqnarray*}
\sum_{i\in S} \omega_i^T (v(T)-v(T \setminus{i}))
& = &\omega_1^T (\Theta-Z) + \omega_2^T (\Theta-(X+Y-1)) + \omega_4^T (\Theta- \alpha_p (X-1)),\\
& = & \omega_1^T (X-1) + \omega_2^T (Z-Y) + \omega_4^T (Z + (1 - \alpha_p) (X-1)).
\end{eqnarray*}
If $\alpha_p = 0$,
then we can conclude as in the case $S = \lbrace  1, 4 \rbrace \subseteq T = N$.
If $\alpha_p = 1$,
then $Z = X+2Y$ and the inequality is also satisfied.
\item
Let $S=\{1,3,4\}$ and $T=N$
\begin{eqnarray*}
\sum_{i\in S} \omega_i^T (v(S)-v(S \setminus{i}))
& = &\omega_1^T (X+Y-1-Y) + \omega_3^T(X+Y-1-X) + \omega_4^T(X+Y-1-0),\\
& = & \omega_1^T (X-1) + \omega_3^T (Y-1) + \omega_4^T (X+Y-1).
\end{eqnarray*}
and
\begin{eqnarray*}
\sum_{i\in S} \omega_i^T (v(T)-v(T \setminus{i}))
& = & \omega_1^T (\Theta-Z) + \omega_3^T(\Theta-X- \alpha_p (Y-1)) + \omega_4^T (\Theta- \alpha_p (X-1)),\\
& = & \omega_1^T (X-1) + \omega_3^T(Z - 1 - \alpha_p (Y-1)) + \omega_4^T (Z + (1 - \alpha_p) (X-1)).
\end{eqnarray*}
If $\alpha_p = 0$,
then the inequality is satisfied as $Z \geq Y$.
If $\alpha_p = 1$,
then $Z = X+2Y$ and the inequality is also satisfied.
\item
Finally, let $S=\{2,3,4\}$ and $T=N$,
\begin{eqnarray*}
\sum_{i\in S} \omega_i^T (v(S)-v(S \setminus{i}))
& = &\omega_2^T (Z-Y) + \omega_3^T (Z - \alpha_p Y) + \omega_4^T (Z-0),
\end{eqnarray*}
and
\begin{eqnarray*}
\sum_{i\in S} \omega_i^T (v(T)-v(T \setminus{i}))
& = &\omega_2^T (\Theta-(X+Y-1)) + \omega_3^T(\Theta-(X+ \alpha_p (Y-1)))\\
& & + \omega_4^T(\Theta- \alpha_p (X-1))\\
& = &\omega_2^T (Z-Y) + \omega_3^T (Z - \alpha_p Y)  + \omega_4^T (Z-0)\\
& & -\omega_3^T (1 - \alpha_p) + \omega_4^T (1 - \alpha_p) (X-1).
\end{eqnarray*}
If $\alpha_p = 1$, then the inequality is tight.
Let us assume $\alpha_p = 0$.
If $p(4) < p(N)$ then by assumption we also have $p(3) < p(N)$
and the inequality is trivially satisfied.
If $p(4) = p(N)$,
then we also have $p(3) = p(N)$ as $\alpha_p = 0$
and by definition of $X$ we get
$-\omega_3^T (1 - \alpha_p) + \omega_4^T (1 - \alpha_p) (X-1) = - \omega_3 + \omega_4 (X - 1) \geq 0$
and the inequality is satisfied.
\end{itemize}

\section{Proof of Counter-example \texorpdfstring{$4$}{4}-path with strict inequality}
\label{proof-Example4-path-strict}
We show that Example~\ref{Example4-path-strict} described page~\pageref{Example4-path-strict} provides a counter-example if
\[
p(2) =p(4) > \max (p(1),p(3)).
\]

\noindent
\underline{The game $(N,v)$ is not convex:}

\noindent
The marginal contribution of player $1$ is strictly smaller to $\{3,4\}$ than to $\{4\}$.
\begin{align*}
v(\{1,4\})-v(\{4\})& =1-0=1,\\
v(\{1,3,4\})-v(\{3,4\}) & = 1-1 = 0.
\end{align*}

\noindent
\underline{The game $(N,v)$ is $(\omega,\Sigma)$-convex:}

\noindent
We need to check the inequalities for any pair of non-empty sets $S$ and $T$ such that $S \subset T$.
\begin{itemize}[leftmargin=*]
\item
Let us first notice that the function $v$ is monotonic.
Hence if $S$ has value $0$, we have
\begin{equation}
\label{eqSumiInSOmegaiT(v(S)-v(S-i))=0-bis}
\sum_{i\in S} \omega_i^T(v(S)-v(S \setminus{i}))=0-0=0.
\end{equation}
Since the function $v$ is monotonic, each marginal contribution is positive
and the $(\omega, \Sigma)$-convexity inequality is satisfied for any $T$ containing $S$.
(\ref{eqSumiInSOmegaiT(v(S)-v(S-i))=0-bis}) is satisfied if $S$ is a singleton,
and if $S$ is equal to  $\{1,2\}$, $\{1,3\}$, $\{2,3\}$, $\{2,4\}$, or $\{1,2,3\}$.
\item
If $S$ is equal to $\{1,4\}$, then
\[
\sum_{i\in S} \omega_i^T(v(S)-v(S \setminus{i})) = \omega_1^T + \omega_4^T.
\]
There are $3$ possible cases for $T$: $\{1,2,4\}$, $\{1,3,4\}$ and $N$.
In any case, we have
\begin{align*}
\sum_{i\in S} \omega_i^T(v(T)-v(T \setminus{i})) = \omega_4^T (1- 0) + \omega_1^T (1-1)= \omega_4^T.
\end{align*}
As $p(4) > p(1)$,
we have $\omega_1^T = 0$ and the $(\omega, \Sigma)$-convexity inequality is satisfied.

\item If $S$ is equal to $\{3,4\}$, then
\[
\sum_{i\in S} \omega_i^T (v(S)-v(S \setminus{i})) = \omega_3^T + \omega_4^T.
\]
There are $3$ possible cases for $T$: $\{1,3,4\}$, $\{2,3,4\}$ and $N$.
In any case we have
\begin{align*}
\sum_{i\in S} \omega_i^T (v(T)-v(T \setminus{i}))  = \omega_4^T (1-0) + \omega_3^T (1-1) = \omega_4^T.
\end{align*}
As $p(4)>p(3)$, we have $\omega_3^T = 0$ and the inequality is satisfied.
\item
Let $S=\{1,2,4\}$ and $T=N$, then
\begin{eqnarray}
\sum_{i\in S} \omega_i^T (v(S)-v(S \setminus{i})) & = &\omega_1^T (1-1) + \omega_2^T (1-1) + \omega_4^T (1-0),\nonumber
\end{eqnarray}
and
\begin{eqnarray}
\sum_{i\in S} \omega_i^T (v(T)-v(T \setminus{i})) & = &\omega_1^T (1-1) + \omega_2^T (1-1) + \omega_4^T (1-0),\nonumber
\end{eqnarray}
and the inequality is satisfied.
\item
Let $S=\{1,3,4\}$ and $T=N$
\begin{eqnarray}
\sum_{i\in S} \omega_i^T (v(S)-v(S \setminus{i})) & = &\omega_1^T (1-1) + \omega_3^T(1-1) + \omega_4^T(1-0),\nonumber
\end{eqnarray}
and
\begin{eqnarray}
\sum_{i\in S} \omega_i^T (v(T)-v(T \setminus{i})) & = & \omega_1^T (1-1) + \omega_3^T(1-1) + \omega_4^T (1-0).\nonumber
\end{eqnarray}
Hence the inequality is satisfied. 
\item
Finally, let $S=\{2,3,4\}$ and $T=N$,
\begin{eqnarray}
\sum_{i\in S} \omega_i^T (v(S)-v(S \setminus{i})) & = &\omega_2^T (1-1) + \omega_3^T (1 - 0) + \omega_4^T (1-0),\nonumber\\
& = & \omega_3^T + \omega_4^T, \nonumber
\end{eqnarray}
and
\begin{eqnarray}
\sum_{i\in S} \omega_i^T (v(T)-v(T \setminus{i})) & = &\omega_2^T (1-1) + \omega_3^T(1-1) + \omega_4^T(1-0)\nonumber\\
&  = & \omega_4^T.\nonumber
\end{eqnarray}
As $p(4)>p(3)$, we have $\omega_3^T = 0$ and the inequality is satisfied.
\end{itemize}


\bibliographystyle{apalike}
\bibliography{biblio}

\end{document}